\newtheorem{result}{Result}
\crefname{theorem}{Thm.}{Thms.}
\crefname{corollary}{Cor.}{Cors.}
\crefname{bclaim}{Claim}{Claims}
\crefname{proposition}{Prop.}{Props.}
\crefname{section}{Sec.}{Secs.}
\crefname{appendix}{App.}{Apps.}
\crefname{axiom}{Axiom}{Axioms}
\titlespacing*{\section}{0pt}{5.5ex plus 1ex minus .2ex}{4.3ex plus .2ex}
\titlespacing{\subsection}{0pt}{5.5ex plus 1ex minus .2ex}{4.3ex plus .2ex}
\titlespacing{\subsubsection}{0pt}{5.5ex plus 1ex minus .2ex}{4.3ex plus .2ex}
\gdef\@fpheader{}
\tikzset{highlight/.style={rectangle,
fill=red!15,
rounded corners = 0.5 mm,
inner sep=1pt,
fit=#1}}
\tikzset{whitefill/.style={rectangle,
fill=white,
rounded corners = 0.5 mm,
inner sep=1pt,
fit=#1}}
\newcolumntype{C}{>{$\displaystyle}c<{$}}
\definecolor{tbcolor}{HTML}{E1EFF0} 
\definecolor{blizzardblue}{rgb}{0.67, 0.9, 0.93}
\definecolor{airforceblue}{rgb}{0.36, 0.54, 0.66}
\definecolor{brightmaroon}{rgb}{0.76, 0.13, 0.28}
\definecolor{brickred}{rgb}{0.8, 0.25, 0.33}
\definecolor{cadetblue}{rgb}{0.37, 0.62, 0.63}
\definecolor{cambridgeblue}{rgb}{0.64, 0.76, 0.68}
\definecolor{caribbeangreen}{rgb}{0.0, 0.8, 0.6}
\definecolor{munsellblue}{rgb}{0.0, 0.5, 0.69}
\definecolor{bondiblue}{rgb}{0.0, 0.58, 0.71}
\colorlet{labelcolor}{RedViolet}
\colorlet{shadingcolor}{NavyBlue!12}
\colorlet{shadingcolor2}{PineGreen!10}
\newcommand{\Ct}{\widetilde{C}}
\newcommand{\Ut}{\widetilde{U}}
\newcommand{\Vt}{\widetilde{V}}
\newcommand{\Qt}{\widetilde{Q}}
\newcommand{\Rt}{\widetilde{R}}
\newcommand{\zb}{{\scriptsize{\textsf{Z}}}}
\newcommand{\tb}{{\scriptsize{\textsf{T}}}}
\newcommand{\chib}{{\boldmath{\chi}}}
\title{\boldmath Bulk Reconstruction from Generalized Free Fields}
\author[a,b]{Tamra Nebabu}
\author[a,c]{Xiao-Liang Qi}
\affiliation[a]{Stanford Institute for Theoretical Physics, Stanford University, Stanford, California 94305, USA}
\affiliation[b]{Department of Applied Physics, Stanford University, Stanford, California 94305, USA}
\affiliation[c]{Department of Physics, Stanford University, Stanford, California 94305, USA}
\abstract{We propose a generalized protocol for constructing a dual free bulk theory from any boundary model of generalized free fields (GFFs). To construct the bulk operators, we employ a linear ansatz similar to the Hamilton-Kabat-Liftschytz-Lowe (HKLL) construction. However, unlike the HKLL construction, our protocol relies only on boundary data with no presupposed form for the bulk equations of motion, so our reconstructed bulk is fully emergent. For a (1+1)d bulk, imposing the bulk operator algebra as well as a causal structure is sufficient to determine the bulk operators and dynamics uniquely up to an unimportant local basis choice. We study the bulk construction for several two-sided SYK models with and without coupling between the two sides, and find good agreement with known results in the low-temperature conformal limit. In particular, we find bulk features consistent with the presence of a black hole horizon for the TFD state, and characterize the infalling fermion modes. We are also able to extract bulk quantities such as the curvature and bulk state correlators in terms of boundary quantities. In the presence of coupling between the two SYK models, we are able to observe evidence of the shockwave geometry and the traversable wormhole geometry using the two-sided mutual information between the reconstructed bulk operators. Our results show evidence that features of the geometric bulk can survive away from the low temperature conformal limit. Furthermore, the generality of the protocol allows it to be applied to other boundary theories with no canonical holographic bulk.
}
\begin{document}

\maketitle

\section{Introduction and Background}
\subsection{Motivation}

Holographic duality~\cite{maldacena1999large,witten1998anti,gubser1998gauge} is a major forefront in modern theoretical physics that makes contact with many different subdisciplines of physics including quantum gravity, quantum information, and condensed matter. This is because holographic duality offers a means to address two outstanding challenges in theoretical physics; namely, 1) constructing a quantum gravity theory that is ultraviolet (UV) complete, and 2) developing new tools to understand strongly-coupled quantum systems. While these questions seem unrelated at face value, they are connected by the observed duality between a low-energy, semiclassical supergravity theory and a large-$N$ super Yang-Mills gauge theory on a flat background. On the one hand, the duality allows the physics of the strongly-coupled (non-gravitational) gauge theory to be solved more easily in the weakly-coupled gravitational theory variables. On the other hand, taking the duality seriously, one would hope that the quantum gauge theory can provide a non-perturbative UV-completion of supergravity. Holographic duality is most well-understood for gravitational theories with an asymptotically anti-de Sitter (aAdS) geometry, for which the dual theory is a conformal field theory (CFT) in one lower dimension. Accordingly, this duality is also known as the AdS/CFT correspondence. In this case, the isometry group of AdS space is mapped to the conformal symmetry group of the CFT. States in the CFT are mapped to states of normalizable modes in AdS; non-normalizable modes in AdS are mapped to source fields in the CFT~\cite{balasubramanian1999bulk,witten1998anti}

Holographic duality is a manifestation of the more general 
\textit{holographic principle}, which presumes that the information content of a gravitational theory is fully encoded in a theory with one lower dimension~\cite{hooft1993dimensional,susskind1995world}. One piece of evidence for the generalized holographic principle is that the entanglement entropy of a spatial region is bounded by its area---a fact first proven in the context of black holes that is expected to hold more generally~\cite{bekenstein1994entropy,fischler1998holography,bousso2002holographic}. This raises the question of how generic holographic models are. Can such a correspondence exist beyond the AdS/CFT paradigm? Said another way, what are the minimal ingredients necessary for a QFT to have a gravitational bulk dual description? Various important features of known holographic theories have been discussed in the literature, including conformal symmetry, the presence of a gap in scaling dimension, the Ryu-Takayanagi formula of entanglement entropy~\cite{ryu2006holographic}, etc. However, it remains to be seen whether these are all essential for a holographic theory, or if it is possible to extend holographic duality to more general theories.

A related question is how to explicitly construct the holographic dictionary, \textit{i.e.} the mapping between bulk operators and boundary operators. A more explicit construction of the dictionary would help us understand how general holographic theories are. Different protocols for bulk operator reconstruction have been put forth and discussed in literature. For instance, in the Hamilton-Kabat-Liftschytz-Lowe (HKLL) approach~\cite{hamilton2006holographic,hamilton2006local,hamilton2007local}, a quasilocal bulk operator in AdS space is reconstructed from spacelike-separated boundary operators, plus higher order corrections in the gravitational coupling constant $G_N$. The HKLL approach is an explicit realization of causal wedge reconstruction, in which a boundary subregion is used to reconstruct bulk operators within its causal wedge. The boundary representation of a reconstructed bulk operator is nonlocal, with support spread over the spacelike-separated boundary region. The original HKLL construction relied on the bulk equation of motion, although more general constructions based on the boundary modular Hamiltonian have been discussed~\cite{das2018modular,roy2018bulk}.  
Meanwhile, alternate approaches have been proposed which perform local reconstruction of bulk operators in a bigger bulk subregion known as the entanglement wedge, and which apply to more general bulk geometries~\cite{czech2012gravity,wall2014maximin,headrick2014causality,dong2016reconstruction,faulkner2017bulk,penington2020entanglement,almheiri2019entropy}. Compared to entanglement wedge reconstruction, the HKLL approach applies to a smaller bulk subregion, but has the advantage of being more explicit and depending only on simple operators in a boundary spacetime region.

In this paper, we develop a new method for bulk reconstruction which generalizes the HKLL approach and applies to more general boundary theories. Crucially, our method does \textit{not} require prior knowledge of the semiclassical bulk geometry and equations of motion; rather, our reconstruction protocol proceeds entirely from boundary data. In addition, our approach does not require the boundary theory to have conformal symmetry or even approximate conformal symmetry, as conventionally required of holographic boundary models. Instead, the key ingredient in our construction is that the boundary theory describes interacting generalized free fields (GFFs), {\it i.e.} bosonic or fermionic fields for which the correlation functions satisfy Wick's theorem. By introducing a boundary time discretization, we are able to provide a general and explicit algorithm to map a boundary GFF to a canonically-free bulk field (boson or fermion). With the only assumption being the causal structure, the bulk theory is uniquely determined by the boundary two point functions up to a local basis choice. In this paper, we will focus our attention on the correspondence between a $(0+1)$d boundary theory and a $(1+1)$d bulk theory, but the ideas should also apply to reconstruction in higher dimensions as well. 

Using the large-$q$ Sachdev-Ye-Kitaev (SYK) model and variations thereof as a test bench, we show how the bulk dynamics and geometric features emerge from the boundary model. We realize our construction explicitly by performing a discretization of the boundary dynamics, and obtain a discretized bulk. For a generic boundary thermal state, our construction predicts the existence of a black hole horizon in the dual description, in the sense of a reflectionless semi-infinite "bath" for bulk particles. We obtain an explicit formula of the horizon modes in term of boundary operators. The infalling modes have universal correlation functions which are independent of the details of the boundary theory, relying only on generalized free fields and the Kubo-Martin-Schwinger (KMS) condition~\cite{kubo1957statistical,martin1959theory}. We also discuss how to numerically study a continuous limit of the bulk fermion dynamics and geometry, and verify that and AdS bulk geometry indeed emerges from our construction. Our approach also applies to the thermofield double state, and enables us to study entanglement between bulk fermions that are dual to two boundary systems. We find that the mutual information between the two bulk regions is consistent with an eternal black hole geometry with the two exterior regions sharing a bifurcation horizon. In addition to the thermofield double state, we also studied more generic SYK models with a bilinear coupling. When the bilinear coupling is turned on only at one time instant, we obtain a shockwave geometry~\cite{shenker2014black} where the correlation between the two coupled systems is suppressed. Our approach allows us to study such changes in the correlation patterns by studying the mutual information between bulk regions. We also studied the ground state of a coupled SYK model which corresponds to an eternal traversable wormhole~\cite{maldacena2018eternal}. Our approach correctly predicts qualitative changes in the dual geometry, such as the existence of an end-of-the-word brane and the absence of a horizon. It should be emphasized that the known duality for the black hole, shockwave and eternal traversable wormhole geometries only apply to low energy limit of the SYK model, while our construction holds for general temperature and energy scales, which enables the bulk dynamics and geometry to be defined for such systems.

The remainder of the paper is organized as follows: In Sec. \ref{sec:background} we provide an overview of holographic duality and HKLL reconstruction. In Sec. \ref{sec: framework} we discuss our construction procedure starting from generalized free fields. We will show how the bulk dynamics (in the form of a quantum circuit) and bulk quantum state are constructed explicitly by an orthogonalization procedure. In Sec. \ref{sec: syk} and \ref{sec: coupled_syk} we present numeric and analytic results for the example of SYK model and coupled SYK model. Finally in Sec. \ref{sec: discussion}, we summarize our conclusions and discuss several open questions. 

\section{Overview of HKLL Reconstruction}\label{sec:background}

In this section, we review key facts in holographic duality and HKLL reconstruction that will be useful in subsequent discussions. 

If holographic duality is an exact correspondence between the bulk and boundary theories, then their Hilbert spaces should be isomorphic,
\begin{align}
    \mathcal{H}_{\rm boundary}\leftrightarrow \mathcal{H}_{\rm bulk}
\end{align}
and there should be a unitary map between their Hilbert spaces such that each operator of the boundary theory is uniquely mapped to an operator in the bulk theory. This mapping would ensure that correlation functions on the two sides match, as required for a duality. Such an exact correspondence would also enable one to define non-perturbative states in quantum gravity using the boundary theory variables.

It is a nontrivial task to define the full bulk quantum mechanical state space, which includes non-perturbative states with no smooth geometry. However, in the perturbative quantum gravity regime when $G_N\to 0$, the bulk is described by a semiclassical geometry with boundary, and the bulk and boundary Hilbert spaces can be approximately identified. In this limit, the bulk theory is described by a QFT on a curved spacetime with asymptotically AdS geometry. However, the correspondence between the bulk and boundary Hilbert spaces is not perfectly straightforward, because the bulk QFT is local while the boundary theory is defined in one lower dimension, so the bulk-boundary mapping must be nonlocal. The equation of motion for the bulk fields generally admits both normalizable and non-normalizable solutions. The normalizable modes correpond to boundary quantum fields while the non-normalizable modes correspond to source terms in the boundary theory\cite{witten1998anti}. The duality between the bulk and boundary Hilbert spaces is defined perturbatively, with an identification between the low-energy subspace of a small number of normalizable bulk matter excitations, and the low-energy subspace of the boundary theory. 

As an example of the bulk-boundary mapping, we review the extrapolate dictionary of AdS/CFT, which will be useful for setting up our construction. Suppose the low-energy bulk theory is described by a scalar field $\phi(z,\bm{x},t)$ on a semiclassical AdS background with metric
\begin{align}
    ds^2=\frac{R^2}{z^2}\left(-dt^2+d{\bf x}^2+dz^2\right)
\end{align}
where $z$ is the coordinate associated to the extra bulk dimension. A free scalar field $\phi$ in the bulk has equation of motion (EOM)
\begin{align}
    \left(\Box-m\right)\phi=0
\end{align}
with $\Box \equiv g^{\mu \nu}\partial_\mu\partial_\nu$. We would like to relate normalizable solutions of the EOM---which are the quanta of the bulk theory---to the quanta of the boundary theory. This may be done by computing bulk correlators and pulling them to the conformal boundary. While the correlators of normalizable bulk modes vanish at the boundary, their leading order behavior can be used to identify bulk operators with their boundary operator representations. For instance, the two point functions in AdS/CFT are identified as
\begin{align}\label{eq: two_pt_duality}
    \lim_{z\rightarrow 0}z^{-2\Delta}\bra{\Omega_{\rm blk}}\phi(z,{\bf x}_1,t_1)\phi(z,{\bf x}_2,t_2)\ket{\Omega_{\rm blk}}=\bra{\Omega_{\rm bdy}}\mathcal{O}({\bf x}_1,t_1)\mathcal{O}({\bf x}_2,t_2)\ket{\Omega_{\rm bdy}}
\end{align}
where $\OC(\bm x,t)$ are the boundary CFT primary operators with conformal dimension $\Delta$, and $\ket{\Omega_{\rm blk}}$ and $\ket{\Omega_{\rm bdy}}$ denote the ground states of the bulk and boundary Hilbert spaces, respectively. The extrapolate dictionary then provides the following mapping between quasilocal operators in the bulk and boundary theories:
\begin{align}
    \label{eq:extrapolate_dict}
    \lim_{z\rightarrow 0}z^{-\Delta}\phi(z,{\bf x},t)=\mathcal{O}({\bf x},t)
\end{align}
The relevant low-energy bulk Hilbert subspace is generated by normalizable excitations with an $\OO\left(N^0\right)$ number of $\phi$ quanta, where $N$ the number of flavors in the large-$N$ dual boundary theory. This is the small subspace of the full bulk Hilbert space for which the correspondence is well-defined. 

Na\"ively, the bulk appears to contain more states than the boundary theory (even when restricted to the low-energy subspace), since all boundary states can be associated to bulk states generated by the quasilocal bulk $\phi$ operators in the neighborhood of the boundary. There are apparently more bulk states which are generated by quasilocal excitations deeper in the bulk interior. However, such quasilocal excitations will generally propagate to the boundary at later times---at least in the semiclassical gravitational picture---and therefore, may be identified with boundary excitations. In other words, local operators in the bulk correspond to some time-evolved operators on the boundary, which are nonlocal operators (in Heisenberg picture) \cite{banks1998ads}.

Accordingly, a natural way to relate bulk operators and boundary operators is by evolving quasilocal bulk operators to the future or the past. To make this relation more precise, we can explicitly write down bulk fields which satisfy Eq.~\eqref{eq:extrapolate_dict} using the approach by HKLL. Normalizable solutions $\phi(z,\bm x, t)$ to the bulk EOM may be obtained by expanding in a basis of normalizable modes, and imposing the extrapolate dictionary on the asymptotic behavior in the extra bulk coordinate. This provides a linear relation between the boundary operator $\OC(\bm x,t)$ and the part of the bulk normalizable modes that only depends on $(\bm x,t)$. If the bulk EOM is linear, there exists a linear kernel $K$ that relates normalizable bulk operators to the boundary operators as
\begin{align}
    \phi(z,{\bf x},t)=\int dt'd{\bf x}'K(z,{\bf x},t|{\bf x}',t')\mathcal{O}({\bf x}',t')
\end{align}
HKLL showed that $K$ can be taken to have support only on the boundary points $(\bm x',t')$ that are spacelike-separated from the bulk point $(z,\bm x,t)$~\cite{hamilton2006holographic,harlow2018tasi}. Roughly speaking, one can view the kernel as a spacelike Green's function that describes propagation along the $z$ direction rather than time. The existence of such a Green's function is related to the time-like tube theorem applied to the bulk QFT~\cite{araki1963generalization,borchers1961vollstandigkeit,strohmaier2023analytic,strohmaier2023timelike}.

When the bulk is two-dimensional, the reconstruction formula simplifies to
\begin{align}
    \phi(z,t)=\int_{t-z}^{t+z}dt'K(z,t|t')\mathcal{O}(t')
\end{align}
When $\phi$ is not a free field, the bulk interactions can be considered perturbatively. For example a scalar field with a $\phi^4$ interaction satisfies the modified equation of motion $\left(\Box-m\right)\phi=-\lambda \phi^3$, which can be used to derive a nonlinear kernel relation between $\phi$ and $\mathcal{O}$ \cite{harlow2018tasi}. 

In the HKLL approach, the bulk equation of motion of a (nearly) free field plays an essential role in the reconstruction. As a consequence, this approach has only been applied to few cases such as the AdS vacuum where the kernel can be obtained analytically. In the next section we will describe a new bulk reconstruction procedure which is similar to HKLL but proceeds purely from boundary data, without assuming the bulk geometry or equations of motion. In particular, we will describe a procedure for finding the kernel that relates bulk and boundary operators by imposing a bulk causal structure. Our approach may be regarded as a realization of causal wedge reconstruction where we assume a causal structure to constrain the possible kernel solutions. We will show that for a (0+1)d boundary theory of GFFs, this constraint is sufficient to uniquely determine an emergent (1+1)d bulk description, up to a local basis choice.

\section{Bulk Reconstruction from Boundary Generalized Free Fields}
\label{sec: framework}
In this section, we describe our bulk reconstruction approach, which relies purely on boundary data and is applicable to any boundary model with generalized free fields (GFFs). A GFF is any 
(generically interacting) bosonic or fermionic field whose correlation functions satisfy Wick's theorem. Sec.~\ref{subsec: GFF} reviews the definition of GFFs.  Sec.~\ref{subsec: kernel} develops the procedure that determines the bulk-boundary mapping kernel using the boundary two point functions. From the kernels, one can determine the bulk dynamics and bulk state. For simplicity, we shall restrict our attention to a (0+1)d boundary model with an emergent (1+1)d bulk description. Our method can accommodate higher-dimensional boundary models, though bulk locality in all directions is not guaranteed (see discussion in Sec.~\ref{sec: discussion}).


\subsection{Generalized free fields}\label{subsec: GFF}
First, we review the definition of a GFF. GFFs may possess either bosonic or fermionic statistics, but for concreteness and anticipating the future discussion of the SYK model, we shall focus on fermionic GFFs. Consider a set of Majorana fermion fields $\chi_i(t)$ with $~i=1,\dots,N$ in $(0+1)$d. The $\chi_i(t)$ are Heisenberg operators evolved by a given Hamiltonian $H(t)$ (which may be time-dependent), and that satisfy the Heisenberg equation $i\partial_t\chi_i(t)=\left[\chi_i(t),H(t)\right]$. Majorana fields are defined to satisfy the equal-time canonical anticommutation relations
\begin{align}
    \left\{\chi_i(t),\chi_j(t)\right\}=\delta_{ij}
\end{align}
A field $\chi_i(t)$ is a GFF with respect to state $\rho$ if and only if all of its correlation functions computed in the state $\rho$ satisfy Wick's theorem; that is, all $2p$-point correlators are factorizable into a product of $p$ two point correlators, summed over all possible Wick contractions, while odd moments vanish\footnote{For bosonic GFFs, odd moments are nonzero if the field has a nonzero one-point function, but this can be removed by a redefinition of the field.}. For example, the four-point function of the fermionic GFFs $\chi_i(t)$ should satisfy
\begin{equation}
\begin{aligned}
\ev{\chi_i(t_1)\chi_j(t_2)\chi_k(t_3)\chi_l(t_4)}&=\ev{\chi_i(t_1)\chi_j(t_2)}\ev{ \chi_k(t_3)\chi_l(t_4)}\\
    &~-\ev{\chi_i(t_1)\chi_k(t_3)}\ev{ \chi_j(t_2)\chi_l(t_4)}\\
    &~+\ev{\chi_i(t_1)\chi_k(t_4)}\ev{ \chi_j(t_2)\chi_l(t_4)}
\end{aligned}
\end{equation}
where the relative sign between the terms comes from the fermionic nature of the fields. Here, all expectation values are taken with respect to a fixed reference state with $\ev{...}\equiv{\rm tr}\left(\rho...\right)$ and $\rho$ normalized such that $\Tr(\rho) = 1$. Hence, the GFF condition is a requirement on both the state and the Hamiltonian. 

Free fields, which possess a Hamiltonian that is quadratic in the fields, are trivially GFFs provided that $\rho$ is a Gaussian state. For example, a free theory of Majorana fields $\chi_i(t)$ has a quadratic Hamiltonian of the general form
$H(t)=\frac12\sum_{i,j}h_{ij}(t)\chi_i\chi_j$, where $h$ is an antisymmetric Hermitian matrix (i.e. Hermitian and purely imaginary). Note that $H$ is a time-dependent Schr\"odinger picture Hamiltonian. The time evolution of a collection of free fields is linear:
\begin{align}
    i\partial_t\chi_i(t)&= h_{ij}(t)\chi_j(t),\nonumber\\\Rightarrow \chi_i(t)&=U_{ij}(t,t')\chi_j\left(t'\right),~U_{ij}(t,t')\equiv \left(T\exp\left[-i\int_{t'}^t d\tau h(\tau)\right]\right)_{ij}\label{eq: free evolution}
\end{align}
$U_{ij}(t,t')$ is a unitary transformation in the linear space spanned by fermion fields $\chi_i(t)$, and should not be confused with the fermion Hilbert space time evolution operator $\widetilde{U}(t)$, which implements Heisenberg evolution as $\chi_j(t) = \widetilde{U}^\dag(t)\chi_j(0)\widetilde{U}(t)$. For the fields $\chi_i(t)$ to be GFFs, the state $\rho$ needs to be a Gaussian state of the form $\rho\equiv Z^{-1}e^{-\sum_{i,j}\tilde{h}_{ij}\chi_i\chi_j}$ where $\tilde{h}_{ij}$ encodes the quadratic parent Hamiltonian, which is also called the modular Hamiltonian. Note that the modular Hamiltonian is generically different from the Hamiltonian governing Heisenberg evolution.

Interestingly, in the large-$N$ limit where the Hilbert space dimension approaches infinity, it is possible to have \textit{interacting} theories of Majorana fermions in which $\chi_i(t)$ is a GFF for all correlation functions that involve a finite number of field insertions~\cite{greenberg1961generalized,dutsch2003generalized}. An example of such a theory is the Sachdev-Ye-Kitaev (SYK) model~\cite{sachdev1993gapless,kitaev2014hidden,kitaev2015simple}, which we will detail in the following section. For now, we will describe our reconstruction procedure in a general context, without specializing to any particular boundary model. 

The difference between a canonically free field and an interacting GFF plays an essential role in our reconstruction. To illustrate this difference, consider the Wightman functions
\begin{align}
    G_{jk}(t,t')&=\ev{ \chi_j(t)\chi_k\left(t'\right)}=\frac12\left(A_{jk}(t,t')+C_{jk}(t,t')\right)\label{eq: boundary2pt}\\
    A_{jk}(t,t')&=\ev{\left\{\chi_j(t),\chi_k\left(t'\right)\right\}}\label{eq: boundary spectral}\\
    C_{jk}(t,t')&=\ev{\left[\chi_j(t),\chi_k\left(t'\right)\right]}\label{eq: boundary commutator}
\end{align}
For any GFF, we have the following fact as a consequence of Wick's theorem:
\begin{fact}\label{fact: Ajk}
\sloppy For any bosonic multi-point operators $\mathcal{O}_1=\chi_{i_1}(t_1)\chi_{i_2}(t_2)...\chi_{i_{2p}}(t_{2p})$, $\mathcal{O}_2=\chi_{l_1}(t_1)\chi_{l_2}(t_2)...\chi_{l_{2p}}(t_{2q})$, the correlation function satisfies
\begin{align}
    \ev{ \mathcal{O}_1\left\{\chi_j(t),\chi_k\left(t'\right)\right\}\mathcal{O}_2}=A_{jk}(t,t')\ev{ \mathcal{O}_1\mathcal{O}_2}
    \label{eq: factorization}
\end{align}
\end{fact}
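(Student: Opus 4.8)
The plan is to show that the anticommutator $\{\chi_j(t),\chi_k(t')\}$ behaves as a c-number inside any GFF correlator, where the c-number is precisely the spectral function $A_{jk}(t,t')$. The key structural observation is that $A_{jk}(t,t')=\langle\{\chi_j(t),\chi_k(t')\}\rangle$ is, by Wick's theorem, the \emph{same} two-point contraction that appears whenever we contract $\chi_j(t)$ with $\chi_k(t')$ symmetrically. So the strategy is purely combinatorial: expand the left-hand side $\langle\mathcal{O}_1\{\chi_j(t),\chi_k(t')\}\mathcal{O}_2\rangle$ using Wick's theorem, and argue that all contractions either reproduce $A_{jk}(t,t')\langle\mathcal{O}_1\mathcal{O}_2\rangle$ or cancel in pairs.

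First I would write $\{\chi_j(t),\chi_k(t')\}=\chi_j(t)\chi_k(t')+\chi_k(t')\chi_j(t)$ and treat the two orderings together. The total correlator on the left is a $(2p+2p+2)$-point function (even, hence nonvanishing), so Wick's theorem applies. I would sort the contractions into three classes: (i) those in which $\chi_j(t)$ and $\chi_k(t')$ are contracted \emph{with each other}, and (ii)/(iii) those in which $\chi_j(t)$ and/or $\chi_k(t')$ are contracted with fields inside $\mathcal{O}_1$ or $\mathcal{O}_2$. For class (i), contracting the inserted pair with each other factors out a two-point function times the fully-contracted remainder $\langle\mathcal{O}_1\mathcal{O}_2\rangle$; summing the two orderings, the "contracted-with-each-other" pieces combine into $\langle\{\chi_j(t),\chi_k(t')\}\rangle\langle\mathcal{O}_1\mathcal{O}_2\rangle=A_{jk}(t,t')\langle\mathcal{O}_1\mathcal{O}_2\rangle$, which is exactly the claimed right-hand side. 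The heart of the proof is then class (ii)/(iii): I must show these "cross" contractions cancel between the two orderings.

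The main obstacle is bookkeeping the fermionic signs so that the cross-contractions cancel. The mechanism is this: fix any contraction pattern in which $\chi_j(t)$ pairs with some field in $\mathcal{O}_1\mathcal{O}_2$ and $\chi_k(t')$ pairs with another (or one of them pairs externally and the rest of the diagram is identical). Because $\mathcal{O}_1$ and $\mathcal{O}_2$ are \emph{bosonic} (each contains an even number $2p$, $2q$ of fields), moving $\chi_j(t)$ or $\chi_k(t')$ past the entire block $\mathcal{O}_1$ or past the inserted partner costs signs that I can track explicitly. The point is that swapping the order $\chi_j(t)\chi_k(t')\to\chi_k(t')\chi_j(t)$ in a \emph{cross} contraction introduces exactly one relative sign relative to the same contraction with the opposite ordering — because the two inserted operators sit adjacent and interchanging them in a term where they are not contracted together flips the Wick sign — so the two orderings contribute with opposite signs and annihilate term by term. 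I would make this precise by introducing the permutation sign $\mathrm{sgn}(\pi)$ of each Wick pairing and verifying that the bosonic (even) parity of $\mathcal{O}_1,\mathcal{O}_2$ guarantees the relative sign between corresponding class-(ii)/(iii) terms in the two orderings is $-1$.

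Finally I would assemble the pieces: only class-(i) survives, giving $A_{jk}(t,t')\langle\mathcal{O}_1\mathcal{O}_2\rangle$, which is \eqref{eq: factorization}. A cleaner alternative I might present instead, to avoid heavy sign-chasing, is an algebraic argument: since $A_{jk}(t,t')$ is a c-number for a GFF, it suffices to show the operator identity $\langle\mathcal{O}_1[\{\chi_j(t),\chi_k(t')\}-A_{jk}(t,t')]\mathcal{O}_2\rangle=0$, i.e.\ that the \emph{fluctuation} $\{\chi_j(t),\chi_k(t')\}-A_{jk}(t,t')$ has vanishing connected correlators against any even product. For a GFF this follows because the anticommutator's connected two-point and higher functions with the $\chi$'s must vanish by Wickianity (the anticommutator of two GFFs is central up to fluctuations with no connected pieces), and $A_{jk}$ is by definition its one-point value. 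Either route reduces to the same combinatorial fact; I expect the sign-tracking in the direct Wick expansion to be the genuinely delicate step, and the bosonic parity of $\mathcal{O}_1,\mathcal{O}_2$ to be the crucial hypothesis that makes the cancellation go through.
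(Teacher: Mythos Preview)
Your proposal is correct and follows essentially the same approach as the paper: write the anticommutator as two orderings, apply Wick's theorem, observe that cross-contractions (where $\chi_j$ and $\chi_k$ pair with fields in $\mathcal{O}_1,\mathcal{O}_2$) cancel between the two orderings because swapping the adjacent inserted fermions flips the Wick sign, leaving only the self-contraction $A_{jk}(t,t')\langle\mathcal{O}_1\mathcal{O}_2\rangle$. Your treatment is more explicit about the sign bookkeeping than the paper's brief sketch, and your alternative ``fluctuation'' reformulation is a nice repackaging, but the underlying combinatorial argument is the same.
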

This factorization can be directly verified by considering all Wick contractions. The anticommutator is a sum of two terms, $\ev{\OC_1 
\chi_j(t)\chi_k(t')\OC_2}$ and $\ev{\OC_1 
\chi_k(t')\chi_j(t)\OC_2}$. Every Wick contraction of the first term where $\chi_j(t')$ and $\chi_k(t')$ are paired with different $\chi$'s in $\OC_1$ and $\OC_2$ is canceled by the equivalent index contraction coming from the second term where $\chi_j(t)$ and $\chi_k(t')$ start off in reverse order. Thus, the only surviving contractions are when $\chi_j(t)$ is paired with $\chi_k(t')$, yielding Eq.~\eqref{eq: factorization}. The above equation implies that we can write the operator identity
\begin{align}
    \left\{\chi_j(t),\chi_k\left(t'\right)\right\}=A_{jk}(t,t')\Id
\end{align}
for the subspace of states with finite number of fermion excitations on top of the reference state $\rho$. More precisely, we can define a linear space of states spanned by $\mathcal{O}_1\rho\mathcal{O}_2$ with each $\OC_i$ taking on the form $\mathcal{O}_i=\sum_{p=0}^k\sum_{i_1,\dots,i_{2p}}\psi_{i_1,i_2,...,i_{2p}}\chi_{i_1}(t_1)\chi_{i_2}(t_2)...\chi_{i_{2p}}(t_{2p})$. As long as the maximal length $2k$ satisfies $2k/N\rightarrow 0$ in the large $N$ limit, the operator identity holds.

The key difference between free and interacting GFFs is in the structure of the spectral function $A_{jk}(t,t')$. For a free fermion theory, one can show using Eq.~\eqref{eq: free evolution} that $A_{jk}(t,t')=U_{jk}(t,t')$ is unitary in the flavor indices $j,k$\footnote{This essentially follows from the fact that the time evolution of free fields is linear and that the coupling matrix $h$ is Hermitian.}. In contrast, for interacting GFFs, $A_{jk}(t,t')$ is generically non-unitary. For instance, for the large-$N$ SYK model in the zero temperature limit, we have $A_{jk}(t,t')\propto\delta_{jk}\left(t-t'\right)^{-2/q}$. Physically, the unitarity of the spectral function of a free fermion theory reflects the fact that single-particle states at different times only differ by a single-particle basis transformation. In contrast, in an interacting theory, a single particle can decay into multiple particles, and $A_{jk}(t,t')$ encodes the probability that a single-particle state remains a single-particle state at later time, which generically decays with time. The ability for single particles to decay is crucial to our reconstruction; as we will soon show, this is what allows the emergent bulk direction to appear.

The key observation that leads to our reconstruction is that all multipoint functions of a given GFF may be reproduced by those of a free theory with one extra spatial dimension, which we call the bulk theory. Correspondingly, the original theory with GFFs is regarded as the boundary theory. Because of Wick's theorem, all bulk correlation functions are encoded by the boundary Wightman function $G_{jk}(t,t')$. We will further show that the bulk theory constructed from a boundary theory of fermionic GFFs is determined by the spectral function $A_{jk}(t,t')$, and that the bulk fields are unique up to some unimportant local basis choice. An analogous bulk construction holds for bosonic GFFs, where $A_{jk}(t,t')$ in Eq.~\eqref{eq: factorization} is replaced by the commutator matrix $C_{jk}(t,t')$. 

\subsection{Bulk-boundary mapping from boundary two point functions}\label{subsec: kernel}

Here, we explicitly delineate our bulk reconstruction protocol, using Majorana fermion operators as our archetypal example. We consider a (0+1)d boundary theory with $N$ Majorana fields that are GFFs. We begin by discretizing the boundary time $t$ by sampling $t = \tb\Delta t$, $\tb \in \Z$. Note that boundary evolution itself still occurs in continuous time, but we are simply taking snapshots at discrete time points to simplify the mathematical problem we are dealing with.
For these discrete time points, the two point function $G_{jk}(t,t')$ becomes a matrix $G$ of dimension $N\cdot n_t \times N\cdot n_t$ where $n_t$ is the number of discrete time points being considered\footnote{Note that for finite $n_t$, the discretized two point functions cannot be truly time-translation invariant. This has implications on the discrete bulk theory which is constructed from the sampled boundary data. However, in certain cases, the discrete bulk theory approaches a continuum theory whose behavior in certain bulk regions reflects the underlying time-translation invariance of the original boundary data. See Sec.~\ref{subsec: bulk entropy} for an example in the case study of the SYK model.}. Correspondingly, the spectral function containing the anticommutator data becomes a matrix $A$ with matrix elements $A_{jk}(t,t')$. Note that $A$ is real and positive semi-definite\footnote{To show that $A$ is positive semi-definite: Let $M \equiv \sum_j v_j\chi_j$ be a linear combination of the $N$ Majorana operators. Then $v^\dag A v = (v^\dag)_j \Tr(\rho \{\chi_j,\chi_k\}) v_k = \Tr\left(\rho \left(v_j^*\chi_j\chi_k v_k + v_j^*\chi_k\chi_j v_k \right)\right)= \Tr\left(\rho M^\dag M\right) + \Tr\left(MM^\dag)\right)$. Both terms are non-negative because $M^\dagger M, MM^\dagger$ and $\rho$ are all positive semi-definite.
}.

The key result of this paper is that the matrix $A$ determines a bulk dual theory which is described by a Gaussian quantum circuit as illustrated in Fig. \ref{fig: spacelike_sep}, and the bulk-boundary correspondence is given by an HKLL-like kernel which is also completely determined by $A$. We first state the result and then provide a constructive proof. 

\begin{result}\label{result: main}
    For Majorana GFFs $\chi_i(t)$ with a given time discretization $\Delta t$, there exists a dual description given by a discrete bulk theory of canonically free Majorana fields $\psi_{ia}(z,t)$ with $a = L,R$ and $(z,t)=(\zb\Delta t,\tb\Delta t)$ and $\zb \pm \tb\in\mathbb{Z},~\zb \geq 0$, for which the dynamics is described by a local unitary Gaussian quantum circuit with reflective boundary conditions at $z=0$, as illustrated in Fig. \ref{fig: spacelike_sep}. $\psi_{ia}(z,t)$ is related to the boundary fermions in the region $[t-z,t+z]$ by a linear transformation
    \begin{align}
        \psi_{ja}(z,t)=\sum_{t'=t-z}^{t+z}K_{ja,k}(z,t|t')\chi_k(t'),~a=L,R
        \label{eq: kernel def}
    \end{align}
    The bulk theory satisfies the following conditions:
    \begin{enumerate}
        \item Bulk fermions that are spacelike-separated satisfy the canonical anticommutation relations. (A more precise definition of "spacelike-separated" for quantum circuits will be presented below.) 
        \item The kernel $K$ is uniquely determined by the boundary spectral function matrix $A$, up to a local basis choice $K_{ja,k}(z,t|t')\rightarrow \sum_m O_{jm}(z,t)K_{ma,k}(z,t|t')$ with $O(z,t)$ an orthogonal matrix that can be chosen independently at each bulk spacetime point.
        \item The unitary gates $U(z,t)$, $V(z,t)$ in the bulk (see Fig. \ref{fig: spacelike_sep}) are also uniquely determined by $A$, up to the same freedom of local orthogonal transformations. 
        \item The unitary circuit has a reflective boundary condition at $z=0$ $\psi_{iL}(z=0,t)=\psi_{iR}(z=0,t)=\chi_i(t)$. This is the discrete analog of the extrapolate dictionary. The circuit is either semi-infinite or has another reflective boundary condition at some finite $z=z_{\rm max}(t)$. This secondary reflective boundary condition is also determined by $A$.
    \end{enumerate}
\end{result}

Let us now explain the derivation of Result \ref{result: main}. We will first define the Gaussian unitary circuit corresponding to the bulk theory and discuss its causal structure. Then we will explain how the kernel $K$ and the gates are determined by an orthogonalization procedure. In general, a local quantum circuit can be viewed as implementing a discretized time evolution using unitary gates where each gate only couples a finite number of nearby qubits. The circuit we consider for our bulk description consists of gates which couple qubits on neighboring sites, as shown in Fig. \ref{fig: spacelike_sep}. The qubits and corresponding bulk fermion operators are associated to the circuit lines/links while the circuit gates are denoted by the colored vertices. We label each vertex by coordinates $(z,t)$ where $z,t$ are discretized by half-integer multiples of $\Delta t$, i.e. $z\equiv \zb\Delta t$ and $t\equiv \tb\Delta t$ with $\zb,\tb$ both integer or half-integer. The gates for half-integer coordinates are denoted by $\Ut$ while those for integer coordinates are denoted by $\Vt$. For each gate at vertex $(z,t)$, we denote the fermion operators at the left-moving input leg as $\psi_{iL}(z,t)$, and denote those at the right-moving output leg as $\psi_{iR}(z,t)$. For convenience, we will occasionally label the bulk gates and fermions by their discrete labels $(\zb,\tb)$ instead of $(z,t)$, so $\psi_{iL}(\zb,\tb)$ and $\Ut(\zb,\tb)$ has the same meaning as $\psi_{iL}(z,t)$ and $\Ut(z,t)$, respectively. The Hilbert space of the circuit at a given time $\tb$ is a direct product of the single-site Hilbert spaces of all links intermediated by a horizontal cut. Each quantum gate acts on the $2N$ Majorana fermions on two neighboring sites. Generically, the gates could be any unitary operator in the $2^N$-dimensional Hilbert spaces of the two sites, but we will focus on Gaussian quantum circuits in which all gates are Gaussian unitaries, which implement the discrete analog of Eq.~\ref{eq: free evolution}. A Gaussian unitary gate $\Ut(\zb,\tb)$ satisfies\footnote{Note that following our convention (see Fig. \ref{fig: spacelike_sep}b), $\psi_{iR}(\zb-\frac12,\tb-\frac12)$ and $\psi_{iL}(\zb,\tb)$ are the $2N$ input fermions to $U(\zb,\tb)$, and the two operators on the right-hand side of the equation are the outputs.}
\begin{align}
    \Ut^\dagger(\zb,\tb)\psi_{iR}\left(\zb-\frac12,\tb-\frac12\right)\Ut(\zb,\tb)&=\alpha_{ij}\psi_{jR}\left(\zb-\frac12,\tb-\frac12\right)+\beta_{ij}\psi_{jL}\left(\zb,\tb\right)\nonumber\\
    \Ut^\dagger(\zb,\tb)\psi_{iL}(\zb,\tb)\Ut(\zb,\tb)&=\gamma_{ij}\psi_{jR}\left(\zb-\frac12,\tb-\frac12\right)+\zeta_{ij}\psi_{jL}\left(\zb,\tb\right)\label{eq: U components}
\end{align}
\begin{figure}[t]
    \centering
    \includegraphics[width=\linewidth]{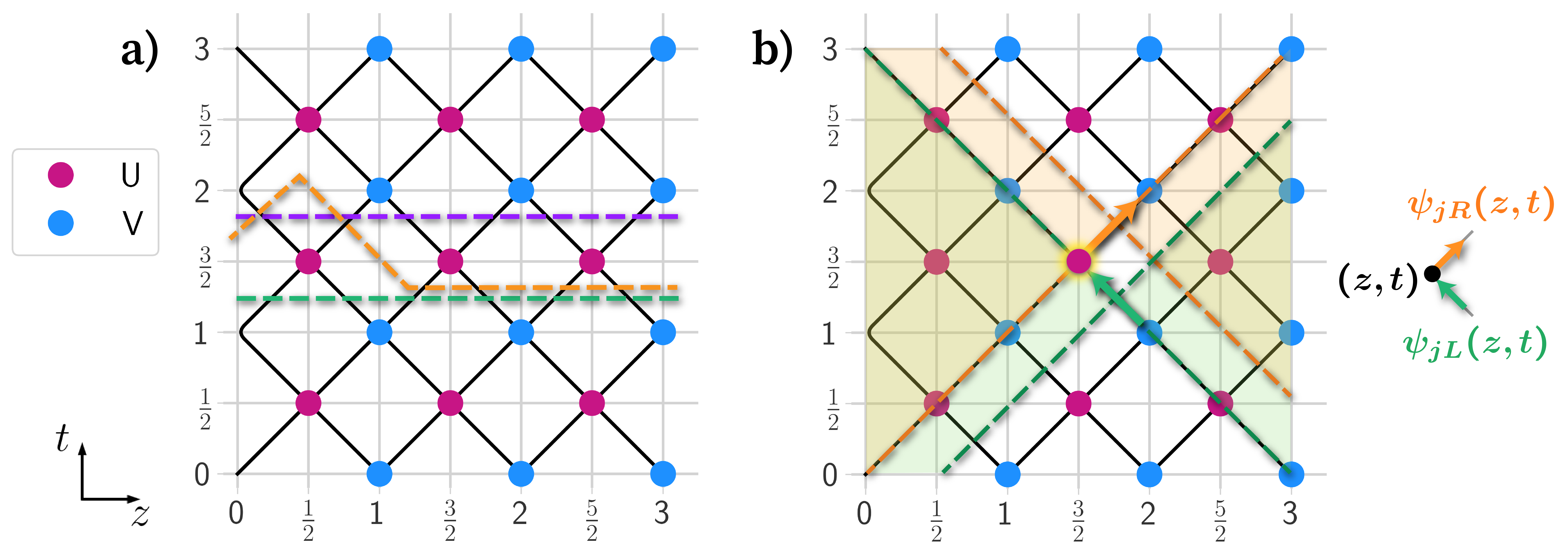}
    \caption{a) Imposing anticommutation relations for the bulk fermions along the horizontal cuts, such as those given by the green and purple dashed lines, implies that the fermions residing along the orange dashed line also obey anticommutation relations. This essentially follows from the brickwork locality of the circuit. b) The spacelike-separated wedges for the right (left)-moving fermions indicated in orange (green) are demarcated by dashed lines. Bulk fermions associated to the vertices fully contained in the shaded orange (green) region anticommute with the right (left)-moving fermion at the highlighted vertex. To each vertex coordinate $(z,t)$, one can associate a pair of outgoing and incoming bulk fermion modes, as shown in the legend on the right. }
    \label{fig: spacelike_sep}
\end{figure}
Grouping the linear coefficients on the right hand side into a matrix $U(\zb,\tb)$, we have
\begin{equation}
\begin{aligned}\label{eq: Uzt}
    \left(\begin{array}{c}\psi_{L}\left(\zb-\frac12,\tb+\frac12\right)\\\psi_{R}(\zb,\tb)\end{array}\right)&\equiv \Ut^\dagger(\zb,\tb)\left(\begin{array}{c}\psi_{R}\left(\zb-\frac12,\tb-\frac12\right)\\ \psi_{L}(\zb,\tb)\end{array}\right)\Ut(\zb,\tb)\\
    &=U(\zb,\tb)\left(\begin{array}{c}\psi_{R}\left(\zb-\frac12,\tb-\frac12\right)\\ \psi_{L}(\zb,\tb)\end{array}\right)
\end{aligned}
\end{equation}
with $U(\zb,\tb)$ an orthogonal $2N\times 2N$ matrix. Similarly, each gate $\Vt(\zb,\tb)$ corresponds to an orthogonal single-particle transformation matrix $V(\zb,\tb)$. 

Although the quantum circuit is not endowed with a metric\footnote{In fact, we shall later show in some example cases how metric emerges from the bulk circuit dynamics in the continuum limit $\Delta t\rightarrow 0$.}, it has a well-defined causal structure, defined by a discrete analog of light cones. 
Because unitary gates preserve the canonical anticommutation relations between fermions, if fermions along a horizontal cut ($\psi_{jR}(\zb,\tb)$ and $\psi_{jL}(\zb,\tb+1/2)$ for fixed $\tb$) are anticommuting with each other, then they remain anticommuting after application of one or more local unitary gates. This is what defines whether a particular cut in the circuit is spacelike. For all spacelike-separated pairs of points $(a,\zb,\tb)$ and $(b,\zb',\tb')$
\begin{align}\label{eq: bulk anticommutation}
    \left\{\psi_{ia}(\zb,\tb),\psi_{jb}(\zb',\tb')\right\}=\delta_{ij}\delta_{ab}\delta_{\zb\zb'}\delta_{\tb\tb'}
\end{align}
It is straightforward to see that all pairs of points with $|\zb-\zb'|>|\tb-\tb'|$ are spacelike-separated. For points along diagonal cuts with $|\zb-\zb'|=|\tb-\tb'|$, we need to separately consider the left and right-movers. As is shown in Fig. \ref{fig: spacelike_sep}(b), $\psi_{iR}(\zb,\tb)$ (orange arrow) anticommutes with fermions $\psi_{iR}(\zb-\frac{n}{2},\tb+\frac{n}{2})$, $n \in \Z$ residing on the links \textit{crossing} the downwards-sloping orange dashed line. By contrast, right-movers along the other orange dashed line $\psi_{iR}(\zb+\frac{n}{2},\tb+\frac{n}{2})$ are not spacelike-separated. However, the left-movers along the parallel cut indicated by the green dashed line anticommute with each other.

\begin{figure}[ht]
    \centering
    \includegraphics[width=.8\linewidth]{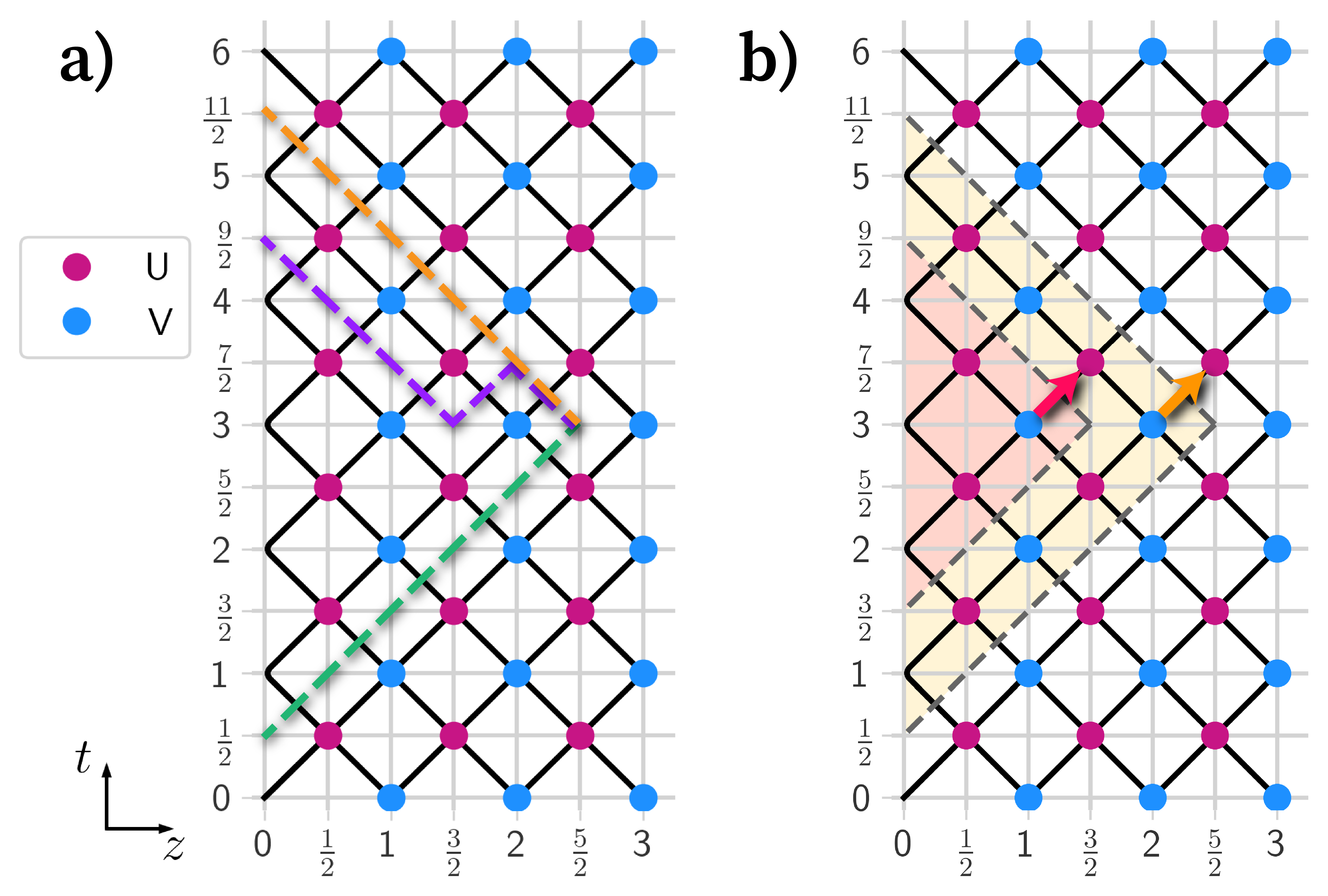}
    \caption{Quantum circuit representation of the discretized (1+1)d bulk. The two types of gates $U$ (pink) and $V$ (blue) live on the integer and half-integer vertices, respectively.  a) Various cuts indicated by the orange, purple, and green lines represent choices for the spacelike-separated bulk operators (residing on the black lines) to use when finding the kernels. b) Choosing diagonal cuts allows one to find right-moving and left-moving bulk fermion modes.}
    \label{fig: circuit_cuts}
\end{figure}

With this preparation, we may now determine the kernel map from the boundary to bulk fermion operators defined in Eq.\eqref{eq: kernel def}. Before deriving the kernel, we first present a heuristic counting argument why such a bulk-boundary correspondence makes sense in this discrete setup. In Fig. \ref{fig: circuit_cuts}(a), the time interval $t\in[1,5]$ corresponds to a causal wedge, which is the triangular region enclosed by the green and orange dashed lines. One observes that each spatial cut in this causal wedge, such as the orange, green or purple dashed lines, intersects $5$ links. For each boundary time interval, the number of bulk fermion sites in its causal wedge region is the same as that of boundary time points.  Following the convention in relativistic field theory, we call each spatial cut a Cauchy surface. Each Cauchy surface $\Sigma$ corresponds to a square matrix $K_\Sigma$ such that 
\begin{align}\label{eq: bulk_ansatz}
    \psi_\Sigma=K_\Sigma\chi
\end{align}
Here $\psi_\Sigma$ denotes the vector of $\psi_{ia}(z,t)$ for all links intersecting $\Sigma$, and $\chi$ denotes the vector of $\chi_i(t)$ for $t$ in the corresponding time interval. Since any two points on a Cauchy surface are spacelike-separated, $\psi_\Sigma$ satisfies the canonical anticommutation relation~\eqref{eq: bulk anticommutation}, which implies 
\begin{align}
    \left\{\psi_\Sigma,\psi_\Sigma^\T\right\}&=K_\Sigma \left\{\chi,\chi^\T\right\}K^\T_\Sigma=\Id
\end{align}
In other words, $K_\Sigma$ must satisfy the matrix equation
\begin{align}
    K_\Sigma AK_\Sigma^\T=\Id\label{eq: bulk_alg_cond}
\end{align}
Eq.~\eqref{eq: bulk_alg_cond} by itself does not determine $K_\Sigma$ since any orthogonally-transformed kernel $\widetilde{K}_\Sigma \equiv O K_\Sigma$ also satisfies the same equation. This ambiguity is mostly removed by the requirement (\ref{eq: kernel def}) on the support of the kernel, which restricts the distribution of nonzero matrix elements in $K_\Sigma$. The only remaining ambiguity is the local basis transformation mentioned in Result \ref{result: main}. 

To demonstrate the procedure of determining the kernel more explicitly, we choose to work with the diagonal cuts (orange and green dashed lines in Fig. \ref{fig: circuit_cuts}(a)). We shall refer to these cuts as the past and future "light cones" even though the cuts are one-dimensional and the bulk fermions crossing the cuts are spacelike-separated. We choose the convention of placing the reflective boundary of the causal wedge to the left. For a given boundary time interval $t\in[t_1,t_2]$ with $t_{1,2}=\tb_{1,2}\Delta t$, $\tb_2-\tb_1\in\Z$, the future light cone $\Sigma_f$ corresponds to the right-moving bulk fermions associated to vertices $(z,t)=(\zb \Delta t,\tb \Delta t)$ with $\zb=0,\frac12,1,...,\frac{\tb_2-\tb_1}2$ and $\tb=\tb_2-\zb$. Similarly, the past light cone $\Sigma_p$ corresponds to the left-moving bulk fermions on $(z,t_1+z)$ with $z=\zb\Delta t$ for the same $\zb$. First, consider the right-moving fermions on the future light cone. The support of the kernel~\eqref{eq: kernel def} requires that $\psi_{iR}(z,t_2-z)$ is a linear transformation of the boundary fermions in the time interval $[t_2-2z,t_2]$. For example, $\psi_{iR}(0,5)=\chi_i(5)$ are the boundary fermions at time $5$, while $\psi_{iR}\left(\frac12,\frac 92\right)=\sum_{n =4,5}K_{ij}\left(\frac 12,\frac92|n \right)\chi_j(n)$ is a linear superposition of $\chi_j(4)$ and $\chi_j(5)$, which is required to anticommute with all $\chi_j(5)$, and anticommute among themselves. Mathematically, the condition $KAK^\T=\Id$ can be interpreted as an orthonormality condition for the row vectors of $K$ with respect to the metric $A$. From this point of view, the boundary fermion operators may be regarded as non-orthogonal vectors with respect to the inner product defined by $\left(\chi_i(\tb),\chi_j(\tb')\right)\equiv A_{ij}(\tb,\tb')$ while the bulk fermions along the future light cone correspond to orthonormal vectors $\left(\psi_{iR}(\zb,\tb_2-\zb),\psi_{jR}(\zb',\tb_2-\zb')\right) = \delta_{ij}\delta_{\zb \zb'}$.
The bulk operators along $\Sigma_f$ may be obtained by an iterative Gram-Schmidt orthogonalization procedure for the boundary operators in $[\tb_1,\tb_2]$ starting from $\zb=0$ and moving along $\Sigma_f$. The kernel is therefore the basis transformation between the non-orthogonal boundary operators and the orthogonal bulk operators.
A schematic of the orthogonalization procedure is illustrated in Fig. \ref{fig: op_ortho}. 
An analogous procedure may be applied for the left-moving bulk fermions along $\Sigma_p$, whose kernels are supported on the same boundary time interval. Thus, the bulk fermions on the past light cone correspond to another orthogonal basis of the same linear space, with reverse ordering of the iterative Gram-Schmidt process.\footnote{It is interesting to note the similarity between our orthogonalization procedure and the discussion of Krylov space~\cite{parker2019universal}. More quantitative comparison between these two approaches is reserved for future work.} We summarize the key observations regarding the kernels for $\Sigma_{f,p}$ in the following:
\begin{result}\label{result: orthogonalization}
    Let $\mathbb{V}_{\tb}={\rm span}\left\{\chi_i(\tb),i=1,2,...,N\right\}$ be the linear space of single fermion operators at a given time, and $\mathbb{V}_{\left[\tb_1,\tb_2\right]}=\cup_{\tb\in\left[\tb_1,\tb_2\right]}\mathbb{V}_\tb$ be the space spanned by all fermion operators in the time interval
    $\left[\tb_1,\tb_2\right]$. We have the following results for this linear space:
    \begin{enumerate}
        \item The anticommutator matrix defines an inner product in this linear space $\left(\chi_i(\tb),\chi_j(\tb')\right)\equiv A_{ij}\left(\tb,\tb'\right)$. \item The mapping $\left[\tb_1,\tb_2\right]\longrightarrow \mathbb{V}_{\left[\tb_1,\tb_2\right]}$ preserves inclusion. In equation form, \begin{align}\left[\tb_1',\tb_2'\right]\subseteq\left[\tb_1,\tb_2\right]\implies\mathbb{V}_{\left[\tb_1',\tb_2'\right]}\subseteq \mathbb{V}_{\left[\tb_1,\tb_2\right]}\label{eq: nesting}
        \end{align}
        \item For a bulk coordinate $(\zb,\tb)$, the right-moving fermion operators $\psi_{iR}(\zb,\tb)$ span a linear subspace $\mathbb{W}^{f}_{\zb,\tb}$ which is the orthogonal complement of $\mathbb{V}_{\left[\tb-\zb+1,\tb+\zb\right]}$ in the (generically bigger) space $\mathbb{V}_{\left[\tb-\zb,\tb+\zb\right]}$. In equation, $\mathbb{W}^{f}_{\zb,\tb}\oplus \mathbb{V}_{\left[\tb-\zb+1,\tb+\zb\right]}=\mathbb{V}_{\left[\tb-\zb,\tb+\zb\right]}$. Similarly, the left-moving bulk fermion operators $\psi_{iL}(\zb,\tb)$ span a space $\mathbb{W}^{p}_{\zb,\tb}$ that is the orthogonal complement of $\mathbb{V}_{\left[\tb-\zb,\tb+\zb-1\right]}$ in the space $\mathbb{V}_{\left[\tb-\zb,\tb+\zb\right]}$. 
        \end{enumerate}
\end{result}

\begin{figure}[ht]
    \centering
    \includegraphics[width=0.65\linewidth]{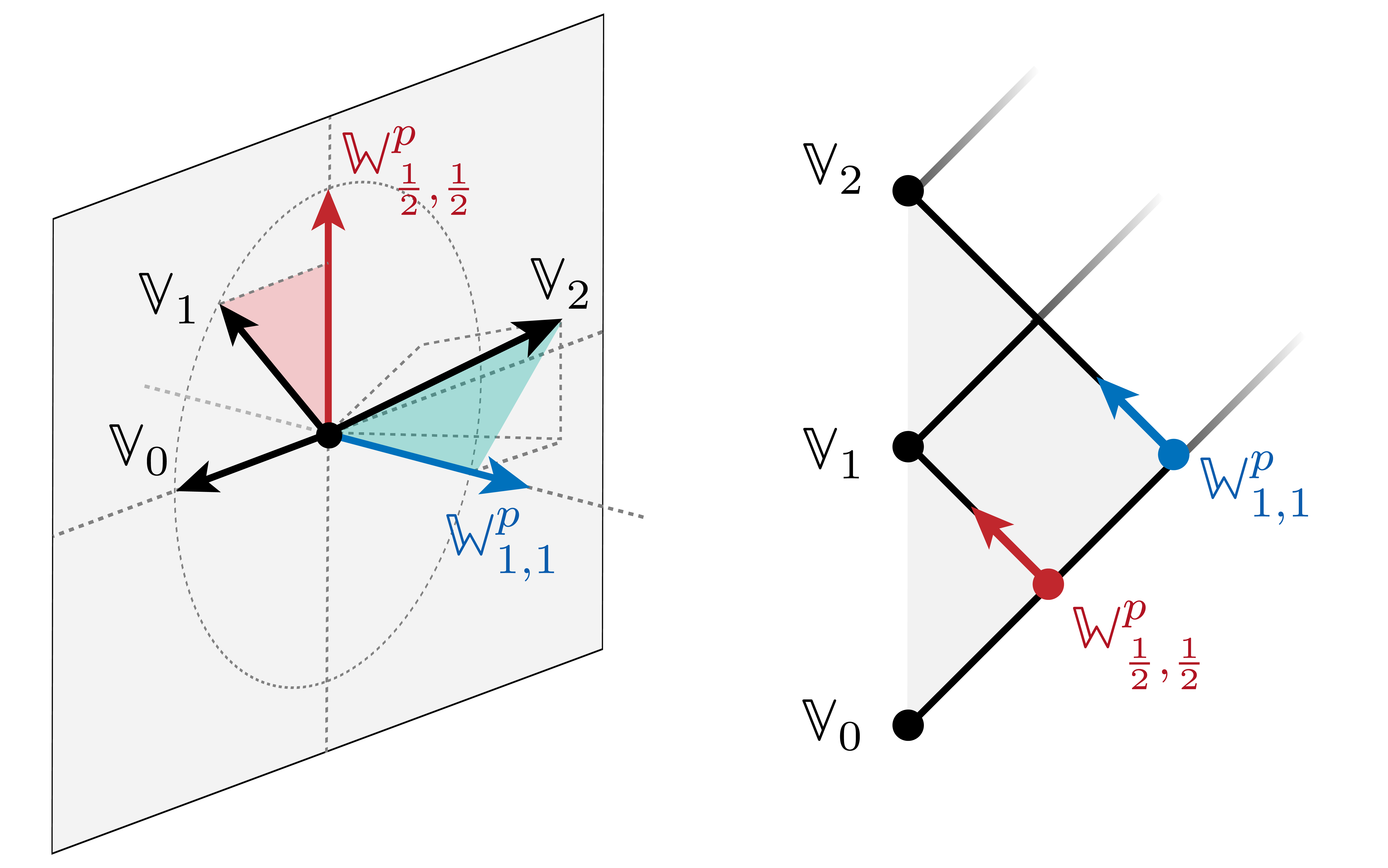}
    \caption{Schematic illustrating the orthogonalization procedure for generating $\mathbb{W}^p_{z,t}$. At $t=0$, we have a linear space $\mathbb{V}_0$ generated by $\{\chi_i(0)\}$. $\{\chi_i(1)\}$ and $\{\chi_i(2)\}$ generate linearly-independent subspaces $\mathbb{V}_1$ and $\mathbb{V}_2$. One can iteratively orthogonalize these linear subspaces to generate $\mathbb{W}^p_{\frac12,\frac32}$ and $\mathbb{W}^p_{1,1}$. The linear spaces are associated to vertices in the bulk circuit as shown on the right.}
    \label{fig: op_ortho}
\end{figure}

We now make some more practical remarks on the computation of kernels $K_{f,p}$ associated to $\Sigma_{f,p}$. The kernel matrices can be subdivided into $N\times N$ blocks with row blocks labeled by the bulk coordinates $(\zb,\tb)$. For the right-movers on the future light cone, the associated kernel matrix $K_f(\zb,\tb_2-\zb|\tb)$ is block upper right-triangular if we arrange $\zb$ in descending order. For example, for the time interval $\tb\in[1,5]$ in Fig. \ref{fig: circuit_cuts}(a), \cref{eq: bulk_ansatz} takes on the form
\begin{align}\label{eq: kernel_matrix}
    \begin{pmatrix}
    \psi_{iR}(2,3)\\
    \psi_{iR}(\frac32,\frac72)\\
    \vdots\\
    \vdots\\
    \psi_{iR}(0,5)
    \end{pmatrix} = 
    \begin{pNiceMatrix}
        k^f_{2,1} & k^f_{2,2} &\Cdots & & k^f_{2,5}\\
        0 & k^f_{\frac 32,2} & k^f_{\frac 32,3}&\Cdots & k^f_{\frac 32,5} \\
        \Vdots & & & & \Vdots \\
        \Vdots & & & & \Vdots\\
        0 & \Cdots & & 0 & \Id\\
    \CodeAfter
        \line{2-1}{5-4}
        \line{2-2}{5-5}
    \end{pNiceMatrix}
    \begin{pmatrix}
    \chi_j(1)\\
    \chi_j(2)\\
    \vdots\\
    \vdots\\
    \chi_j(5)
    \end{pmatrix}
\end{align}
where each $k^f_{\zb,\tb}$ is an $N\times N$ block that acts on the flavor indices. Likewise, the past light cone kernel $K_p(\zb,\tb_1+\zb|\tb)$ with block matrix elements labeled by $(\zb,\tb)$ is a lower left block triangular matrix, if we sort $\zb$ in ascending order. However, as mentioned in Result \ref{result: main}, there is an $O(N)$ ambiguity associated to a local basis transformation at each bulk site, which preserves the block triangular structure of the kernel matrix. Hence, we can choose $K_{f,p}$ to not only be block triangular, but also strictly triangular. This can be viewed as a gauge fixing. Such a choice allows us to obtain the kernels using QR/QL decomposition. In particular, to find $K_f$, we can rewrite $K_fAK_f^\T=\Id$ as $K_fA^{1/2}=Q_f$, with $Q_fQ_f^\T=\Id$, so that $Q_f^\T K_f=A^{-1/2}$. Hence, choosing $K_f$ to be upper triangular implies that it is obtained by a QR decomposition of matrix $A^{-1/2}$ where $Q_f$ is an orthogonal matrix of the decomposition. Likewise, $K_p$ can be obtained by a QL decomposition $A^{-1/2}=Q_p^\T K_p$ with $Q_{p}$ another orthogonal matrix. Since $A$ is real, $K_{f,p}$ may also be chosen to be real. We summarize the essential algorithm behind our numerics in the following pseudocode:
\begin{align}
Q_f^\T,K_f&=\texttt{QR}\left[A^{-1/2}\right]\nonumber\\
Q_p^{\T},K_p&=\texttt{QL}\left[A^{-1/2}\right]\label{eq: QR}
\end{align}
With the above gauge choice, the only remaining freedom is an overall $\pm$ sign in the linear expansion of each bulk fermion operator (App.~\ref{app: qr}). One is free to choose the signs so long as the choice is consistent with fermion number parity conservation. We postpone the discussion of this detail to Sec. \ref{sec: circuit_and_horizon}. 

Once the kernels $K_{f,p}$ are obtained, the bulk dynamics are fully determined. For example, $A^{-1/2}=Q_f^{\T}K_f=Q_p^{\T}K_p$ implies that $K_f=Q_fQ_p^{\T}K_p$, such that $\psi^f=Q_fQ_p^{\T}\psi^p$. Here $\psi^f$ ($\psi^p$) denotes all right-moving (left-moving) fermions on the future (past) light cone. Therefore the matrix $Q_fQ_p^{\T}$ is exactly the single-particle time evolution of $\psi^p$ into $\psi^f$. Likewise, we can use the kernels to determine the single-particle linear transformations $U(\zb,\tb)$ and $V(\zb,\tb)$ (Eq.~\eqref{eq: Uzt}) for each gate in the quantum circuit. For clarity we postpone details of the derivation of the gate transformations to Sec. \ref{sec: circuit_and_horizon}, where we can use the SYK model as a concrete example.

The bulk gates determine the bulk dynamics, but do not specify the quantum state in the bulk. Just as a system with a given fixed Hamiltonian can evolve different quantum states, to fully specify the bulk physics, we need to determine the bulk quantum state. A Gaussian state on a given Cauchy surface $\Sigma$ is fully determined by the two point functions $\left\langle \psi_{ia}(z,t)\psi_{jb}(z',t')\right\rangle$ on $\Sigma$. Since the anticommutator between bulk fermions on $\Sigma$ is canonical, the bulk Gaussian state is fully determined by the bulk commutators on $\Sigma$:
\begin{align}
\left({C_{b\Sigma}}\right)_{ia,jb}\left(z,t;z',t'\right)=\left\langle \left[\psi_{ia}(z,t),\psi_{jb}(z',t')\right]\right\rangle\label{eq: commutator}
\end{align}
The linear relationship between bulk and boundary fermions means the bulk commutators are determined by the boundary commutators and the kernels. In matrix form, we have
\begin{align}
    C_{b\Sigma}=K_\Sigma C K_\Sigma^\T\label{eq: bulk commutator}
\end{align}
where $C$ is the boundary commutator matrix defined in Eq.~\eqref{eq: boundary commutator}, and is imaginary and Hermitian. All properties of the bulk state are encoded in $C_{b\Sigma}$. For example, in Sec.~\ref{subsec: bulk entropy}, we use $C_{b\Sigma}$ to study the entanglement entropy in the bulk dual of the SYK model. 

Results \ref{result: main}, \ref{result: orthogonalization} and Eqs.~\eqref{eq: QR}, \eqref{eq: bulk commutator} summarize the key conclusions for our bulk construction. Our setup explicitly shows that the bulk theory is completely determined by the boundary two point function \eqref{eq: boundary2pt}. 

Before concluding this section, we make a few remarks about our construction. The first point is that the method of QR decomposition \eqref{eq: QR} only applies if $A$ is non-singular. If $A$ is singular, the bulk fermions along the light cones may still be obtained by iterative orthogonalization as summarized in Result \ref{result: orthogonalization}, though the \textit{number} of bulk fermion modes obtained may be smaller than $N$. For example, if $A_{jk}(\tb,\tb')$ is non-singular for time interval $[\tb_1+1,\tb_2]$ but singular for $[\tb_1,\tb_2]$, that means the corresponding linear space $\mathbb{V}_{\left[\tb_1,\tb_2\right]}$ has a dimension smaller than the number of boundary operators $(\tb_2-\tb_1+1)N$. Consequently, the number of orthonormal bulk fermion modes at position $\zb=\frac{\tb_2-\tb_1}{2},~\tb=\frac{\tb_2+\tb_1}2$ is ${\rm dim}\left(\mathbb{V}_{\left[\tb_1,\tb_2\right]}\right)-{\rm dim}\left(\mathbb{V}_{\left[\tb_1+1,\tb_2\right]}\right)$. Our construction continues to apply to increasingly longer boundary intervals, unless ${\rm dim}\left(\mathbb{V}_{\left[\tb_1,\tb_2\right]}\right)={\rm dim}\left(\mathbb{V}_{\left[\tb_1+1,\tb_2\right]}\right)$, in which case there is no bulk fermions at the corresponding bulk vertex. In that case, this point $(\zb,\tb)$ becomes a boundary point. In Sec. \ref{subsec: ETW}, we discuss an example of a coupled SYK model which corresponds to a geometry with a finite boundary at $\zb=\zb_{\rm max}$. 

As a second remark, Result \ref{result: orthogonalization} implies that the only mathematical requirement for the construction of bulk operators is a nested family of linear spaces satisfying Eq.~\eqref{eq: nesting}. This implies our construction of bulk unitary dynamics in the form of a quantum circuit is very general,
which also applies to other problems where the inner product has a different physical interpretation. The case of bosonic GFF can be studied in parallel with the fermion case, but the inner product needs to be replaced by a symplectic form. This case is discussed in Appendix \ref{app: bosonGFF}.

The third point we would like to mention is that it is not strictly necessary to assume a local quantum circuit representation for the operator dynamics \textit{a priori}. Instead, if we begin by assuming that there exists a kernel which defines bulk fermions by Eq. (\ref{eq: kernel def}), and the bulk fermions satisfy cannonical anticommutation relation on a spacelike slice, we can {\it prove} that the discretized bulk dynamics must be given by a local quantum circuit. The choice of coordinates $(z,t)=(\zb\Delta t,\tb\Delta t)$ is simply a discrete version of conformal coordinates. In 2d curved space, one can always find coordinates in which the metric is conformally flat, taking on the form $ds^2=\Omega^2(z,t)\left(dz^2-dt^2\right)$. Therefore the requirement that the kernel for bulk fermion at $(z,t)$ is supported at time interval $[t-z,t+z]$ is a gauge choice that fixes the coordinate choice of the bulk. Further discussion of the bulk metric is presented in Sec.~\ref{sec: bulk_curvature}. The canonical anticommutation relations defined by the orthogonalization procedure in Result \ref{result: orthogonalization} is sufficient to prove that the bulk fermion dynamics is given by a local Gaussian quantum circuit. Details of the proof may be found in App.~\ref{app: bulk_locality}.


\subsection{Notation and Conventions}
For convenience, here we summarize our chosen notation and conventions throughout the paper.
\begin{center}
\begin{NiceTabular}{c *{2}{>{$}c<{$}}}
    \toprule
    \RowStyle{\bfseries} Quantity & \textbf{Boundary} & \textbf{Bulk}\\\midrule
    continuous coordinates & t & (z,t) \\
    discrete coordinates & \tb & (\zb,\tb) \\
    fermionic operators & \chi_i(t) & \psi_{ia}(z,t),~a=L,R \\
    matrix of 2-point Wightman functions & G & G_b\\
    matrix of anticommutators (real part of 2-pt function) & A & A_b \\
    matrix of commutators (imaginary part of 2-pt function) & C & C_b \\
    \bottomrule
\end{NiceTabular}  
\end{center}
We shall normalize the Majorana operators such that
\begin{align}
    \{\chi_j(t),\chi_k(t)\} = \delta_{jk} \implies \chi^2_j(t) = \frac12\Id
\end{align}
Bulk circuit coordinates $(z,t)$ are discretized in units of $\Delta t/2$, where $\Delta t$ is the discretization of the boundary time and $z \equiv \zb\Delta t$ and $t\equiv \tb\Delta t$ where $\zb,\tb$ are either both integers or both half-integers. We will go between labelling operators and gates with the continuum coordinate labels $z,t$ or the discrete integer coordinates $\zb,\tb$, depending on the context. In other words, we will use the notations $\chi_j(t), \psi_{ja}(z,t), U(z,t), V(z,t)$ and $\chi_j(\tb), \psi_{ja}(\zb,\tb), U(\zb,\tb), V(\zb,\tb)$ interchangeably, depending on the context.

\section{The SYK model thermal state}\label{sec: syk}
In this section and the next, we will apply our reconstruction procedure to various SYK-type models. The SYK model consists of $N$ Majorana fermions with random $q$-body interactions,  and is of interest as a low-dimensional toy model for holography. At large $N$ and strong coupling (the latter of which is often given in the form of a low temperature limit), the SYK model acquires an approximate conformal symmetry in the IR, where its low-energy behavior is captured by (1+1)d JT gravity---a model that also arises as the effective action near the horizon of an extremal black hole. Accordingly, the SYK model exhibits many features expected of black holes, including fast scrambling and a macroscopic ground state degeneracy in its semiclassical limit. However, the dual of SYK is not a simple JT gravity theory since there is a tower of other correlation functions that are not captured by JT gravity~\cite{sarosi2017ads,gross2017all,gross2017bulk}. It is an open question whether the SYK model admits a bulk dual description beyond the low-temperature limit. Using our generalized HKLL construction, we may determine the bulk fermion operators and their dynamics from the SYK two point functions. We show that the boundary thermal state admits a dual description as free bulk fermion dynamics in (1+1)d with a horizon, even away from the low temperature limit. In the dual description, we can study the bulk quantum state and its properties such as entanglement entropy. Furthermore, we can apply our construction to the thermofield double state, and confirm that the low-temperature physics of the bulk dual is consistent with a two-sided black hole geometry. Later in Sec.~\ref{sec: coupled_syk}, we study coupled SYK models, which correspond to two-sided geometries which are modified by positive or negative energy shockwaves. For our investigations, we use the analytic results for the large-$q$, finite-temperature two point functions in the usual SYK model~\cite{maldacena2016comments} and the coupled, two-sided variant~\cite{maldacena2018eternal,lensky2021rescuing}.

\subsection{Overview of the SYK model and its generalizations}
In this subsection, we briefly review the main results of the SYK model. For key computations in the SYK model, see \cite{maldacena2016remarks,maldacena2016comments,sachdev2015bekenstein}, and for pedagogical reviews on the SYK model and its connection to holography, see \cite{sarosi2017ads,trunin2021pedagogical} and references therein.

The SYK model describes a system of $N$ Majoranas fermions $\{\chi_i\}_{i=1}^N$ which are Hermitian representations ($\chi_i^\dagger = \chi_i$) of the algebra $\{\chi_i,\chi_j\} = \delta_{ij}$. Finite-dimensional representations of the Majorana algebra may be found, and are given by square matrices of minimum dimension $2^{N/2} \times 2^{N/2}$. The SYK Hamiltonian is given by
\begin{equation}
\begin{aligned}\label{eq: syk_hamilt}
    H_\text{SYK} &= i^{q/2} \sum_{1\leq i_1< \dots< i_q \leq N} J_{i_1,\dots,i_q} \chi_{i_1} \dots \chi_{i_q}
\end{aligned}
\end{equation}
where the coupling constants $J_{i_1,\dots,i_q}$ are random real variables. One can also regard the coupling constants as the entries of a random antisymmetric tensor $J$ whose antisymmetry enforces the anticommutation relations of the Majoranas. Typically, the coupling constants are i.i.d. Gaussian variables of mean zero and variance $\sigma^2(\SJ)$ where $\SJ$ is a tunable parameter which dictates the characteristic energy scale. Following the convention in \cite{maldacena2016comments}, we will choose the variance to be
\begin{align}\label{eq:var_scaling}
    \sigma^2(\SJ) \equiv \overline{J_{i_1,\dots,i_q}^2} = \frac{2^{q-1}}{q}\frac{\SJ^2(q-1)!}{N^{q-1}}
\end{align}
where the scaling of the variance with $N$ is chosen to produce sensible large-$N$ asymptotics.

One can think of the SYK model as representing an ensemble of Hamiltonians, each with a different instantiation of the random couplings. As with any model with disorder, one typically considers the disorder-average of physical quantities like correlators. For $q \geq 4$, the model is self-averaging for many quantities, meaning that a typical instantiation of $J$ gives values close to the disorder-average. 

The SYK model is analytically treated at large $N$, which is the regime in which it classicalizes. In this limit, one can find equations for the normalized two point function,
\begin{align}
    G(t,t') = \frac{1}{N}\sum_{i=1}^N \ev{\chi_i(t)\chi_i(t')}_\beta
\end{align}
where $\ev{.}_\beta$ denotes the expectation value computed in the thermal or KMS state with inverse temperature $\beta$. The Schwinger-Dyson equations for $G(t,t')$ are given by
\begin{align}\label{eq: schwinger_dyson}
    \partial_{t}G(t,t') - \int dt'' \Sigma(t,t'')G(t'',t') = \delta(t-t') \qquad \Sigma(t,t') = \frac{\SJ^2}{q} \left(2G(t,t')\right)^{q-1}
\end{align}
where $\Sigma$ plays the role of the self-energy (i.e. sum of all one-particle irreducible diagrams) in the Feynman diagrammatic approach. Because of the time translation invariance of the bilocal fields $G$ and $\Sigma$, the first equation is often written in Fourier space, giving the pair of equations
\begin{align}
    \frac{1}{G(\omega)} = \frac{1}{G_0(\omega)}- \Sigma(\omega)\qquad \Sigma(t) = \SJ^2\left(2G(t)\right)^{q-1}
\end{align}
where $G_0(\omega)$ is the Fourier transform of the free two-point function and $t$ in the second equation is the difference between the two times in the bilocal fields. For finite $q$ and finite temperature, Eq.~\eqref{eq: schwinger_dyson} can only be solved numerically. 
In the large-$q$ limit defined by $N,q \to \infty$ with $q^2/N \to 0$, an analytic solution can be obtained for all temperatures, and is given by:
\begin{equation}
\label{eq: G_largeq}
    \begin{aligned}
        G(t) &= \frac{1}{2}\left[\frac{\cos(\pi\nu/2)}{\cos\left(\pi\nu\left(\frac{1}{2}-i\frac{t}{\beta}\right)\right)} \right]^{2/q} = \frac{1}{2}\left[\frac{\cos(\pi\nu/2)}{\cosh\left(\pi\nu\left(\frac{i}{2}+\frac{t}{\beta}\right)\right)}\right]^{2/q}\\
        \beta \SJ &= \frac{\pi\nu}{\cos(\pi\nu/2)}
    \end{aligned}
\end{equation}
Note that this two point function decays exponentially for finite $\beta$, as a consequence of thermalization. 
From the quantum mechanical perspective, this is because local $\chi$ operators are developing components on products of many single-fermion operators. As we will show later, using our holographic circuit construction of the SYK model, we can attribute a bulk geometric explanation for the decay of correlations: localized boundary fermion wavepackets propagate deeper into the bulk and escape into a bulk bath, which may be interpreted as an effective horizon.

The fact that the Majorana fermion fields are GFFs in the large $N$ limit is a consequence of the suppression of connected four-point function by $\frac 1N$. In the path integral language, the disorder-averaged partition function can be expressed as a path integral over collective fields $G(t,t')$ and $\Sigma(t,t')$, and this path integral is dominated by the saddle point in the large $N$ limit. The saddle point equations are precisely the Schwinger-Dyson equations given in Eq.~\ref{eq: schwinger_dyson}.

In the low energy limit $1\ll \beta\mathcal{J}\ll N$, the model has an approximate symmetry under reparameterization, 
\begin{equation}
    \begin{aligned}
        G(t,t') \mapsto \left[f'(t)f'(t')\right]^{\Delta} G\left(f(t),f(t')\right)\\
        \Sigma(t,t') \mapsto \left[f'(t)f'(t')\right]^{\Delta(q-1)}\Sigma\left(f(t),f(t')\right)
    \end{aligned}
\end{equation}
which is broken by the $i\omega$ term in Eq.~\eqref{eq: schwinger_dyson}. As a consequence, long wavelength fluctuations are described by the Schwarzian theory of the reparametrization field $f(\tau)$~\cite{maldacena2016remarks,kitaev2018soft}. The low temperature solution has the form
\begin{align}
    G(t) = \sgn{t}\frac{c_\Delta}{\abs{t}^{2\Delta}} \qquad c_\Delta = \left(\left(\frac{1}{2}-\Delta\right)\frac{\tan(\pi\Delta)}{\pi\Delta}\right)^{\Delta}
\end{align}
which is conformally-invariant under $\SL(2,\R)$ transformations with conformal dimension $\Delta = \frac{1}{q}$. The Schwarzian theory agrees with the 2d JT gravity, which is the grounds for the statement that the SYK model admits a holographic interpretation. The fact that JT gravity emerges as the effective dimensionally-reduced theory of the near-horizon physics of an extremal black hole then connects the physics of the SYK model to that of black holes. Indeed, the SYK model reproduces many of the features of near extremal black holes, including scrambling behavior and a macroscopic zero temperature degeneracy. Since the holographic duality of SYK model relies on the Schwarzian theory, the dual theory is not known for finite $\beta\mathcal{J}$. One key result of our approach is a bulk dual theory that applies to arbitrary finite temperature, at the level of two point functions. 

\subsection{Bulk reconstruction for the SYK thermal state}\label{sec: circuit_and_horizon}

In this subsection, we perform the reconstruction detailed in Sec.~\ref{subsec: kernel} using the thermal two point function given by Eq.~\eqref{eq: G_largeq}. In the large-$N$ limit, the two point function is diagonal in the Majorana flavor indices, {\it i.e.} $G_{ij}(t,t')=G(t,t')\delta_{ij}$. Our construction starts from the anticommutators of the SYK Majorana fermions, which is twice the real part of the Wightman function:
\begin{align}
    A_{ij}(t,t')\equiv \left\langle\left\{\chi_{i}(t),\chi_{j}(t')\right\}\right\rangle= 2\Re\left[G(t,t')\right]\delta_{ij}
\end{align}
We construct $A$ by considering a discrete set of times $t,t'=n\Delta t,~n\in \mathbb{Z}$, and then obtain the kernels $K_{f,p}$ via QR (QL) decomposition of $A^{-1/2}$, as in Eq.~\eqref{eq: QR}. Note that because the matrix $A$ is diagonal in the Majorana flavor indices, the resulting kernel matrix will also be diagonal in the flavor indices and possess identical components for each index $i$. Thus, the bulk Majorana fermions along the future and past light cones may be written as
\begin{align}
    \psi_{R(L)i}(z,t)=\sum_{t-z\leq t'\leq t+z}K_{f(p)}(z,t|t')\chi_{i}(t')
\end{align}
where the right (left) moving fermions reside on the future (past) light cones. As a consequence, the bulk dual that we will construct may be thought of as comprised of $N$ identical copies of a (1+1)d bulk Majorana theory. We will henceforth omit the flavor indices for both the bulk and boundary fermions since they play no role in the construction.

\begin{figure}[t]
    \centering
    \includegraphics[width=1\linewidth]{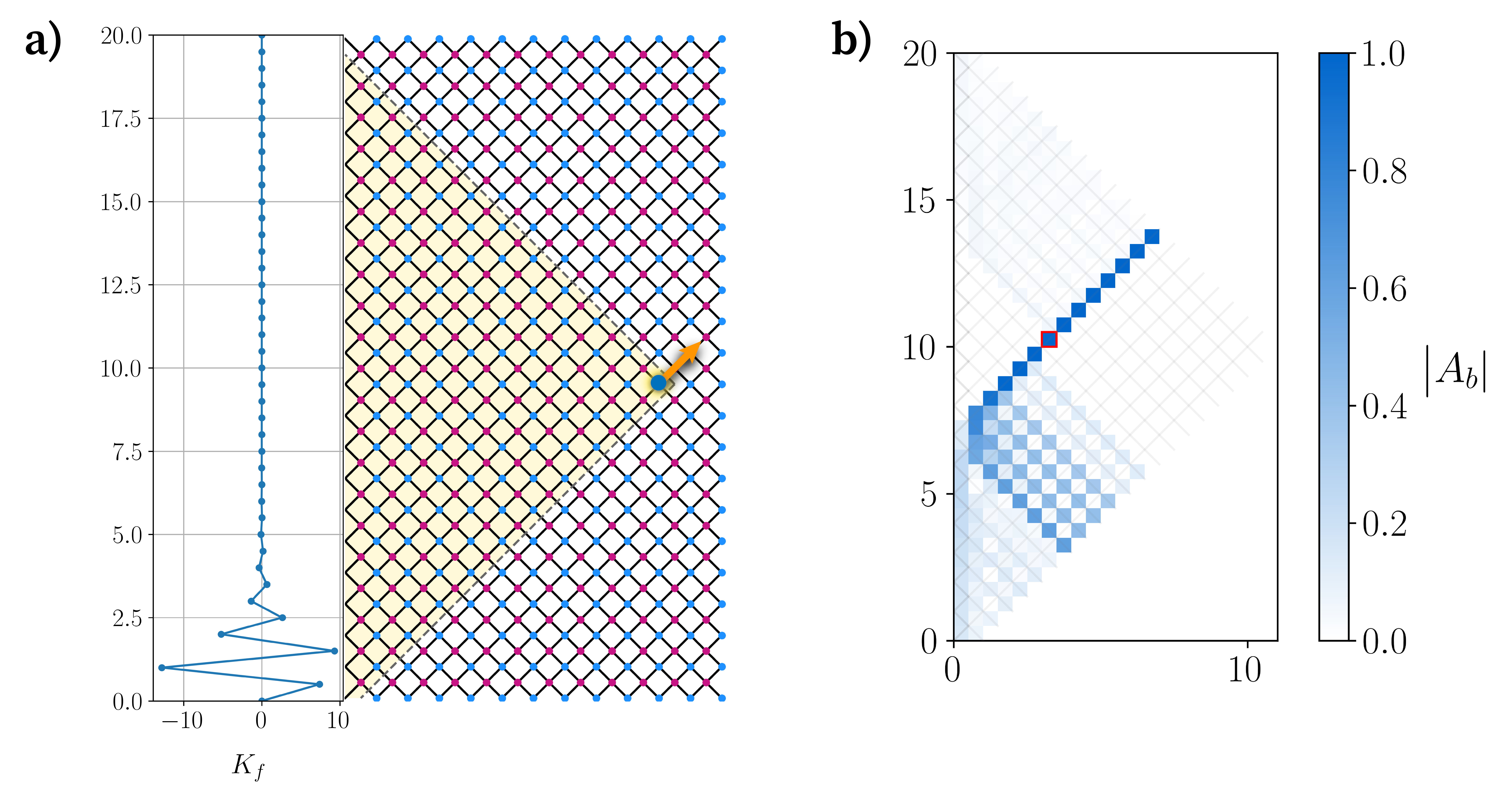}
    \caption{Bulk fermions for the thermal state of large-$q$ SYK model for $(q,\beta,J,\Delta t) = (16,10^6,1,0.5)$. a) We show the components of the kernel $K_f(z,t|t')$ for a right-moving bulk fermion at a fixed location $(z,t)$, indicated schematically on the circuit diagram by the dot and arrow. The left panel is a plot of $K_f(z,t|t')$ (horizontal axis) as a function of $t'$ (vertical axis). b) Evidence of locality as indicated by the bulk propagator \eqref{eq: bulk propagator}. The first bulk fermion in the anticommutator is a right-moving fermion fixed at the circuit leg in the red outlined square. The location of the second bulk fermion is swept over all locations in the reconstructed causal wedge. As the colormap shows, the anticommutator vanishes for spacelike-separated fermions, and is strongly supported in the light-like direction corresponding to the direction of propagation of the first bulk fermion.}
    \label{fig: kernel}
\end{figure}

An example of the kernel components for a right-moving bulk fermion $\psi_{Ri}(z,t)$ is shown in Fig.~\ref{fig: kernel}(a). The bulk fermion is a linear superposition of boundary fermions with an oscillatory coefficient, as shown in the left panel of the subfigure. Note that the kernel for the right-moving bulk fermion is mainly supported in the past end of the boundary interval $[t-z,t+z]$. Likewise, the left-mover at $(z,t)$ will be a superposition of the boundary fermions whose support is concentrated on the future end of the same interval. 
We may use the kernels to understand the bulk dynamics via the bulk-bulk anticommutator
\begin{align}
    {A_b}\left(c,z',t',d,z,t\right)=\left\langle\left\{\psi_c(z',t'),\psi_d(z,t)\right\}\right\rangle\label{eq: bulk propagator}
\end{align}
with $c,d \in \{L,R\}$. Because the bulk operators are canonically free (App.~\ref{app: bulk_locality}), ${A_b}\left(c,z',t',d,z,t\right)$ represents the probability amplitude $\psi_d(z,t)$ evolves to $\psi_c(z',t')$. Furthermore, because spacelike-separated bulk operators anticommute, $A_b$ is guaranteed to be an orthogonal matrix, with one entry indexed by $c,z',t'$ and the other entry indexed by $d,z,t$. Fig.~\ref{fig: kernel}(b) shows the absolute value of the matrix elements of $A_b$ where the first bulk fermion in the anticommutator is fixed to be a particular right-mover $c=R$ at location $(z',t')$ indicated by the pixel with red boundary. Each pixel resides at the center of a link in the circuit, so the figure captures the anticommutator with both right- and left-moving bulk fermions. The color intensity of each pixel represents the magnitude of the corresponding matrix element of $A_b$. We see that evolution to the past brings the local bulk fermion to a superposition of boundary and bulk fermions, while the evolution to the future is mainly a free motion with the speed of light ({\it i.e.} maximum speed in the quantum circuit), with some small probability of reaching other points in the future light cone. This is consistent with the existence of black hole horizon, which we will analyze below.


To understand the bulk dynamics more quantitatively, we obtain the gates in the bulk circuit. In particular, we find the orthogonal linear transformation that corresponds to the RHS of Eq.~\eqref{eq: Uzt}. For a given gate in the bulk, the input and output fermions are both expressed as linear superpositions of the boundary fermions, so that one can find the linear transformation corresponding to the gate itself. Because the flavor indices decouple in the large-$N$ SYK model, the gate corresponds to a simple $2\times 2$ linear transformation. 
A gate at vertex $(z,t)$ acts on the incoming operators $\psi_R(z-\frac{1}{2},t-\frac{1}{2})$ and $\psi_L(z,t)$, and outputs the operators $\psi_L(z-\frac{1}{2},t+\frac{1}{2})$ and $\psi_R(z,t)$. For convenience, let us call the two inputs $\psi_a$ and $\psi_b$, and the two outputs $\psi_c$ and $\psi_d$, respectively, and denote the gate of interest by $U$ (\cref{fig: gate}).
\begin{figure}[H]
    \centering
    \begin{tikzpicture}
        \node[draw,circle, minimum size=1cm] (c){$U$};
        \node [below left=.5cm of c.south west] (ll) {$\psi_a$} edge [thick] node [] {} (c.south west) ;
        \node [below right=.5cm of c.south east] (lr) {$\psi_b$} edge [thick] node [] {} (c.south east) ;
        \node [above left=.5cm of c.north west] (ul) {$\psi_c$} edge [thick] node [] {} (c.north west) ;
        \node [above right=.5cm of c.north east] (ur) {$\psi_d$} edge [thick] node [] {} (c.north east) ;
    \end{tikzpicture}
    \caption{Schematic of an arbitrary bulk gate $U$ with labeled legs.}\label{fig: gate}
\end{figure}
The gate $U$ acts linearly on the bulk operators as:
\begin{align}
    \begin{pmatrix} \psi_c\\  \psi_d \end{pmatrix}
    = \begin{pmatrix}
        U_{ca} & U_{cb}\\
        U_{da} & U_{db}
    \end{pmatrix}
    \begin{pmatrix} \psi_a\\ \psi_b \end{pmatrix}
\end{align}
Because the gate preserves the bulk operator norm ($\psi^2 = \frac{1}{2}$) and the spacelike separation of the pair of inputs and the pair of outputs ($\{\psi_a,\psi_b\} = \{\psi_c,\psi_d\}$ = 0, i.e. the operators anticommute), the gate is an orthogonal matrix. The input and output fermions are linear superpositions of boundary fermions in different time windows. We denote this by $\psi_i=K_i\chi$ with $i=a,b,c,d$. $K_i$ are corresponding row vectors of the kernel matrix \eqref{eq: kernel_matrix}. Thus the orthogonal matrix $U$ can be determined from $K_a,K_b$:
\begin{align}
    \begin{pmatrix} \text{------}\; K_c \;\text{------}\\~\\ \text{------}\;K_d \;\text{------} \end{pmatrix} \begin{pmatrix} | \\ \chib\\ |\end{pmatrix} \crel{=} \begin{pmatrix}
        U_{ca} & U_{cb} \\ U_{da} & U_{db}
    \end{pmatrix}\begin{pmatrix} \text{------}\;K_a\;\text{------}\\~\\ \text{------}\;K_b\; \text{------} \end{pmatrix}\begin{pmatrix}| \\ \chib\\ |\end{pmatrix}\nonumber \\
    \crel{\Downarrow}\nonumber \\
    \begin{pmatrix}\label{eq: circuit_gate}
        U_{ca} & U_{cb} \\ U_{da} & U_{db}
    \end{pmatrix} \crel{=} \left(\begin{array}{cc}K_c^\T AK_a&K_c^\T AK_b\\ K_d^\T AK_a&K_d^\T AK_b\end{array}\right)
\end{align}
In a generic system, the kernel needs to be separately obtained for the four fermions $a,b,c,d$, but for the thermal state the computation can be simplified due to time translation invariance of the state. The time translation invariance also implies time reflection symmetry of the matrix $A(t,t')$, since $G^*(t,t')=G(t',t)$ and thus $A(-t,-t')=A(-t',-t)=A(t,t')$. Therefore, for the time translation invariant state, the kernels $K_p$ and $K_f$ are related by time reflection: $K_p(z,t|t')=\pm K_f(z,t|2t-t')$. The $\pm$ sign is an ambiguity, which does not affect the canonical anticommutation relations of the bulk fermions on $\Sigma_{f,p}$. However, the gauge choice for this sign must respect fermion parity conservation. In general, for $2M$ Majorana fermions $\psi_k,k=1,2,...,2M$, if we implement an orthogonal transformation $\tilde{\psi}_k=\sum_lO_{kl}\psi_l$, with $O^\T O=\Id$, then the associated fermion parity operators $F=i^M\psi_1\psi_2...\psi_{2M}$ and $\tilde{F}=i^M\tilde{\psi}_1\tilde{\psi}_2...\tilde{\psi}_{2M}$ are related by $\widetilde{F}={\rm det}\left(O\right)F$. A physical gate acting on fermions should always preserve the fermion parity number, which means we should require ${\rm det}\left(U(z,t)\right)={\rm det}\left(V(z,t)\right)=1$. Therefore, $U,V$ are $2\times 2$ special orthogonal matrices which may be parametrized as
\begin{equation}
\begin{aligned}\label{eq: gate 2by2}
    U(z,t) &= \begin{pmatrix}
        \cos(\theta_U(z)) & -\sin(\theta_U(z))\\
        \sin(\theta_U(z)) & \cos(\theta_U(z))
    \end{pmatrix}\\[1em]
    V(z,t) &= \begin{pmatrix}
        -\cos(\theta_V(z)) & -\sin(\theta_V(z))\\
        \sin(\theta_V(z)) & -\cos(\theta_V(z))
    \end{pmatrix}
\end{aligned}
\end{equation}
where the gates $U(z,t)$ and $V(z,t)$ each depend on a single parameter $\theta_U(z)$, $\theta_V(z)$. The gate angles $\theta_{U,V}(z)$ are plotted in Fig. \ref{fig: gate_angle}(a). It can be shown that this gauge choice can be realized by requiring $K_f$ to always have positive diagonal components, and requiring $K_p$ to have an alternating sign defined by
\begin{align}
    K_p(z,t|t')=(-1)^{2z}K_f(z,t|2t-t')
\end{align}

\begin{figure}[ht]
    \centering
    \includegraphics[width=\linewidth]{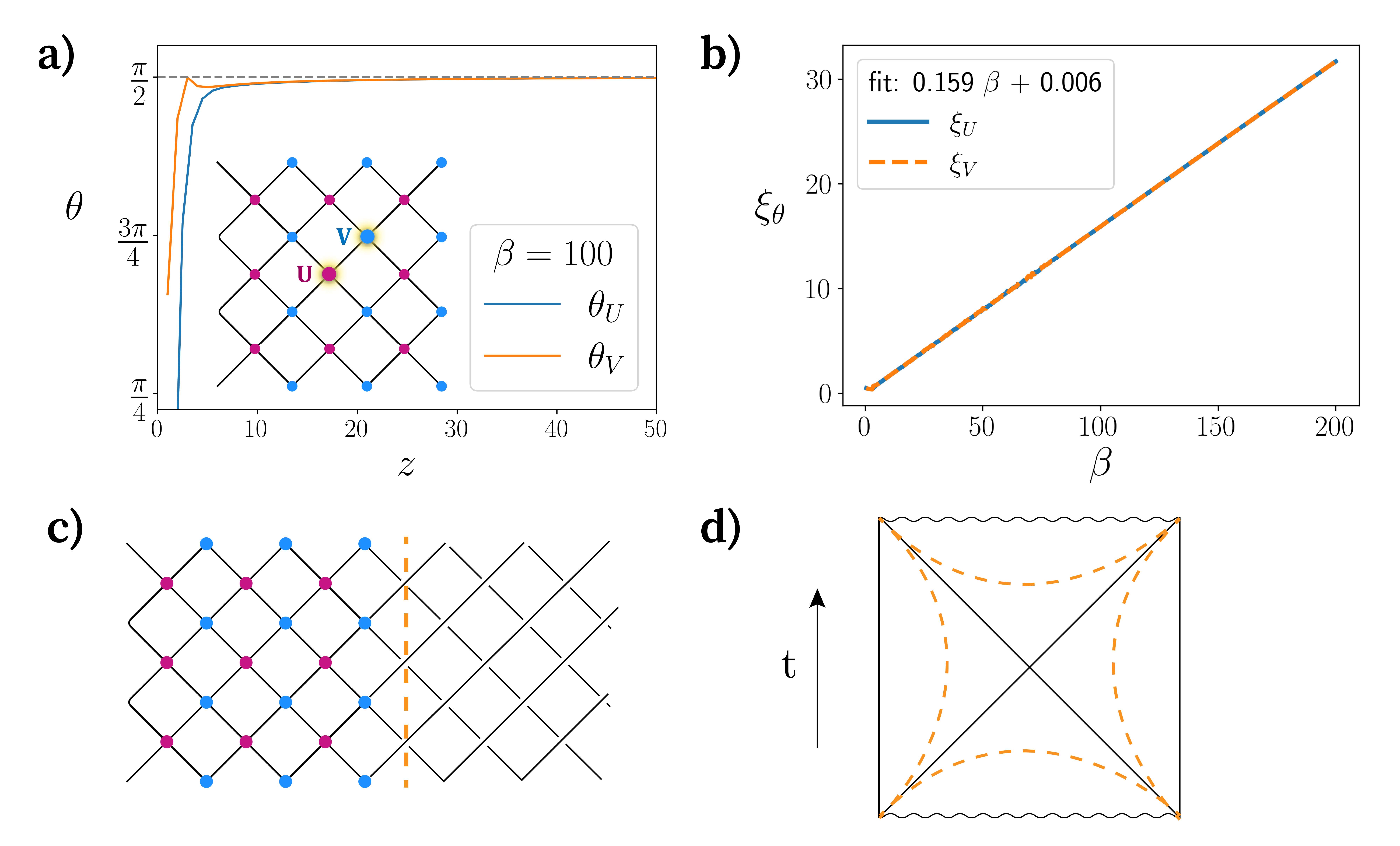}
    \caption{Depth dependence of the gate angles of the bulk circuit and the emergence of an effective horizon. The data shown is obtained for $(q,\SJ,\Delta t)=(22,1,0.5)$. a) The gate angles $\theta_U,\theta_V$ defined in Eq.~\eqref{eq: gate 2by2} both quickly approach $\frac{\pi}{2}$ at large $z$. b) The characteristic depth $\xi$ that controls the exponential decay $\frac{\pi}2-\theta_{U,V}\propto e^{-z/\xi_\theta}$, which depends linearly on $\beta$ at low temperature. c) A schematic of the bulk circuit dual to the thermal state. The region $z\gg \xi_\theta$ demarcated by the orange dashed line has $\theta_U,\theta_V\simeq \frac{\pi}2$, which means the left-movers and right-movers decouple. Note that this is consistent with Fig. \ref{fig: kernel}(b). d) A schematic of the maximally-extended, two-sided Penrose diagram for an eternal AdS$_2$ black hole. The $z\gg \xi$ regions in subfigure (c) corresponds to the near horizon region in the Penrose diagram. }
    \label{fig: gate_angle}
\end{figure}

\subsection{Emergence of black hole horizon}
Due to time translation invariance of the thermal state correlation functions, the $U$ (half-integer) and $V$ (integer) gates only depend on the $z$ coordinate. As shown in Fig.~\ref{fig: gate_angle}(a) and (b), we find that upon increasing $z$, the gate angle for both $U$ and $V$-type gates exponentially approaches $\frac{\pi}{2}$:
\begin{align}
    \theta_\beta(z) \sim \frac{\pi}{2}-e^{-z\Delta t/\xi_\theta}
\end{align}
with a characteristic length $\xi_\theta\propto \beta$ at low temperature. As a consequence, in the region $z\gg \xi$ each gate is well-approximated by a swap gate, as is illustrated in Fig. \ref{fig: gate_angle}(c). In this region, the right-moving fermion modes decouple from the left-moving ones, which means the right-moving fermion will "fly freely" for ever, with the deep bulk interior playing the role of a reflectionless geometric bath\footnote{In the many-body notation, the bulk fermion operators in the swap region satisfy $\psi_R(z,t)=\psi_R\left(z+\frac12,t+\frac12\right)$, $\psi_L(z,t)=-\psi_L\left(z-\frac12,t+\frac12\right)$ for the gauge choice we made.}. This explains why in Fig. \ref{fig: kernel}(b), the right-moving fermion propagates along the light cone, almost without dispersion.

The existence of a reflectionless bath region is a direct consequence of the decay of the two point function. As long as the two point function does not have a recurrence at long times, the reflectionless bath region is semi-infinite. Physically, this is because the bulk fermions are obtained by a constrained orthogonalization procedure. The right-mover $\psi_R(z,t)$ is obtained by taking boundary fermion $\chi(t-z)$ and orthogonalizing it with all $\chi(t')$ in the range $(t-z,t+z]$. By comparison, $\psi_R\left(z+\frac12,t+\frac12\right)$ (the next fermion along the lightlike direction of propagation) is obtained by orthogonalization of $\chi(t-z)$ with $\chi(t')$ in the range $(t-z,t+z+1]$. Therefore, $\psi_R\left(z,t\right)$ and $\psi_R\left(z+\frac12,t+\frac12\right)$ differ only by an extra orthogonalization with $\chi(t+z+1)$. If the boundary time separation $2z+1$ is much larger than the thermal correlation length, $\chi(t-z)$ and $\chi(t+z+1)$ are almost already orthogonal, so $\psi_R\left(z,t\right)$ and $\psi_R\left(z+\frac12,t+\frac12\right)$ are almost the same. Analogous reasoning holds for $\psi_L(z,t)$ and $\psi_L\left(z+\frac12,t-\frac12\right)$. 

The presence of a reflectionless bath is also consistent with the holographic description of the thermal equilibrium state of SYK as an eternal black hole. In this interpretation, the thermal correlators of the SYK model at low temperatures are dual to the thermal correlators of JT gravity for an AdS black hole. The region $z\gg \xi_\theta$ corresponds to the region near the black hole horizon, as illustrated in Fig. \ref{fig: gate_angle}(d). The decoupling between left-movers and right-movers can be viewed as a consequence of the gravitational redshift near the horizon. When a bulk fermion falls into the black hole, the proper time it spends in the near-horizon region is short, even if the corresponding asymptotic boundary time (in the exterior coordinate system) is infinite. The free-flying, right-moving fermion modes correspond to the infalling modes crossing the horizon. More quantitative analysis of the bulk geometry will be presented in Sec.~\ref{sec: bulk_curvature}. 

Importantly, our approach clarifies that for generalized free fields the concepts of a black hole horizon and horizon modes are well-defined for all systems with exponentially decaying two point functions, even if these systems do not have a holographic dual in the usual sense (such as the finite temperature SYK model). This observation suggests that holographic duality between chaotic models and black hole geometries may occur more generally than one would conservatively expect from the AdS/CFT correspondence. However, it should be noted that our approach does not directly distinguish the black hole horizon from other causal horizons such as an AdS Rindler horizon. If we consider the state in the limit $\beta\mathcal{J}\rightarrow +\infty$, the boundary two point function has a power-law decay behavior, so the corresponding gate angles $\theta_U,\theta_V$ approach $\frac{\pi}2$ in large $z$ but with a deviation that decreases as a power law of $z$. In this case, the bulk fermion also has a reflectionless bath, which corresponds to the Poincare horizon in AdS$_2$ space.

Now let us further discuss the boundary interpretation of the horizon modes. The fact that the gates approach $\theta=\frac{\pi}2$ at large $z$ guarantees that the right-movers (left-movers) converge to a well-defined mode at large bulk depths. For instance, the future horizon modes are well-defined as the following limit of the right-moving fermion operators:
\begin{align}
    \psi^f(u)&\equiv\lim_{z\rightarrow \infty}\psi_R(z,z+u) \qquad \text{with }u = n\Delta t
\end{align}
where $u=0$ corresponds to the furthest interior point on the future light cone of the causal wedge, and $z=\zb\Delta t$ with half-integer $\zb$. Note that the horizon modes are only well-defined at discrete points $u$ along the light cone. Recall that for different $z$, $\psi_R(z,z+u)$ is a superposition of boundary fermions in the interval $[u,u+2z]$ where the coefficients are given by the kernel $K_f(z,z+u|\tau+u)$ and $\tau \in [0,2z]$ labels the time relative to the first time in the interval. For large $z$, the kernel is mostly supported near the initial time $u$ with a spread of $\xi_\theta$, and thus converges to a $z$-independent "wavepacket", as illustrated earlier in Fig. \ref{fig: kernel}(a). The bulk fermions on the future horizon can be expressed in terms of the boundary fermions as
\begin{align}
\psi^f(u)&=\sum_{\tau=0}^{+\infty}W(\tau)\chi(\tau+u),\qquad W(\tau)=\lim_{z\rightarrow \infty}K_f(z,z-\tau)
\end{align}
where we have written the kernel as $K_f(z,z-\tau)=K_f(z,z+u|\tau+u)$ since it only depends on the time difference $z-\tau$ due to time translation symmetry. The past horizon modes can be analogously obtained by $\psi^p(u)=\lim_{z\rightarrow \infty}\psi_L(z,z-u)$, and are supported on the boundary time interval $(-\infty,u]$. The future and past horizon modes for the SYK fermions are discussed in Appendix A and Ref.~\cite{gu2022two}.

To understand the physical interpretation of the horizon modes, note that
\begin{align}
    \left\{\psi^f(u),\chi(t)\right\}=0,~\forall t>u\label{eq: horizon anticommutator}
\end{align}
which follows from the orthogonalization procedure. Consider a boundary probe, which is a generic bosonic operator $O \equiv \chi_{j_1}(t_1)\chi_{j_2}(t_2)\dots \chi_{j_{2n}}(t_{2n})$ with $t_k > u \forall k=1,2,...,2n$. Eq. \eqref{eq: horizon anticommutator}) implies that
\begin{align}
 \label{eq: horizon_comm}
    \ev{[\psi^f(u),O]} &= 0
\end{align}
Therefore, $\psi^f(u)$ is {\it invisible} for arbitrary boundary measurements made after time $u$ and involving a finite number of fermions. In fact, for any given time $u$, all future horizon modes $\psi^f(u')$ with $u'\leq u$ commute with a generic operator $O$ at $t>u$, so the same must true for a product of such operators. As a consequence, there is an infinite family of quantum states
\begin{align}
\sigma\left(u_1,u_2,...,u_\ell\right)=\Bigr(\psi^f\left(u_1\right)\psi^f\left(u_2\right)...\psi^f\left(u_\ell\right)\Bigr)\rho_{\rm th} \Bigr(\psi^f\left(u_\ell\right)\psi^f...\left(u_2\right)\psi^f\left(u_1\right)\Bigr)\end{align}
with $u_j\leq u~\forall j=1,2,...,\ell$ which are all indistinguishable from the thermal equilibrium state $\rho_{\rm th}$ after time $u$:
\begin{align}
    {\rm tr}\left[O\left(\sigma-\rho_{\rm th}\right)\right]=0
\end{align}
The above is an explicit consequence of the fact that the operator algebra for all boundary operators after a certain time (or more generally, for the spacetime region in the future of a Cauchy surface) for GFFs depends nontrivially on the Cauchy surface~\cite{leutheusser2021emergent}.

Physically, the invisibility of the horizon modes with respect to operators after a certain time is equivalent to the statement that the information carried by the horizon modes is sent into the interior of the black hole and cannot be recovered. Of course this is only true in the large-$N$ limit. At finite $N$, the commutator in Eq.~\ref{eq: horizon_comm} acquires a finite value $\propto \frac1N$, which allows infalling qubits to be reconstructed by the boundary observer. 


\subsection{Bulk state and entanglement entropy}
\label{subsec: bulk entropy}
As mentioned in Sec.~\ref{subsec: kernel}, the bulk state may be characterized using the reconstructed bulk operators. Since the bulk state is Gaussian, the particular state and all of its correlation functions are determined by the bulk two point functions. Let us temporarily denote the bulk fermions on a Cauchy surface $\Sigma$ as $\psi_\Sigma(z)$ since no two fermions on the same Cauchy surface will have the same $z$. Because the $\psi_\Sigma(z)$ are spacelike-separated, the anticommutators vanish independent of the bulk state, and the two point functions on $\Sigma$ reduce to the bulk commutators:
\begin{align}
    C_{b\Sigma}(z,z')&\equiv\left\langle\left[\psi_\Sigma (z),\psi_\Sigma (z')\right]\right\rangle
\end{align}
Given $\psi_\Sigma=K_\Sigma \chi$, we can express $C_{b\Sigma}$ in terms of the boundary commutators:
\begin{align}
    C_{b\Sigma}&=K_\Sigma CK_\Sigma^\T
\end{align}
where $C(t,t')=\left\langle\left[\chi(t),\chi(t')\right]\right\rangle$. The bulk correlators $C_{b\Sigma}(z,z')$ may be analyzed to extract properties of the bulk state.

\begin{figure}[ht]
    \centering
    \includegraphics[width=1\linewidth]{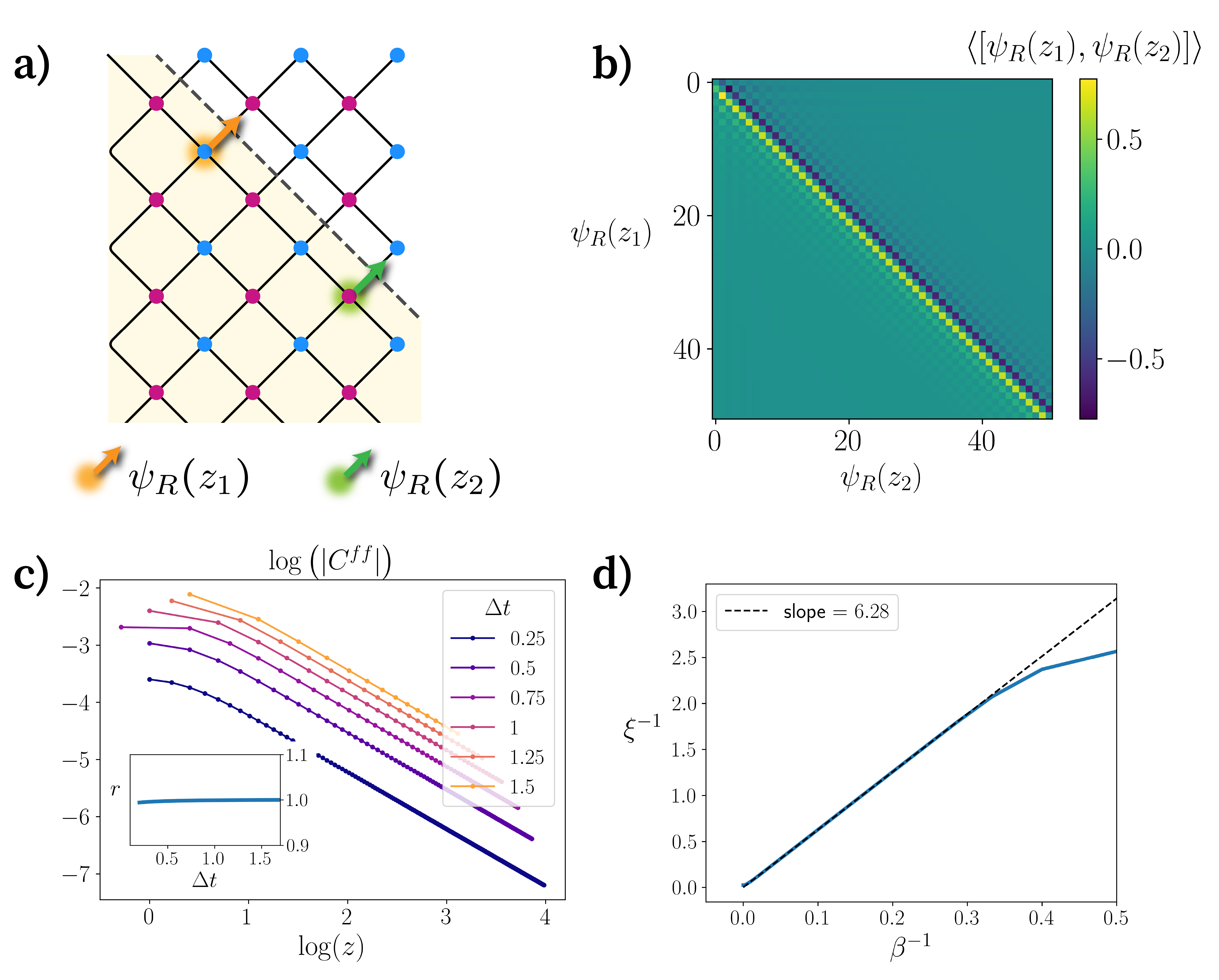}
    \caption{Bulk correlators for $(q,\SJ)=(22,1)$. a) Schematic of a pair of bulk fermions $\psi_R(z_1)$ and $\psi_R(z_2)$ on a Cauchy slice (dashed line) used to study commutators. We chose the Cauchy slice to be the future light cone, and denote the bulk commutators along this light cone by $C^{ff}(z_1,z_2)$. b) Bulk commutators along a Cauchy slice shows evidence of locality in the correlations in the state. Color plot shown is for $\beta=10^5$, $\Delta t = 0.2$. c) Spatial dependence of the bulk commutator on $z\equiv z_2-z_1$ for fixed $\beta = 10^5$. We observe a power law decay of correlations $|C^{ff}|\sim (z\Delta t)^{-r}$, with a constant power law exponent $r\sim 1$ in the continuum limit as shown in the inset. d) For fixed $\Delta t = 0.25$, we fit an exponential to the decay of correlators deeper in the bulk and observe a correlation length $\xi$ that is proportional to $\beta$ with a slope that is approximately $2\pi$, as expected from Eq.~\eqref{eq: G_largeq}.    
    }
    \label{fig: bulk_comm}
\end{figure}

For simplicity, we study the bulk correlators for the right-moving fermions along the future light cone $\Sigma_f$, as is illustrated in Fig.~\ref{fig: bulk_comm}(a). The bulk commutator matrix is thus $C_b=K_fCK_f^\T$. As shown in Fig. \ref{fig: bulk_comm}(b), the correlation function decays with the distance between bulk points. For low temperature ($\beta = 10^5$), the commutator decays like a power law with an exponent that remains fixed in the continuum limit $\Delta t\to 0$, as demonstrated in Fig.~\ref{fig: bulk_comm}(c). This is consistent with the expected power law decay of the boundary two point function at low temperatures. For small, finite $\beta$, the boundary correlators decay approximately exponentially with a correlation length that behaves as $\xi=\frac{\beta}{2\pi}$, as seen from Eq.~\ref{eq: G_largeq}. Fig.~\ref{fig: bulk_comm}(d) shows that the relationship between $\xi^{-1}$ and $\frac1\beta$ is indeed approximately linear, with a slope of $2\pi$. The small flat region at low temperatures is indicative of the power law behavior observed in Fig.~\ref{fig: bulk_comm}(c), while the deviation at high-temperatures is a reflection of the $\nu$-dependence of the decay of the boundary two point function in Eq.~\ref{eq: G_largeq}, which becomes relevant when $\beta^{-1} \sim \Delta t$.


We may further characterize the bulk state by computing its entanglement entropy. For the Gaussian bulk state, the von Neumann entanglement entropy of a bulk subregion is determined by the commutator matrix $C_b$ for that region \cite{peschel2009reduced,peschel2003calculation,araki1970quasifree}:
\begin{align}
    S_{vN}(\rho)&= -\frac12\Tr\left(\frac{1 + C_b}{2}\ln\frac{1 + C_b}{2} + \frac{1 - C_b}{2}\ln\frac{1 - C_b}{2}\right)
\end{align}
Physically, the eigenstates of $C_b$ correspond to complex fermion modes built out of pairs of Majorana fermions, and the eigenvalues of $C_b$ determine their expected occupation numbers. Because $C_b$ is imaginary and Hermitian, its eigenvalues come in pairs $\pm \lambda_i$, with $\lambda_i\geq 0$. The probability of having occupation number $n_i=1$ for the $i^\text{th}$ mode is $p_i=\frac{1-\lambda_i}2\in[0,1]$. The entropy of a bulk subregion is a sum of the entropies of each complex fermion degree of freedom, and essentially counts the number of complex fermions in the subregion.

We numerically study the entropy of a bulk spacelike interval that ends at the boundary. Since states related by a unitary transformation have the same entropy, it suffices to study the entropy of an interval along the future light cone, as indicated by the purple line in Fig.~\ref{fig: entropy}(a). As a technical aside, the bulk entropy can be computed without explicitly obtaining the associated kernel $K$. Given that $K$ satisfies $K=QA^{-1/2}$ with $QQ^\T=\Id$, we can define a "rotated" version of the bulk commutators
\begin{align}\label{eq:entropy_0z}
    \Ct_b=Q^\T C_bQ=A^{-1/2}CA^{-1/2}
\end{align}
and compute the von Neumann entropy from $\Ct_b$ directly, which only involves the boundary quantities $A^{-1/2}$ and $C$. To compute the entropy of a bulk interval along $\Sigma_f$ with spatial coordinates in $[0,z]$, we sample the boundary time interval $[-z,+z]$ with an integer number $n_t$ of points, which results in discretization $\Delta t = \frac{2z}{n_t-1}$. The entropy $S(z,\Delta t)$ is computed using Eq.~\eqref{eq:entropy_0z} and the results are summarized in Fig.~\ref{fig: entropy}. As seen in Fig.~\ref{fig: entropy}(b), for a fixed $z$, $S(z,\Delta t)$ approaches a finite value as $\Delta t\to 0$. We estimate the continuum limit value for the entropy of the interval by parametrizing the small $\Delta t$ regime by a quadratic function $S(z,\Delta t)=S_0(z)+\gamma_1\Delta t+\gamma_2\Delta t^2$. For sufficiently low temperatures, $S_0(z)$ depends linearly on $z$ as shown in Fig.~\ref{fig: entropy}(d), which means the bulk entropy along the light cone satisfies a "volume law" in the particular continuum limit considered (finite total boundary time with $\Delta t \to 0$). Importantly, the fact that the entropy appears to have a finite continuum limit suggests that the continuum limit of the circuit is qualitatively different from a quantum field theory, which should have UV-divergent entropy.

\begin{figure}[h]
    \centering
\includegraphics[width=0.9\linewidth]{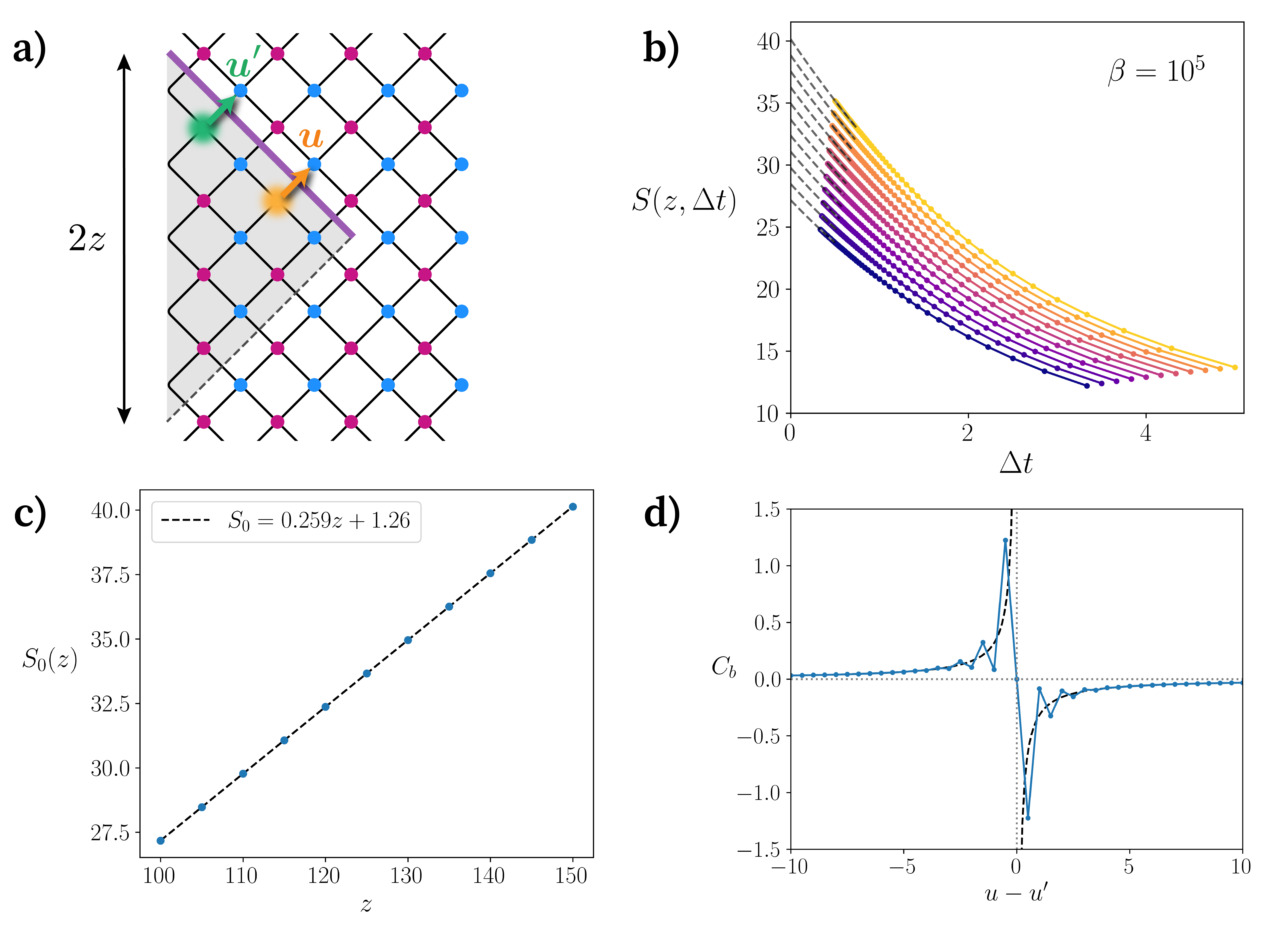}
    \caption{
    Entanglement entropy for a bulk subinterval for $(q,\SJ,\beta) = (22,1,10^5)$. a) The von Neumann entanglement entropy is computed for a subinterval along the future light cone of a fixed total boundary time interval of duration $2z$. b) The entropy $S$ is computed as a function of $\Delta t$ for different values of $z$ in the range $[100,150]$. The colors get darker for ascending $z$ values. The dashed line shows a quadratic fit of $S$ versus $\Delta t$ as $\Delta t \to 0$. c) The continuum limit entropy $S_0(z)$ extrapolated to $\Delta t=0$ as a function of $z$ for $\beta \sim 10^5$. The black dashed line is the linear fit for this value of $\beta$. d) The blue data points show the scaled future light cone correlator $C_b(n\Delta t,n'\Delta t')$ divided by $\Delta t$ for pairs of points on the future light cone for a fixed boundary interval $[-150,150]$ for $\Delta t = 0.5$. In the continuum limit of the near horizon region, we anticipate that $C_b(n\Delta t,n'\Delta t)$ approaches a continuum function $C_b(u,u')$, with the blue points representing samples $u$ whose coordinates are given by well-defined fractions of the total boundary length $2z$. The data shown is for the first fermion fixed at $u\equiv 2t = 240$, which is well into the near horizon region for $\beta=10^5$. The black dashed line is the analytic continuum theory result in \cref{eq:horizon_correlator_continuous}.}
    \label{fig: entropy}
\end{figure}

This conformal coordinate volume law can be understood as a consequence of the universal behavior of $\Ct_b$ deep in the bulk, and the existence of a "near-horizon" regime. For sufficiently large $z$, the bulk fermions along discrete points along the future light cone approach the horizon modes discussed in the previous subsection. For these horizon modes, discrete translation of the bulk coordinate along the light cone corresponds to discrete time translation of the boundary wavepacket representations. In the near-horizon region, the bulk two point functions along the future light cone are approximately lattice translation invariant with lattice constant determined by $\Delta t$. If the near-horizon region deep in the bulk admits a continuum description, then the horizon modes are parameterizable by a continuum light-cone coordinate $u$ and their correlators approach a continuum function $C_b(u,u')$ which is translation invariant in the continuous sense. This is equivalent to the condition that the bulk is continuous near the horizon. Our numerical observations show that this is approximately true. Note that the volume law in Fig.~\ref{fig: entropy}(c) is obtained by first taking the continuum limit $\Delta t\to 0$ for a \textit{fixed} boundary time interval $[-z,z]$, and then the taking the limit $z \to \infty$. In this continuum limit, the matrix $\Ct_b$ defined by \cref{eq:entropy_0z} approaches a continuum function $\Ct_b(t,t)$ supported on $[-z,z]$. Nominally, this function is not time-translation invariant, as the matrix product and its continuum generalization as an integral is supported only over a finite interval. However, we find that for any $\Delta t$, the matrix $\Ct_b$ gives correlators that track well with the translation invariant correlators for a KMS state (\cref{fig: entropy}(d)).

It turns out the boundary quantity $\Ct_b(t,t')$ is given by a universal expression for thermal states when $t$ is allowed to range over $(-\infty,\infty)$. The continuum functional analog of \cref{eq:entropy_0z} implies that $\Ct_b(t,t')$ is given by a convolution integral, and thus, is a translation invariant function. Accordingly, we can consider its Fourier decomposition $\Ct_b(\omega)$ which satisfies
\begin{align}
    \Ct_b(t-t')&=\int_{-\infty}^{+\infty} \frac{d\omega}{2\pi} \Ct_b(\omega)e^{i\omega(t-t')}\nonumber\\
    \Ct_b(\omega)&=\frac{C(\omega)}{A(\omega)}\label{eq:horizon correlator}
\end{align}
where $A(\omega)$ and $C(\omega)$ are the Fourier transforms of $A(t-t')$ and $C(t-t')$, respectively. The right-hand side of Eq.~\eqref{eq:horizon correlator} is entirely universal for thermal states\footnote{We acknowledge Yingfei Gu for pointing out this to us.} as determined by the KMS condition~\cite{kubo1957statistical,martin1959theory}, and is given by:
\begin{align}
    \frac{C(\omega)}{A(\omega)}=\tanh\frac{\beta\omega}2
\end{align}
Performing the inverse Fourier transform back into the time domain gives $\Ct_b(t-t')$. Identifying this with $C_b(u-u')$ implies that the two point function for the future horizon modes $C_b(u,u')$ in the continuous limit $\Delta t\rightarrow 0$ has the universal form
\begin{align}
\left\langle\left[\psi^f(u),\psi^f(u')\right]\right\rangle=-\frac{i2\pi}{\beta\sinh\frac{\pi(u-u')}{\beta}}\label{eq:horizon_correlator_continuous}
\end{align}
where we recall the definition of a future horizon mode as a limit of a right-moving fermion, $\psi^f(u)=\lim_{z\rightarrow \infty}\psi_R(z,u+z)$. Thus, the frequency $\omega$ on the boundary is the conjugate variable to the horizon coordinate $u$. Accordingly, the bulk horizon modes have occupation number $n(\omega)=\frac{1}{2}\left( \tanh\frac{\beta\omega}2\right)=\left(e^{\beta\omega}+1\right)^{-1}$, which precisely agrees with that of a chiral Majorana fermion of momentum $\omega$. In other words, given an {\it arbitrary} Majorana fermionic GFF on the boundary, so long as the boundary is in thermal equilibrium such that KMS condition holds, a bulk fermion at the horizon is always equivalent to the thermal state of a chiral Majorana fermion with linear dispersion and inverse temperature $\beta$.

This in turn implies that the entropy of a bulk interval can be computed using the conformal field theory formula~\cite{calabrese2009entanglement}:
\begin{align}
    S_A=\frac 1{12}\log\left(\frac{\beta}{\pi}\sinh\frac{\pi z}{\beta}\right)+{\rm const}
\end{align}
(Note that the coefficient is $\frac{c}6$ rather than $\frac{c}3$ because the theory is a chiral Majorana fermion, with $c=\frac12$.) For a region with length $z\gg \beta$ we obtain
\begin{align}
    S_A\simeq \frac{\pi z}{12\beta}+{\rm const.}
\end{align}
Numerically, we observe a linear $z$ dependence in the entropy, but the coefficient does not match the expected value of $\frac{\pi}{12\beta}$. This discrepancy is likely a consequence of the discretization, but a more quantitative explanation is unknown to us. 

\subsection{Geometry and Curvature}\label{sec: bulk_curvature}

We have constructed a bulk Gaussian quantum circuit which possesses the causal structure expected of a local bulk theory. However, one may wonder if it is possible to extract a more refined notion of bulk locality from the circuit representation. 
To address this question, we can assume that in the small $\Delta t$ limit the circuit dynamics is equivalent to a non-interacting Majorana fermion field theory in curved space, and then investigate the bulk geometry based on this assumption.  
While in general, quantum circuits do not inherently carry a notion of geometry, in the (1+1)d case, we can actually \textit{deduce} the bulk geometry by comparing the circuit dynamics to the discretized dynamics of free Majorana fermions on a general (1+1)d spacetime. This comparison is facilitated by the fact that any two-dimensional metric is conformally-equivalent to the flat-space Minkowski metric, which implies that one can always find coordinates $(t,z)$ such that the metric takes on the following form:
\begin{align}
    ds^2 = \Omega^2(t,z) \left(-dt^2 + dz^2\right)\label{eq: general metric}
\end{align}
where $\Omega(t,z)$ is called the conformal factor. It is natural to compare the discretized circuit coordinates $t,z$ with the conformal coordinates in the equation above, since they both have light cones at $t=\pm z$. The conformal form of metric does not completely fix the choice of bulk coordinate system, since we can define $u=t+z$ and $v=t-z$ and any coordinate change $u=f(\Ut),~v=g(\Vt)$ will preserve the conformal form of the metric. However, such a conformal transformation will also change the boundary metric. If the boundary is defined by taking $z\rightarrow 0$ uniformly, and the boundary metric component $g^{00}=1$ is fixed by the boundary dynamics, then there is a unique coordinate choice which corresponds to a unique $\Omega(t,z)$, up to a trivial rescaling $t\rightarrow \lambda t$, $z\rightarrow \lambda z$, $\Omega\rightarrow \lambda^{-1}\Omega$. More details of the relation between bulk and boundary metric and the comparison with quantum circuit at $z\rightarrow 0$ are discussed in Appendix \ref{app: circuit_geo}. As an example, the AdS Rindler wedge is expected to be the dual description of the $(0+1)$d thermal state at low temperatures has a metric with conformal factor
\begin{align}
    \Omega(t,z)=\frac{2\pi}{\beta\sinh\frac{2\pi z}{\beta}}
\end{align}
where we have defined the coordinates such that the time periodicity is $\beta$ when the metric is Wick rotated to imaginary time. 

Free Majorana fermion dynamics on a fixed curved background with the metric in \eqref{eq: general metric} is described by the action
\begin{align}
    \mathcal{A}=\int dtdz\left[\psi_R\left(i\partial_t+i\partial_z\right)\psi_R+\psi_L\left(i\partial_t-i\partial_z\right)\psi_L-2im\Omega(t,z)\psi_L\psi_R\right]
\end{align}
Note that the conformal factor $\Omega(t,z)$ only appears in the mass term, since the kinetic energy terms are simply the action of a massless Majorana fermion, which is conformally invariant. Canonical quantization leads to the following Hamiltonian
\begin{align}
    H(t)=\int dz\left[\psi_R\left(-i\partial_z\right)\psi_R-\psi_L \left(-i\partial_z\right)\psi_L+2im\Omega(t,z)\psi_L\psi_R\right]\label{eq: Majorana_Hamiltonian}
\end{align}
where $\psi_{L,R}$ satisfies the canonical anticommutation relation $\left\{\psi_a(z,t),\psi_b(z',t)\right\}=\delta_{ab}\delta\left(z-z'\right)$.

\begin{figure}[t]
    \centering
\includegraphics[width=0.5\textwidth]{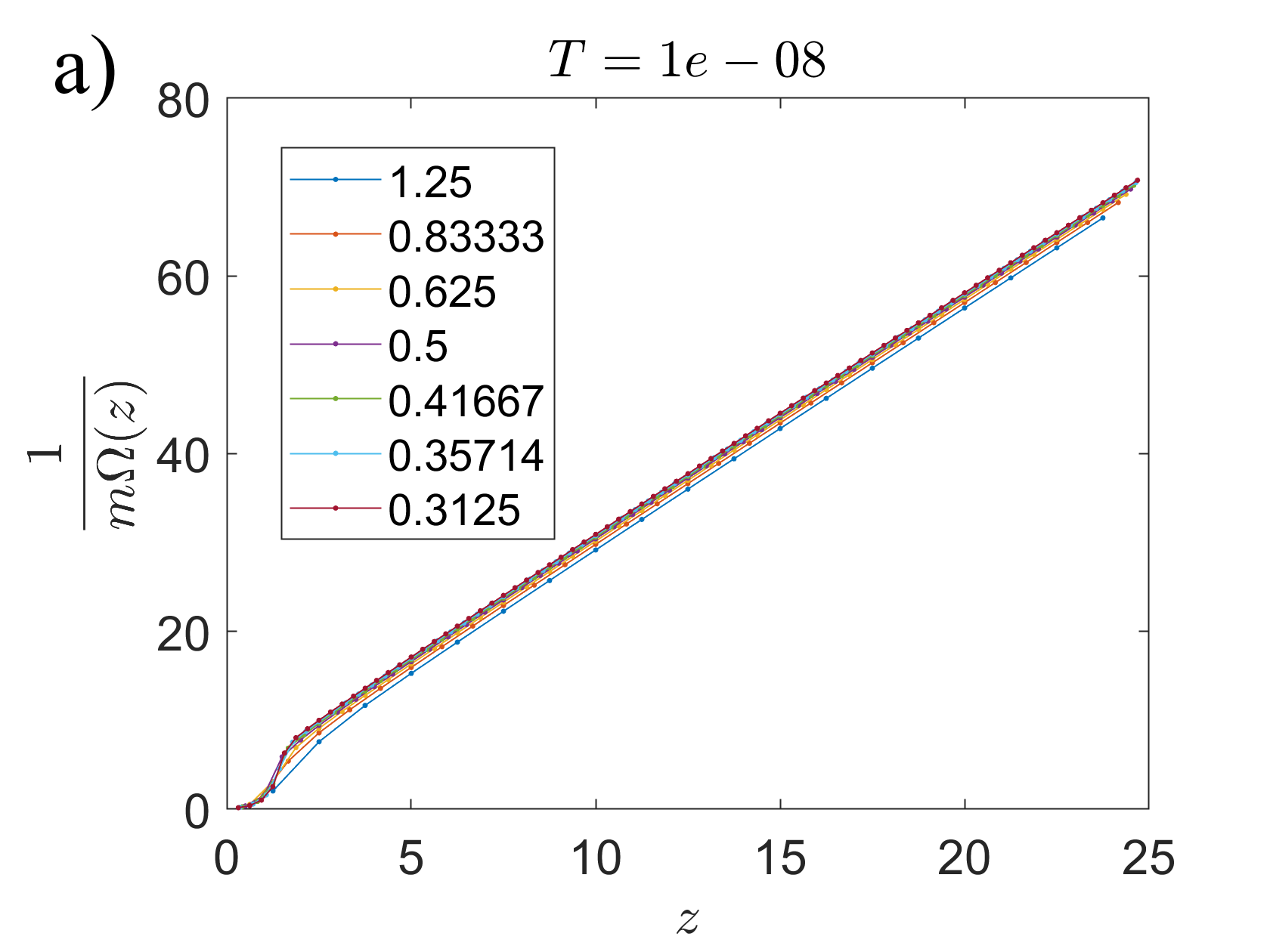}\includegraphics[width=0.5\textwidth]{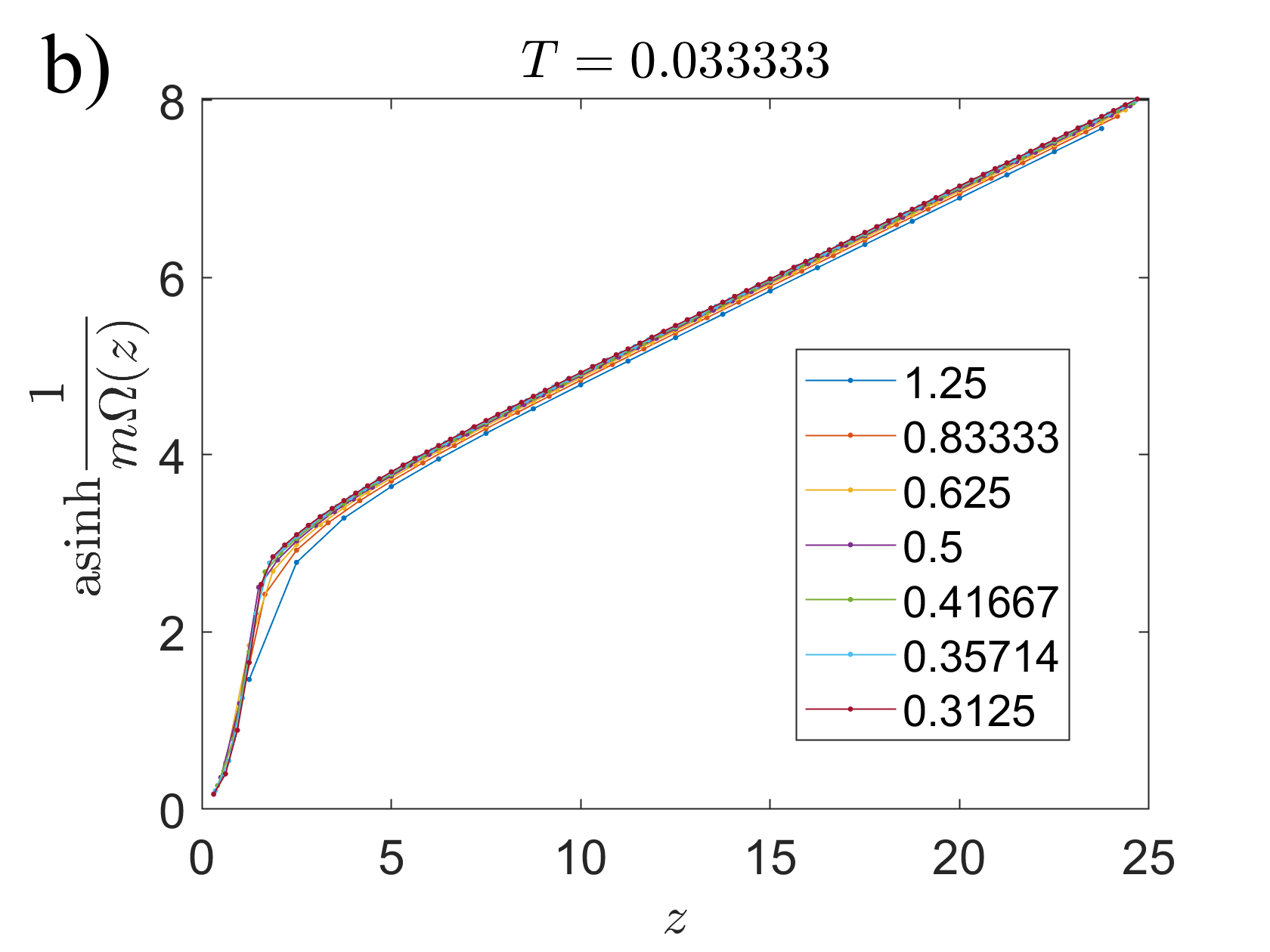}\\
\centering
\includegraphics[width=0.5\textwidth]{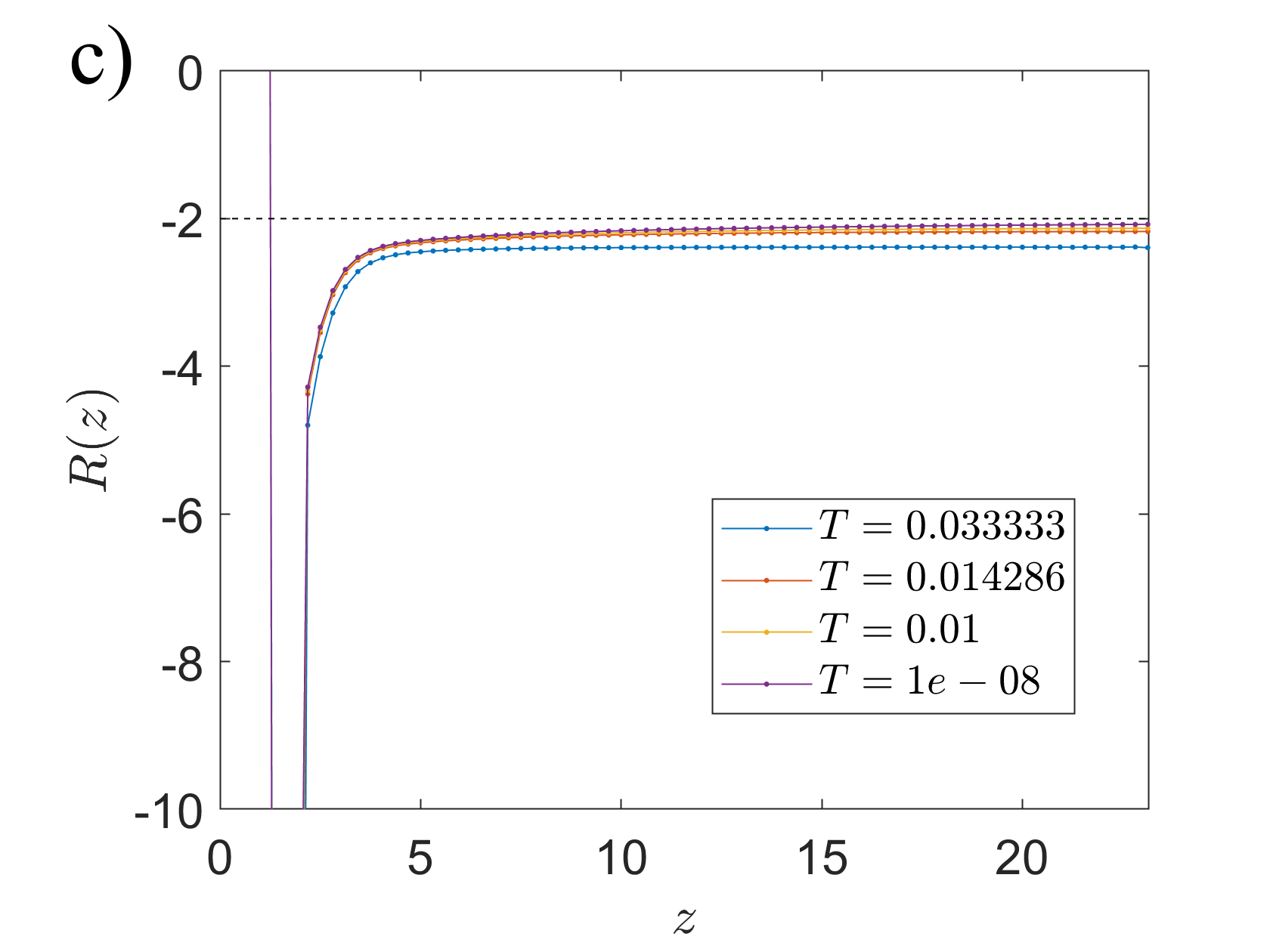}\includegraphics[width=0.5\textwidth]{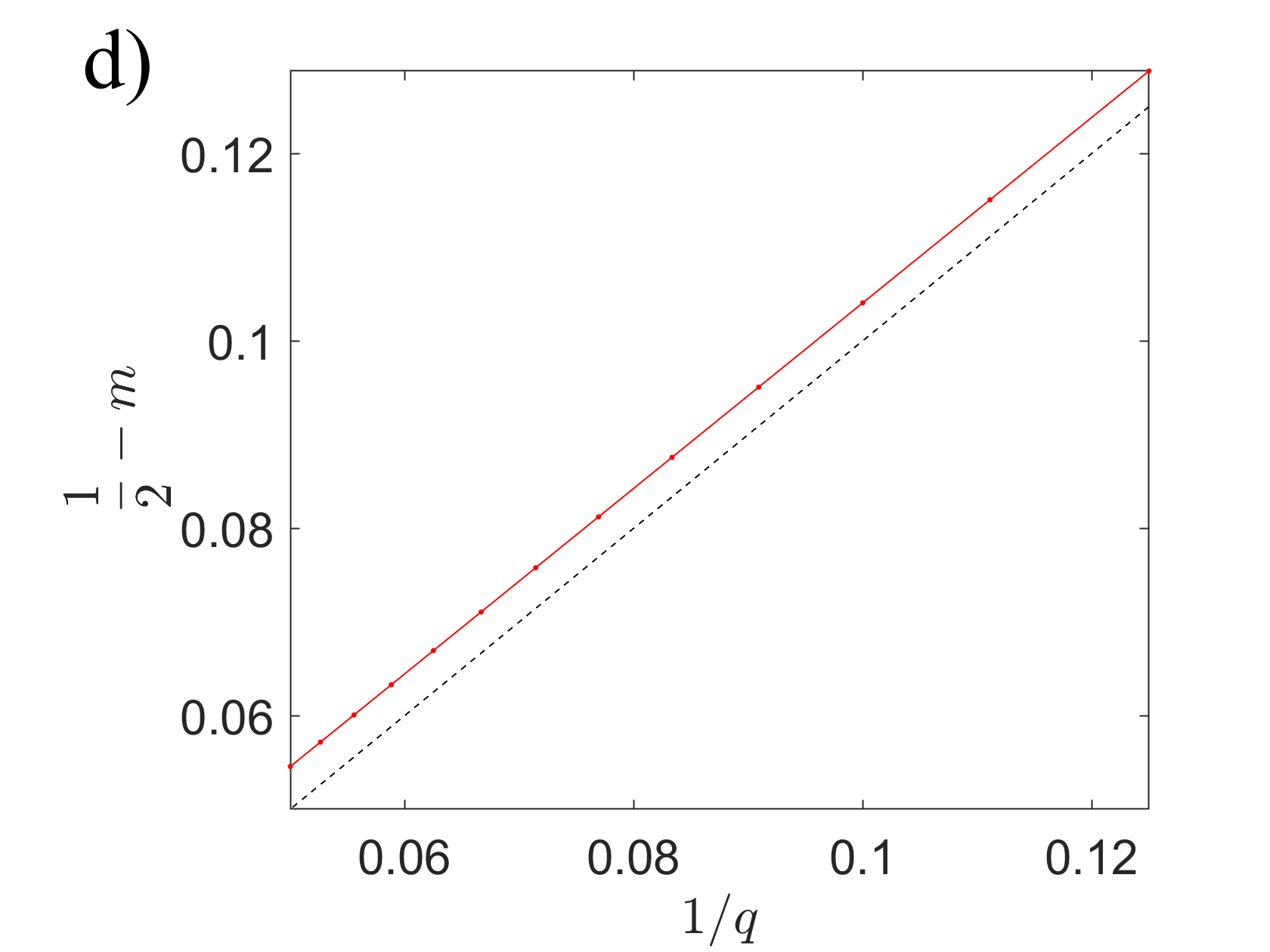}
    \caption{(a) and (b) Mass parameter as a function of $z$ for zero temperature and finite temperature, respectively. The value in the legend is $\Delta t$. Note that we plotted ${\rm asinh}\frac1{m(z)}$ at finite temperature and $\frac1{m(z)}$ at zero temperature. (c) Ricci curvature $R(z)$ as a function of $z$, computed using Eqs.~(\ref{eq: mOmegaz}) and \ref{eq: Ricci}), for different temperature. (d) $\frac 12-m$ from zero temperature fitting (Eq.~\eqref{eq:mass zero T}) as a function of $1/q$. The dashed line is the theoretical value $m=\frac12-\frac1q$. The computation in (a)-(c) are done for $q=10$. }
    \label{fig:cont_limit}
\end{figure}

We would like to compare the fermion dynamics determined by Eq.~(\ref{eq: Majorana_Hamiltonian}) with the quantum circuit in small $\Delta t$ limit. The right-movers and left-movers in the quantum circuit correspond to the continuous fermions $\psi_R$ and $\psi_L$, respectively, since they move with speed of light if they are decoupled from each other. The coupling between left-movers and right-movers are determined by $m\Omega(t,z)$ in the continuous theory, and by the gate angle $\theta_U$ and $\theta_V$ in the quantum circuit. Therefore, for a given quantum circuit, we can determine the value of $m\Omega(t,z)$ that provides an optimal approximation to the dynamics. Here, we present the result with a simplified argument, leaving the precise derivation to Appendix. \ref{app: circuit_geo}. In the long wavelength limit, the transition amplitude from a right-mover to a left-mover in time $\Delta t$, {\it i.e.} after applying one layer of $U$ gates \textit{and} one layer of $V$ gates, is given by $\sin(\theta_U+\theta_V)$. By comparison, in the continuous theory this amplitude is $m\Omega(z)\Delta t$. Therefore, we obtain
\begin{align}
    m\Omega(z)=\lim_{\Delta t\rightarrow 0}\frac{\pi-\theta_U(z)-\theta_V(z)}{\Delta t}\label{eq: mOmegaz}
\end{align}
This result assumes that $\Omega(t,z)$ is smooth over a scale much larger than lattice constant $\Delta t$, which means this equation does not apply to $z$ very close to the boundary, where $\theta_U,\theta_V$ change rapidly. Numerically, we confirm that $\theta_U+\theta_V$ depends linearly on $\Delta t$ so that the $\Delta t\rightarrow 0$ limit is well-defined. 

The resulting $m\Omega(z)$ for zero temperature and finite temperature are shown in Fig. \ref{fig:cont_limit}(a) and (b), respectively. At zero temperature, $m\Omega(z)$ has the following behavior (except for a region near the boundary):
\begin{align}
    m\Omega(z)=\frac{m}{z-z_0}\label{eq:mass zero T}
\end{align}
In AdS/CFT, we expect $m=\frac12-\Delta=\frac12-\frac1q$. A numerical fit for the coefficient of the $z^{-1}$ decay in the expression above yields $m\simeq 0.495-\frac1q$ (Fig. \ref{fig:cont_limit}(d)), which agrees with the analytic result except a small offset $\sim 0.005$. At finite temperature, $m\Omega(z)$ fits well with the function $m\Omega(z)=\frac1{\sinh\left(\alpha z+\gamma\right)}$ which qualitatively agrees with the expected AdS behavior.

To make a more quantitative comparison with the AdS geometry, we can compute the Ricci scalar of $\Omega(z)$, which for the conformal coordinate is given by
\begin{align}
    R=\frac{2}{\Omega^2(t,z)}\left(\partial_t^2-\partial_z^2\right)\log\Omega\label{eq: Ricci}
\end{align}
The result is shown in Fig. \ref{fig:cont_limit}(c). As expected, the curvature approachs $-2$ in the bulk. Interestingly, the curvature appears to diverge near the boundary, though as discussed above, the comparison fails when $\theta_U+\theta_V$ is not a smooth function of $z$. Nevertheless, the curvature divergence near the boundary is expected to be a consequence of UV physics at the boundary.\footnote{A similar kind of curvature divergence occurs if we take the Eulidean two point function of large-$q$ SYK model and interpret it as decaying exponentially with geodesic distance on a disk with the metric $ds^2=\Omega(z)^2(d\tau^2+dz^2)$.\cite{lensky2023}}

\section{Two-sided Bulk Reconstruction for Coupled SYK models}\label{sec: coupled_syk}

In the previous section, we conducted bulk reconstruction of the thermal state of SYK model using our generalized protocol. The resulting bulk dual possessed only one asymptotic boundary. In this section, we will apply our reconstruction scheme to two-sided models, \textit{i.e.} boundary models dual to bulk geometries with two boundaries, including the thermofield double state of the SYK model and the coupled SYK model. In the limit of low temperature/small bilinear coupling, these models are dual to JT gravity theory with a modified dilaton equation of motion. Using our method, we can study the bulk dual without relying on the low temperature/coupling limit.

\subsection{Thermofield double state and the coupled SYK model}\label{subsec: TFD_coupledSYK}
Here, we review some key facts about the thermofield double formalism and the coupled SYK model.

Given a generic quantum many-body system with Hamiltonian $H$, we can define a new system comprised of two identical copies of the original system with a Hilbert space given by the tensor product $\SH=\SH_1\otimes \SH_2$ where the subscripts label the copy. Let $H_1=H\otimes \Id_2$ and $H_2=\Id_1\otimes H$. The thermofield double state (TFD) is a particular entangled state in $\SH$ of the joint system that is invariant under $H_1-H_2$:
\begin{equation}
    \begin{aligned}
        \ket{TFD} &= \frac{1}{\sqrt{Z}}\sum_n e^{-\beta E_n/2} \ket{n}_1\otimes \ket{n}_2 \qquad Z \equiv \Tr(e^{-\beta H})
    \end{aligned}
\end{equation}
where $E_n$ are the energy eigenvalues of $H$, and the trace in $Z$ is defined for a single copy of the original system. Note that the reduced density matrix for either copy is the thermal state. Hence, the TFD state may be regarded as a purification of the thermal state density operator $\rho=Z^{-1}e^{-\beta H}$. Note that a unitary change of basis of one copy in the tensor product Hilbert space does not affect any physical properties, so long as one considers the covariantly transformed operators. For example, one can pick a unitary operator $U_2$ in system $2$ and redefine $\left|TFD\right\rangle\rightarrow U_2\left|TFD\right\rangle$, $H_2\rightarrow U_2H_2U_2^\dagger$, and physical properties will be unaffected as long as all operators in system $2$ are conjugated by the same unitary. By construction, $\left|TFD\right\rangle$ is invariant under the decoupled time evolution $H_1-H_2$, with $\left(H_1-H_2\right)|TFD\rangle=0$.  
For the SYK model, a convenient choice of the TFD state is $H_1=(-1)^{q/2}H_2$, which is chosen such that the infinite temperature state $\ket{I}\equiv 2^{-N/4}\sum_n\ket{n}_1\otimes U_2\ket{n}_2$ is a Gaussian state satisfying $\left(\chi_{i1}+i\chi_{i2}\right)\ket{I}=0$. This condition is essential for ensuring that the fermion correlation functions in $\ket{TFD}$ satisfy the GFF condition under time evolution given by $H_1+H_2$ where $H$ is the SYK Hamiltonian.

In AdS/CFT, the TFD state of two copies of a $(d+1)$-dimensional CFT admits an interpretation as a maximally-extended asymptotically-AdS black hole in $(d+2)$ dimensions ~\cite{maldacena2003eternal,maldacena2013cool}. The boost symmetry of the maximally extended geometry corresponds to the symmetry of the TFD state under $H_1-H_2$. The two exterior regions of the two-sided black hole geometry are causally disconnected, which corresponds to the fact that the two boundaries evolve under independent Hamiltonians $H_1$ and $H_2$, and the joint boundary Hilbert space factorizes into that of the two systems. Based on this correspondence, the TFD state of the low temperature SYK model, with decoupled time evolution of the two copies, corresponds to a two-sided eternal black hole geometry in JT gravity, which in 2d is also the same as an AdS-Rindler geometry. Using our construction, we can analyze the dual description of the TFD state of the large-$q$ SYK model for generic temperatures, beyond the applicability of the JT gravity dual description. Importantly, for the TFD state, we regard both copies of the SYK model as physical, encoding the physics of the two-sided geometry. We can also consider a setup where the two copies of the SYK model are coupled, which will modify the geometric interpretation above, and which we now discuss.


There are various generalizations of SYK model~\cite{gu2017local,bi2017instability,chowdhury2022sachdev, banerjee2017solvable,fu2017supersymmetric,li2017supersymmetric,maldacena2018eternal} that are solvable using the standard large-$N$ solution and associated Schwinger-Dyson equation. Our approach applies to all such generalizations as long as the fermionic degrees of freedom in the model are still GFFs. The generalization that we will consider is the coupled SYK model with a bilinear coupling term~\cite{maldacena2018eternal,lensky2021rescuing}:
\begin{align}
    H=H_{1}+H_{2}+i\mu(t)\sum_{j}\chi_{j1}\chi_{j2}\label{eq: Ham_coupledSYK}
\end{align}
For constant $\mu$, Ref. \cite{maldacena2018eternal} shows that for small $\mu/\mathcal{J}$, the ground state of the above model is dual to a global ${\rm AdS}_2$ geometry, with two boundaries that are causally \textit{connected}. This geometry is a two-dimensional version of a traversable wormhole. Interestingly, in the large-$q$ limit, the ground state of this model is identical to TFD state with a particular temperature $\beta(\mu)$, but undergoing different dynamics from the usual invariant TFD state due to the coupling term in the Hamiltonian. 

Defining $\mu(t)=\hat{\mu}(t)/q$ and taking $q\rightarrow \infty$ for fixed $\hat{\mu}(t)$ yields a simplified Schwinger-Dyson equation, which allows one to obtain an analytic solution for the two point functions in the coupled model. Ref.~\cite{lensky2021rescuing} obtains explicit formulas for the two point functions for a generic $\hat{\mu}(t)$, with the condition that the initial state of time evolution is a TFD state. (One can equivalently view the condition on the initial state as the condition that $\hat{\mu}$ is constant before the initial time $t=0$ so that the evolution starts from the ground state of a certain $\hat{\mu}$.) In the following, we review key formulas for the two point functions in the coupled SYK model without derivation. More details can be found in Ref. \cite{lensky2021rescuing}. 

The two point functions in the coupled model may be expressed as
\begin{align}
G_{11}(t,t')&=\left(-\frac{\zeta'(t)\zeta'(t')^*}{\sin^2\frac{\zeta(t)-\zeta^*(t')}2}\right)^{1/q}\label{eq: G11_general}\\
G_{12}(t,t')&=i\left(\frac{\zeta'(t)\zeta'(t')^*}{\cos^2\frac{\zeta(t)-\zeta^*(t')}2}\right)^{1/q}\label{eq: G12_general}
\end{align}
where $\zeta(t)$ is a complex function defined by
\begin{align}
    \zeta'(t)=\frac{e^{ip(t)}}{\sqrt{e^{2\phi(t)}-1}}
\end{align}
The variables $p(t)$ and $\phi(t)$ behave like the canonical momentum and position in classical Hamilton dynamics with the Hamiltonian\footnote{It's worth pointing out that the quantum version of this Hamiltonian can also be derived from the double scaling limit of the SYK model~\cite{lin2022bulk}.}
\begin{align}
H=-\cos p\sqrt{1-e^{-2\phi}}+\mu\phi
\end{align}
$\zeta(t)$ may be related to $\hat{\mu}(t)$ by the Hamilton equations of motion
\begin{align}
    \dot{p}=-\frac{\partial H}{\partial\phi}=-\hat{\mu}+\frac{e^{-2\phi}}{\sqrt{1-e^{-2\phi}}}\cos p\label{eq: EoM_coupledSYK}
\end{align}
As an example, the thermofield double state corresponds to $\hat{\mu}=0$ and
\begin{align}
    \zeta(t)=2{\rm atan}\left[\tanh\left(\frac{\pi \nu}{2\beta}\left(t-\frac{\beta}2\right)-i\frac\pi 4\right)\right]\label{eq:zetat_TFD}
\end{align}
with $\nu$ defined in Eq.~\eqref{eq: G_largeq}. 
The eternal traversable wormhole solution of Ref. \cite{maldacena2018eternal} corresponds to a linear function
\begin{align}
    \zeta(t)&=V_Gt-i\delta\label{eq:parameter_ETW1}\\
    \delta&={\rm atanh}e^{-\phi_0},~V_G=\left(e^{2\phi_0}-1\right)^{-1/2}\nonumber\\
    e^{-2\phi_0}&=\frac{2\hat{\mu}}{\sqrt{\hat{\mu}^2+4}+\hat{\mu}}\label{eq:parameter_ETW2}
\end{align}

In the low temperature limit, the coupled SYK system with a time-dependent coupling $\mu(t)$ corresponds to JT gravity with a modified dilaton profile, 
induced by sending in matter with energy-momentum from the boundary. The sign of $\mu(t)$ dictates whether the matter's energy density is positive or negative, which correspondingly suppresses or enhances the correlation between the two boundaries. However, it should be noted that the large-$q$ solution applies for arbitrary temperature/energy, though the gravitational interpretation is only known for the low temperature limit.

\subsection{Bulk dual of the thermofield double state}
\label{subsec: ETW}

\begin{figure}
    \centering
\includegraphics[width=0.46\textwidth]{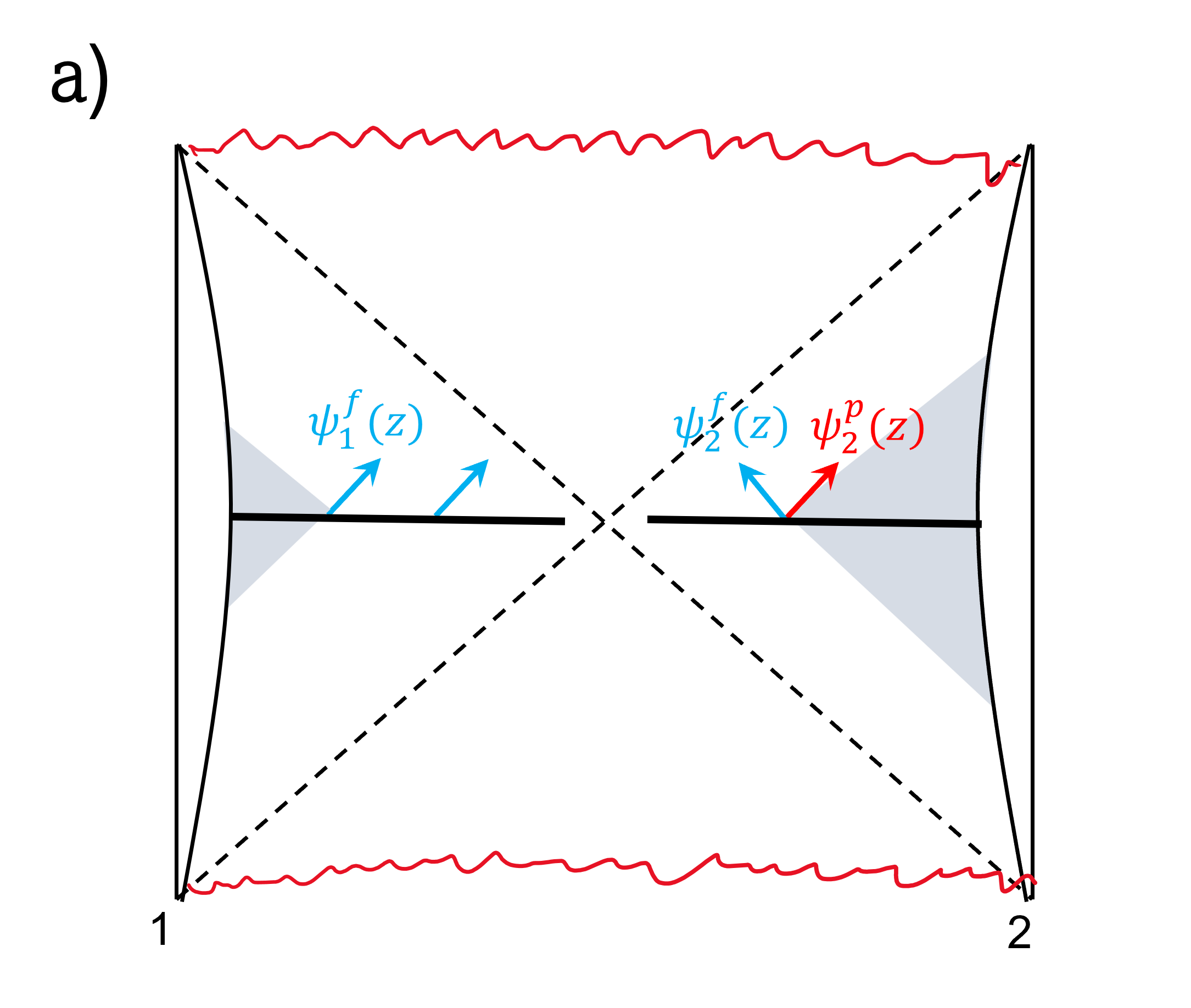}\includegraphics[width=0.5\textwidth]{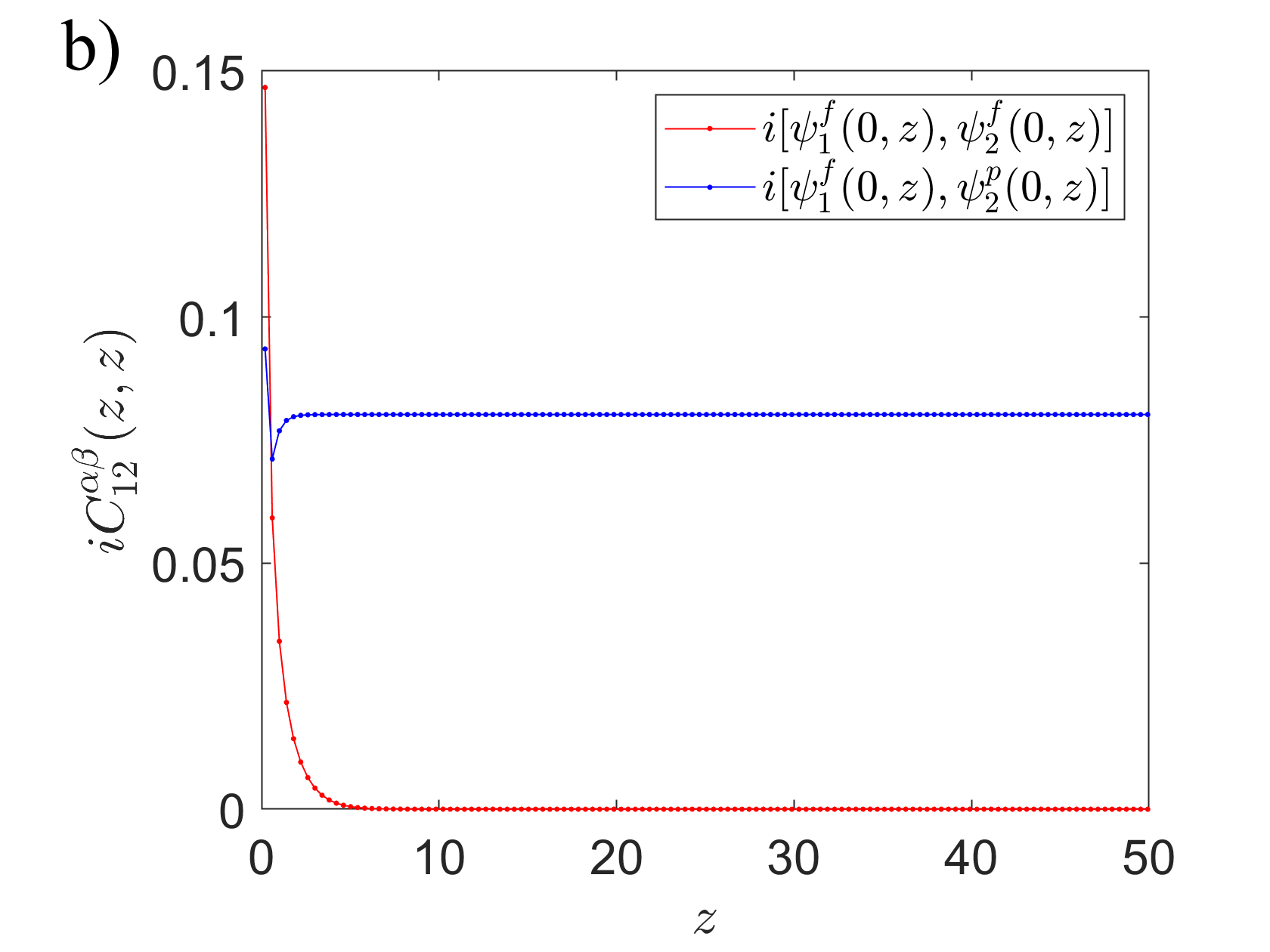}\\
    \centering
\includegraphics[width=0.5\textwidth]{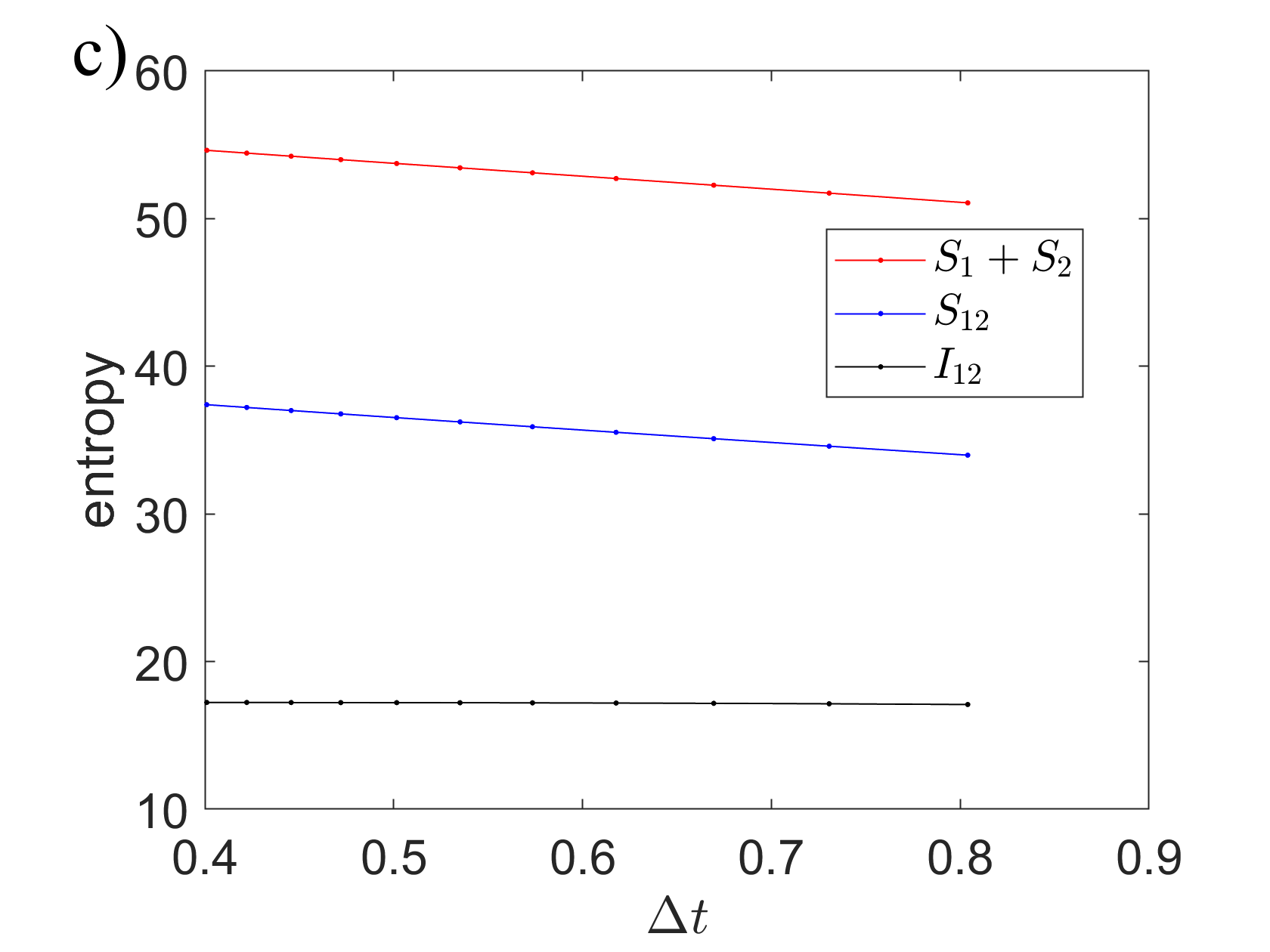}\includegraphics[width=0.5\textwidth]{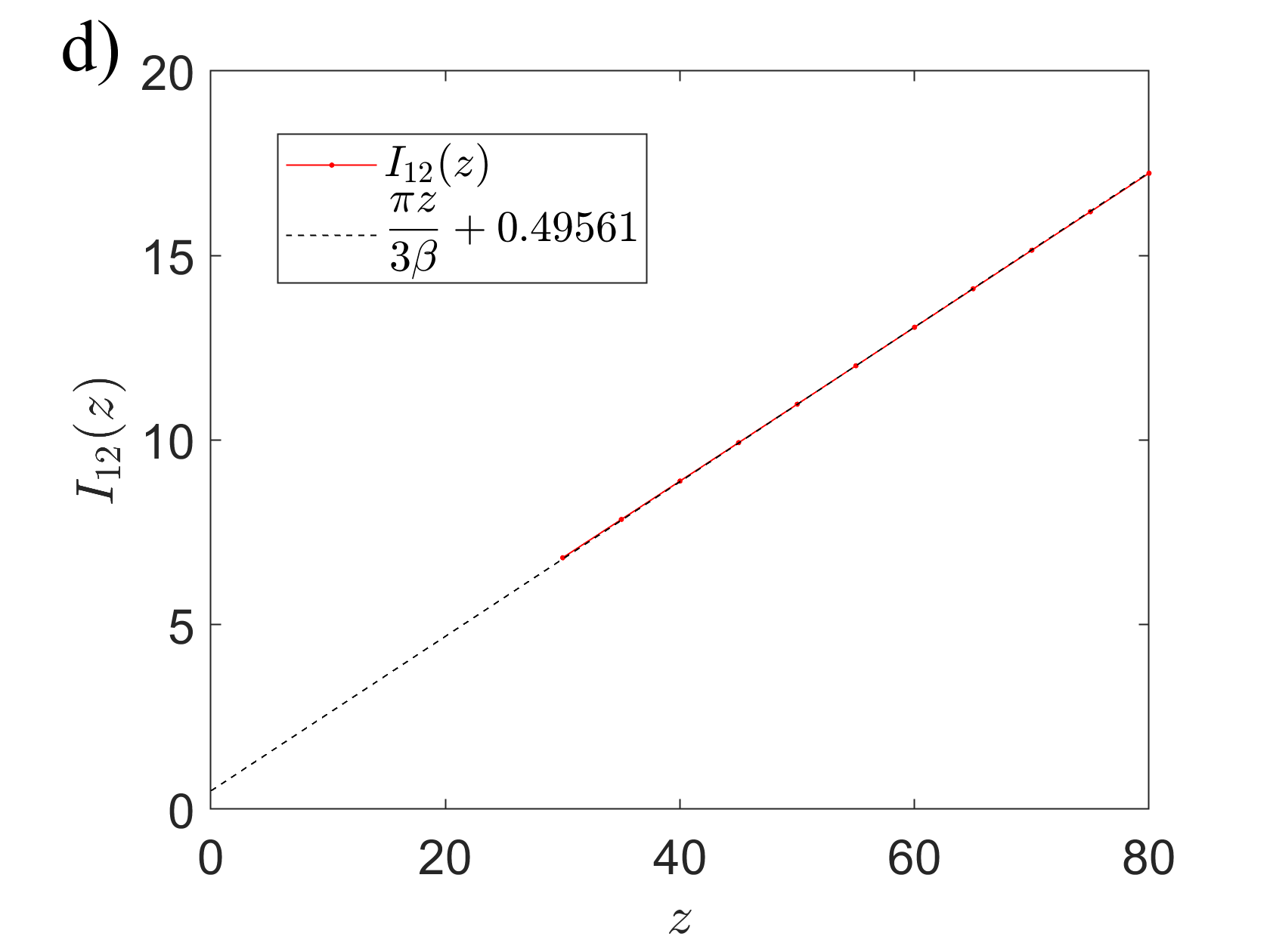}
    \caption{(a) Illustration of the bulk fermions in the eternal black hole geometry. (b) two point correlation function between the two exterior regions. $\psi^f$ and $\psi^p$ correspond to the blue and red arrows in subfigure (a) respectively. (c) Entropy and mutual information as a function $\Delta t$, for a fixed interval $[0,z]$ with $z=80$. (d) Mutual information $I_{12}$ as a function of $z$ (red line with dots). The dashed line is a fitting with the slope $\frac{\pi}{3\beta}$. The calculation is done with $\beta=5$.}
    \label{fig:TFD_entropy}
\end{figure}

In this subsection, we apply our bulk construction to the TFD state undergoing decoupled time-evolution $H=H_{\rm SYK_1}+H_{\rm SYK_2}$. Unlike the previous investigation of the SYK thermal state, we will be considering two-sided correlators, and the constructed bulk operators will be generically be supported on two boundaries. The two point function matrix can be constructed as before, but now carries an additional "flavor" index for the copy $s,s'=1,2$. Due to large-$N$ factorization, we again drop the Majorana species index $j=1,\dots,N$. The corresponding anticommutator matrix has the form
\begin{align}
    A_{ss'}(t,t')=\left\langle TFD\right|\left\{\chi_s(t),\chi_{s'}(t')\right\}\left|TFD\right\rangle=\delta_{ss'}A(t-t')
\end{align}
where the second equality is due to the fact that the two copies are decoupled. However, the TFD state of the two copies has nontrivial entanglement between the two copies so that the commutator matrix
\begin{align}
    C_{ss'}(t,t')=\left\langle TFD\right|\left[\chi_s(t),\chi_{s'}(t')\right]\left|TFD\right\rangle
\end{align}
has nontrivial off-diagonal elements $C_{12}(t,t')$. Since $A$ is diagonal in the copy index, the kernel that determines the bulk fermions is 
decoupled for the two sides, so bulk fermions in the causal wedge of one boundary can be written using the same kernel used for the thermal state with only one boundary. Applying the kernel to boundary fermions $\chi_{s}(t),~s=1,2$, one defines bulk fermions $\psi^s_{L,R}$, where the superscript indicates whether the bulk fermion resides in the causal wedge of the first or second boundary. In principle, we can choose different discretizations of time for the two copies, which corresponds to different discretizations of the two bulk boundaries, but for simplicity we will always choose the same $\Delta t$ for the two boundaries.

Upon obtaining the $\psi_{L,R}^s$, we can study the quantum entanglement between the two bulk causal wedges. The TFD state is invariant under time evolution by $H_1-H_2$, which leads to the symmetry $C_{12}(t,t')=C_{12}(t+\tau,t'-\tau)$, $C_{ss}(t,t')=C_{ss}(t+\tau,t'+\tau)$, $s=1,2$. As a consequence of this symmetry and the entanglement of the TFD state, the bulk fermion $\psi_{1R}(0,z)$ is strongly correlated with $\psi^2_{L}(0,z)$ which is independent of $z$. (Fig. \ref{fig:TFD_entropy}(a) and (b)) It should be noted that here $L,R$ are defined with respect to the causally-connected boundary, so the orientation for wedge $2$ is opposite from that in the Penrose diagram of an eternal black hole. We will often use the labels $f,p$ to clarify the choice of orientation. The $z$-independent coupling leads to a large mutual information between the two systems. Fig. \ref{fig:TFD_entropy}(c) shows the entropy and mutual information of the two regions in bulk wedges $1$ and $2$ from the boundary to bulk point $(0,z)$ as a function of $\Delta t$. (Note that the coordinate starts from $z=0$ from both boundaries, and $z$ increases towards the horizon.) While the entropy has a nearly linear dependence on $\Delta t$ (like the single-sided case), the mutual information is essentially independent of $\Delta t$. This is because mutual information is a UV finite quantity. When $z$ is varied, we observe a linear dependence of the mutual information in $z$ (Fig.~\ref{fig:TFD_entropy})(d). In the bulk interpretation, this unbounded growth of mutual information is consistent with the fact that the distance between the endpoints of the two causal wedges decreases exponentially. In particular, for AdS, the distance $d(z)$ is given by
\begin{align}
d(z)=2\int_{z}^{+\infty}\Omega(z)dz\simeq 4e^{-2\pi z/\beta}
\end{align}
If the bulk fermions are described by a Majorana fermion CFT in the UV, we expect the mutual information to have logarithmic divergence:
\begin{align}
I_{12}(z)\simeq \frac{c}{3}\log d(z)\simeq \frac{\pi z}{3\beta}
\end{align}
This agrees well with our numerics, as is shown in Fig. \ref{fig:TFD_entropy}(d).

\subsection{Shockwave geometry and traversable wormhole}
We now investigate the coupled SYK model, and study the effect of the coupling on the bulk fermion state and dynamics. First, we consider the simple case of a $\delta$-function shockwave. The Hamiltonian is Eq.~\eqref{eq: Ham_coupledSYK} with $\mu(t)=\frac{\nu}q\delta(t)$. In the large-$q$ limit, the effect of the $\delta$-function shock is to induce a first-order discontinuity in the complex function $\zeta(t)$ in the expressions for the two point functions. For simplicity, we choose a symmetrically-applied shock, as illustrated in Fig. \ref{fig:shockwave}(a). The shockwave-altered $\zeta(t)$ is obtained by patching the TFD solution \eqref{eq:zetat_TFD} for time $t<-t_0$ and $t>t_0$ after a translation along the real axis. Physically, this situation corresponds to a time-evolved TFD state $\ket{TFD(-t_0)}$ which is "kicked" by the shock to $\ket{TFD(+t_0)}$. This is a fine tuned situation since $t_0$ and $\hat{\nu}$ have to be related in order for the final state to be $\ket{TFD(+t_0)}$, but this choice does not qualitatively affect the physics we are interested in.

Given this function $\zeta(t)$, the boundary two point function $G_{ab}(t_1,t_2)$ is determined in Eq.~(\ref{eq: G11_general}) and (\ref{eq: G12_general}). As a side remark, for $t_1>0,~t_2<0$, the derivative 
\begin{align}
    \left.\frac{\partial G_{ss'}(t_1,t_2)}{\partial\nu}\right|_{\nu=0}=-\bra{TFD}\left[\chi_{i1}(0)\chi_{i2}(0),\chi_{js}(t_1)\right]\chi_{js'}(t_2)\ket{TFD}
\end{align} 
is the out-of-time-order correlation function (OTOC)~\cite{larkin1969quasiclassical,shenker2014black,shenker2014multiple,kitaev2014hidden}. For SYK models, only the early time behavior of this four point function can be computed, which grows exponentially with $t$ if we take $t_1=-t_2=t$. For the standard large-$q$ SYK model, $\partial_\nu G_{ss'}(t_1,t_2)$ can be computed exactly for finite $\nu$. 

From $G_{ss'}(t_1,t_2)$, we can construct the bulk fermions. The largest correlation between the two boundaries occurs for time $t$ on boundary $1$ and time $-t$ on boundary $2$, which implies high correlation between the future-directed bulk fermion $\psi^f_1(z)$ (blue arrows in the left wedge of Fig. \ref{fig:shockwave}(d)) and the past-directed bulk fermion $\psi^p_2(z)$ (red arrows in the right wedge of the same figure). As is shown in Fig. \ref{fig:shockwave}(b), this two point function $iC_{12}^{fp}(z,z)=\left\langle i\left[\psi^f_1(z),\psi^p_2(z)\right]\right\rangle$ saturates to a finite value for the TFD state $\nu=0$ but exponentially decreases for $\nu\neq 0$. This illustrates how the correlation between the two wedges is disrupted by the shockwave, and the effect is significant for any finite $\nu$ as a consequence of chaos in the system. 

We can study the correlation between the two boundaries from the bulk perspective using the mutual information between the two bulk intervals $[0,z]$, as illustrated in Fig. \ref{fig:shockwave}(c). In contrast to the linearly-growing mutual information for the TFD in the previous subsection, in the presence of a shockwave, the mutual information (red curve) saturates to a finite value. This is consistent with the gravitational picture that the two bulk exterior regions become separated by a finite distance, rather than sharing a bifurcation horizon. The key finding here is that our framework makes the geometric concepts such as distance between horizons well-defined, even for the finite temperature SYK model where the duality with JT gravity does not apply.

\begin{figure}[h]
    \centering
    \includegraphics[width=0.5\textwidth]{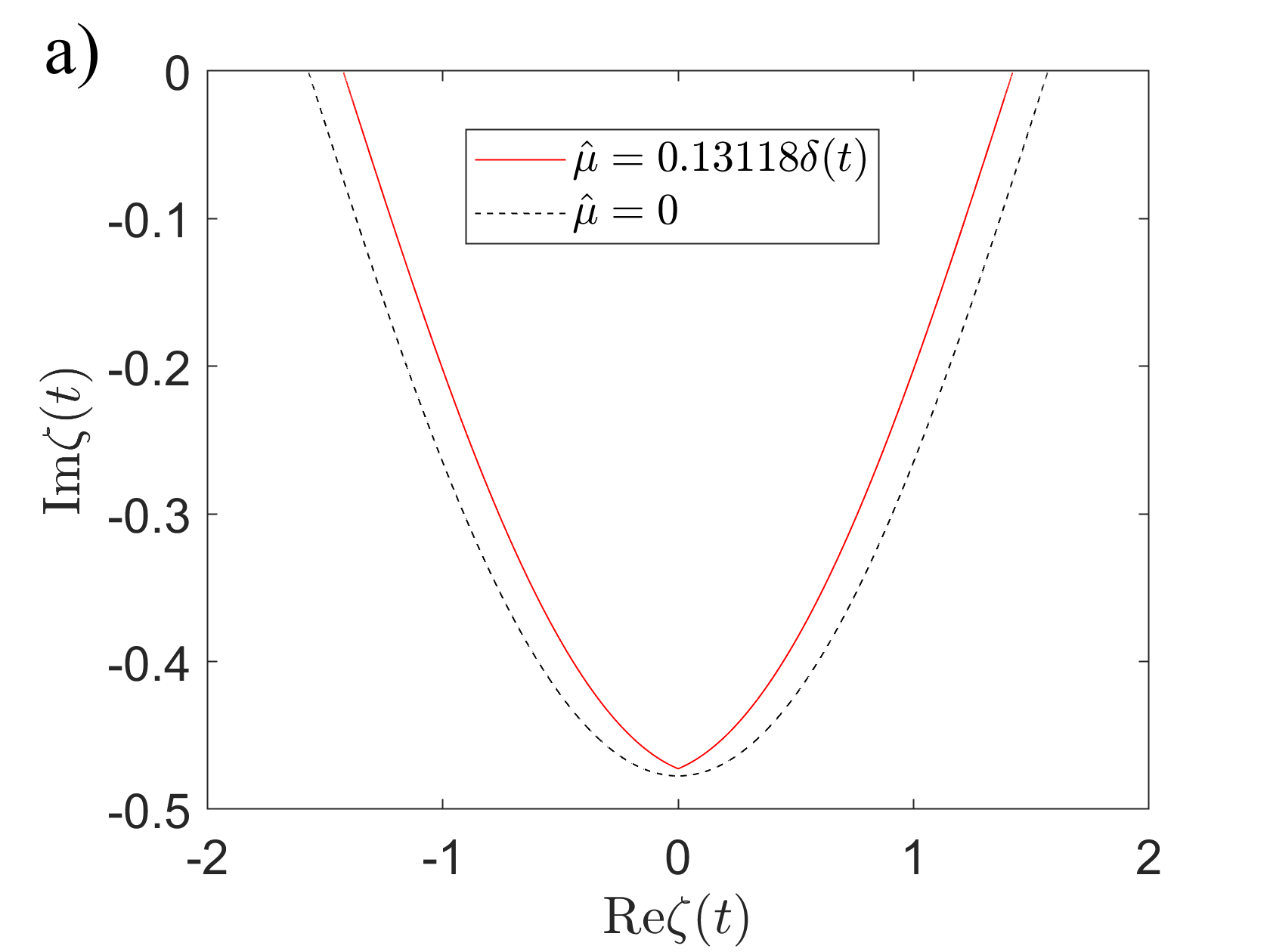}\includegraphics[width=0.5\textwidth]{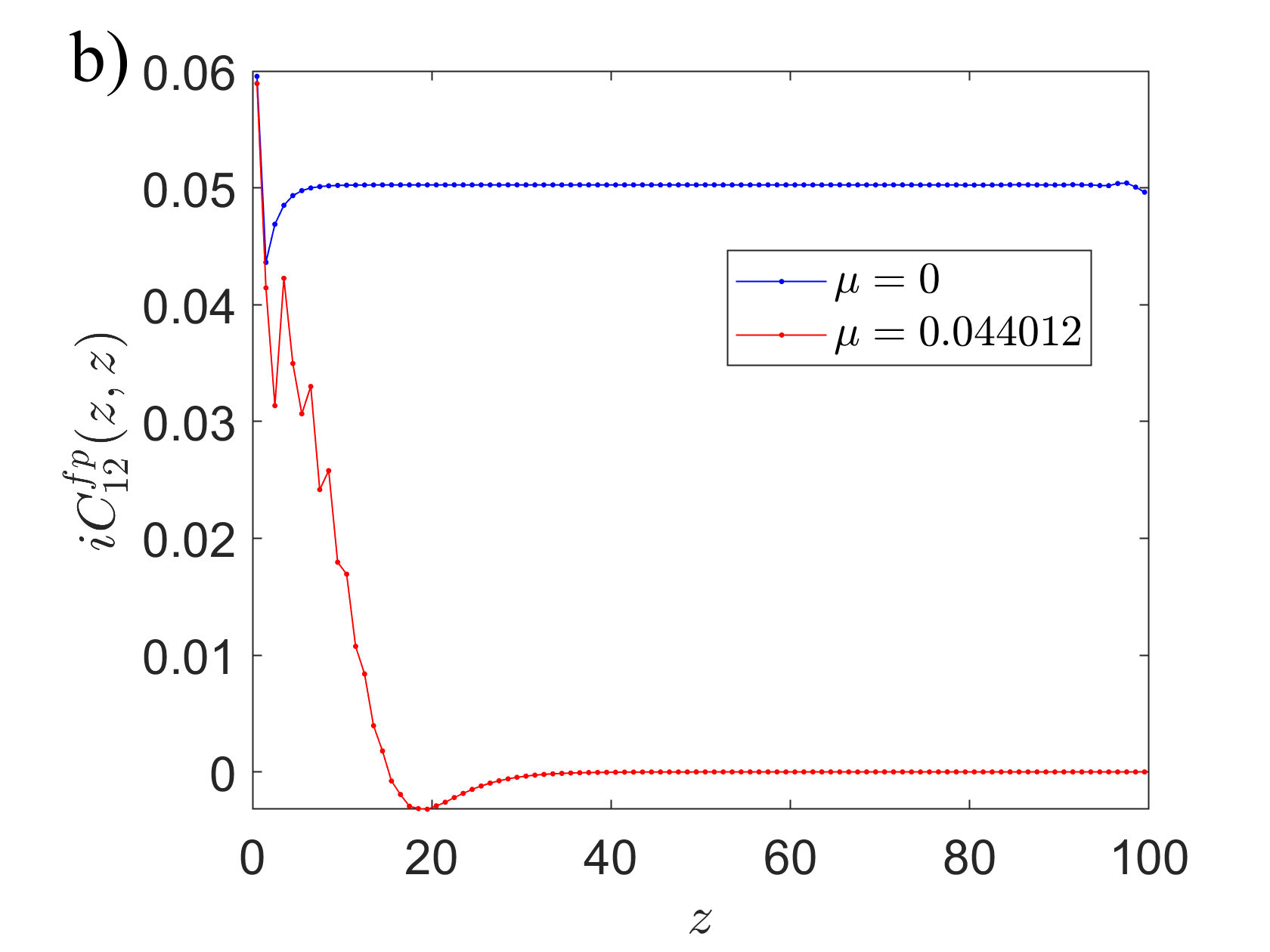}\\
    \centering
    \includegraphics[width=0.5\textwidth]{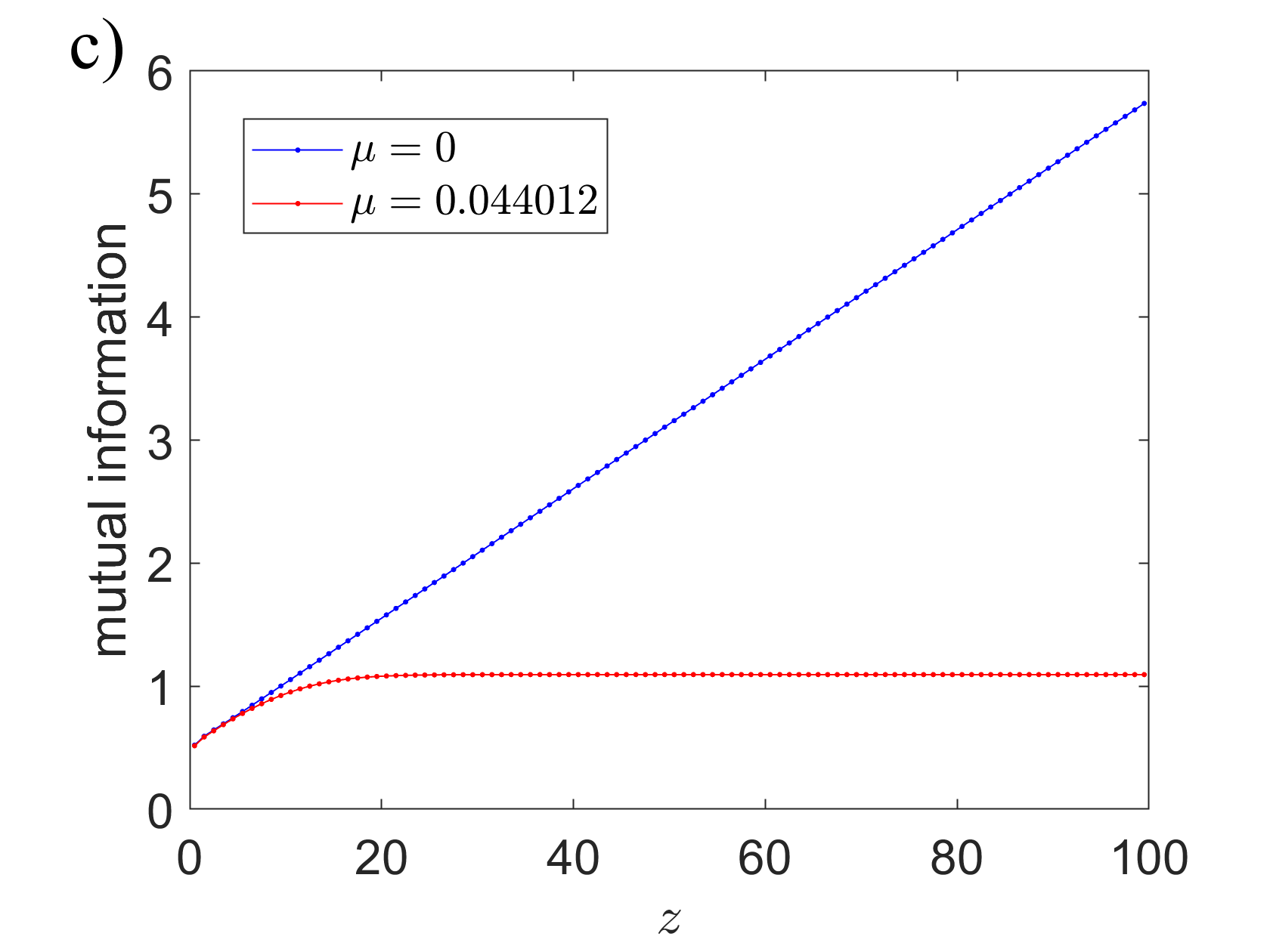}\includegraphics[width=0.48\textwidth]{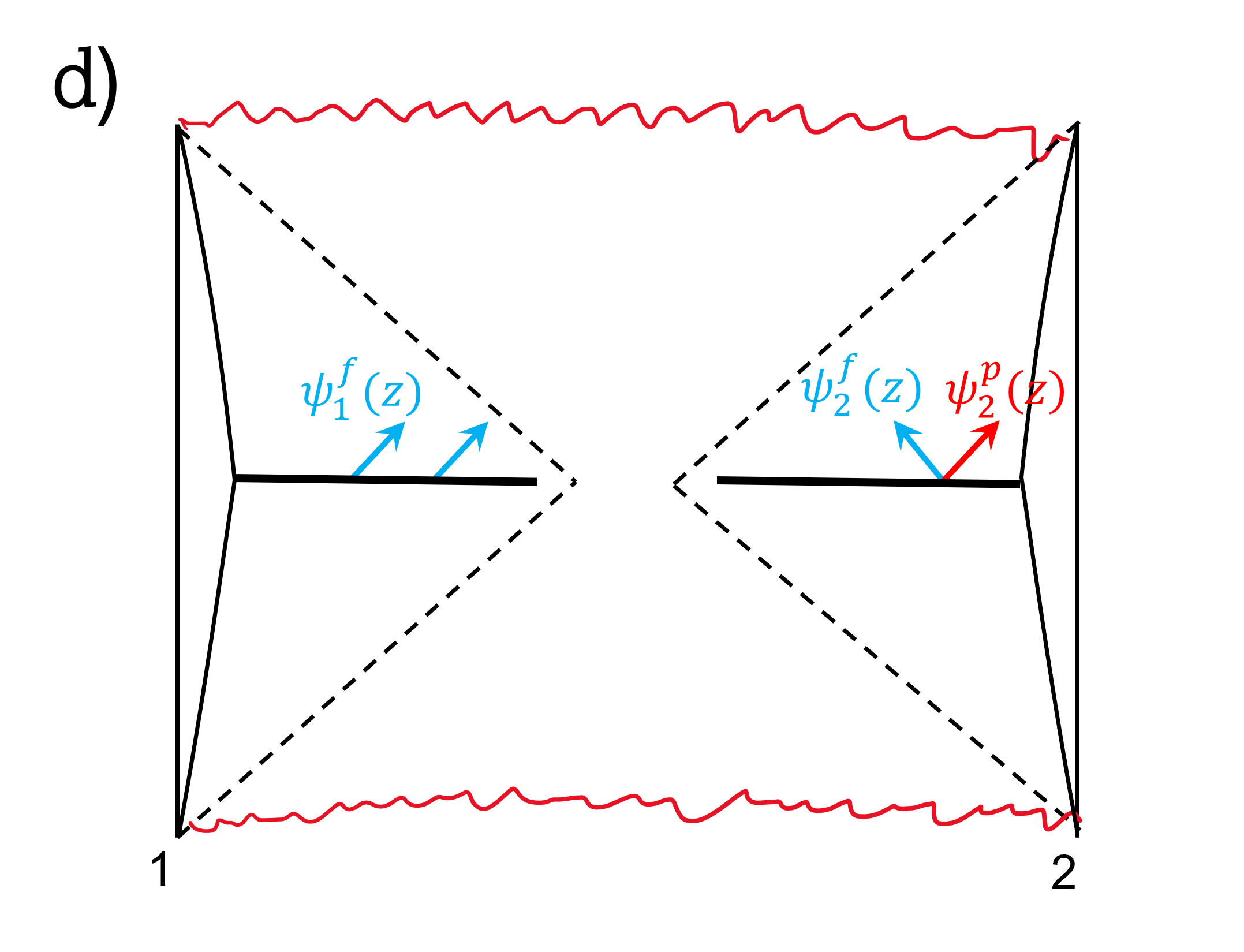}
    \caption{(a) The $\zeta(t)$ function for a $\delta$-function shockwave. (b) The two point function $C_{12}^{fp}(z,z)$ (correlation between blue arrow in wedge $1$ and red arrow in wedge $2$ in (d)) as a function of $z$, for the TFD state (blue) and the shockwave case (red). (c) The mutual information between two regions $[0,z]$ in the two wedges, for the TFD state (blue) and the shockwave case (red). (d) Illustration of the Penrose diagram corresponding to the shockwave geometry. }\label{fig:shockwave}
\end{figure}


The last example we would like to discuss is the eternal traversable wormhole\cite{maldacena2018eternal}, which is the geometry corresponding to the ground state of Hamiltonian~\eqref{eq: Ham_coupledSYK} with a constant $\mu=\frac{\hat{\mu}}q$. The two point function is given by\footnote{A subtle technical point is that the branch of $q$-th root in this equation must be chosen such that the two point function is continuous in $t-t'$.}
\begin{align}
    G_{11}(t,t')&=\left(-\frac{V_G^2}{\sin^2\left(\frac{V_G}2\left(t-t'\right)-i\delta\right)}\right)^{1/q}\label{eq:G11_ETW}\\
    G_{12}(t,t')&=i\left(\frac{V_G^2}{\cos^2\left(\frac{V_G}2\left(t-t'\right)-i\delta\right)}\right)^{1/q}\label{eq:G12_ETW}
\end{align}
with $V_G,~\delta$ determined by Eqs.~(\ref{eq:parameter_ETW1}) and (\ref{eq:parameter_ETW2}). Compared to the previously studied examples, a key difference is that in this geometry, the two point function at long time is oscillatory insteading of decaing, with a period $T=\frac{2\pi}{V_G}$. In the low temperature limit, this system is dual to a global AdS$_2$ geometry, with two reflective boundaries that are in causal contact with one another. A particle from boundary $1$ can propagate to boundary $2$ in time $T/2$, and return to the same boundary in time $T$. 

\begin{figure}
    \centering
    \includegraphics[width=0.2\textwidth]{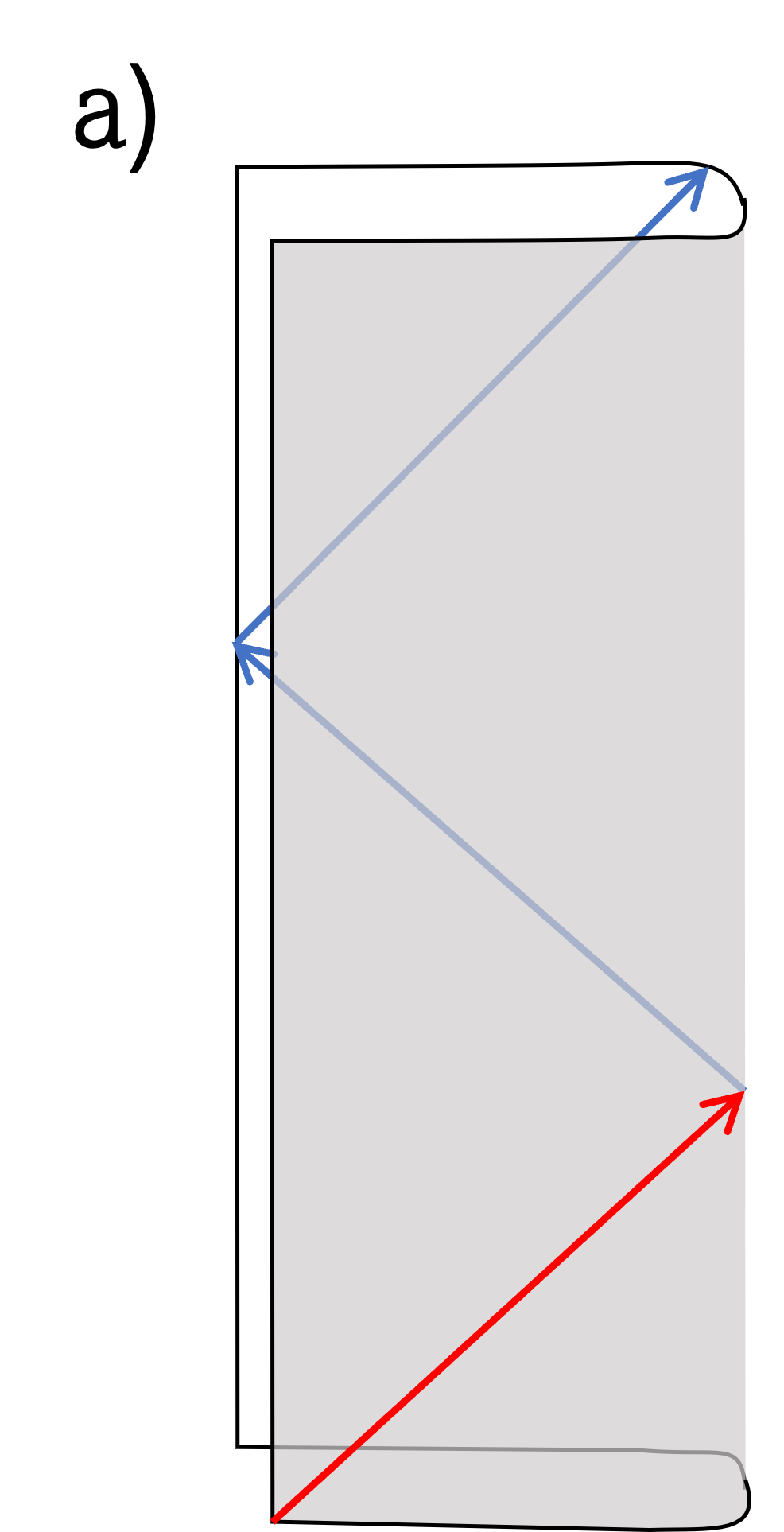}\includegraphics[width=0.23\textwidth]{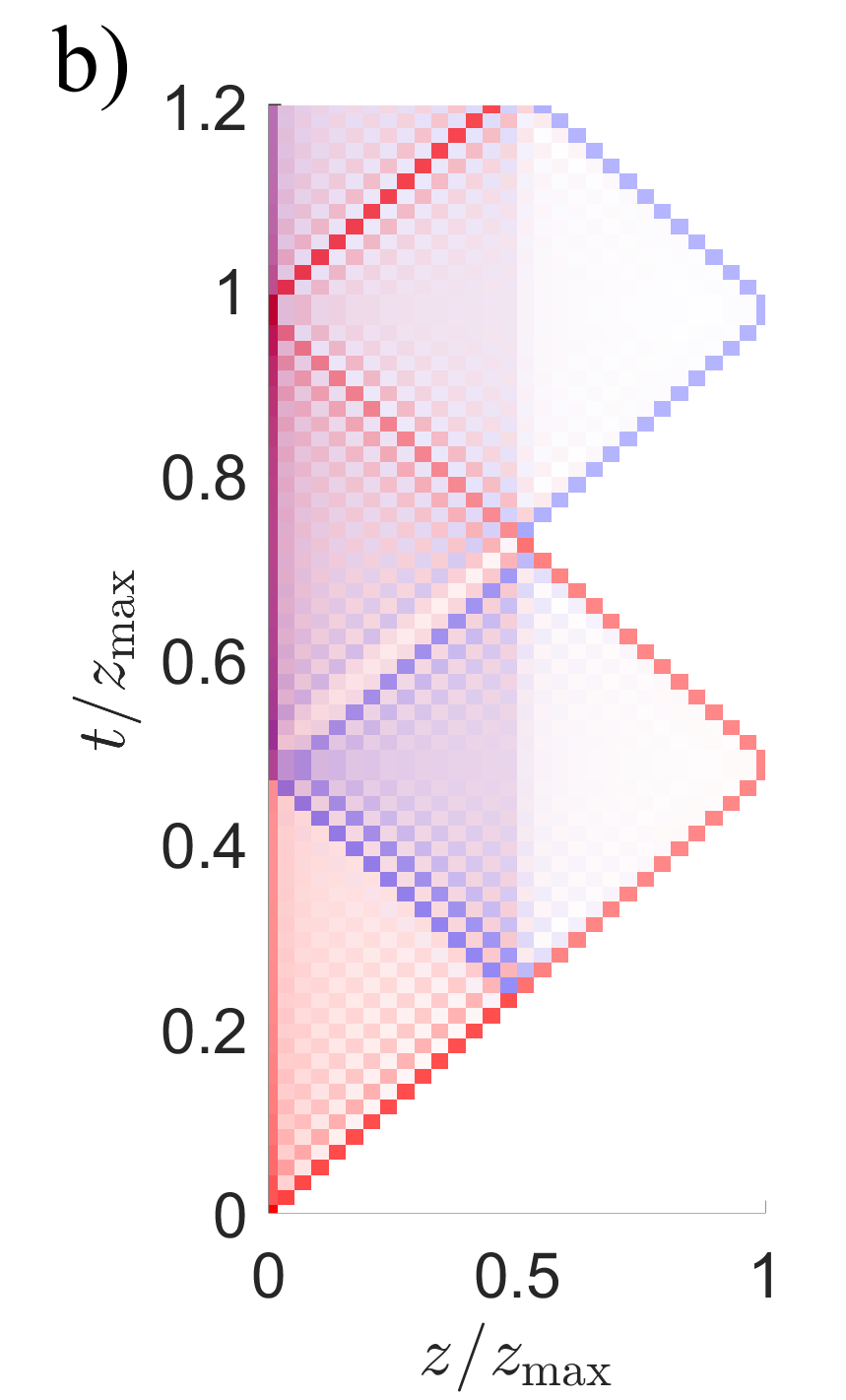}\includegraphics[width=0.5\textwidth]{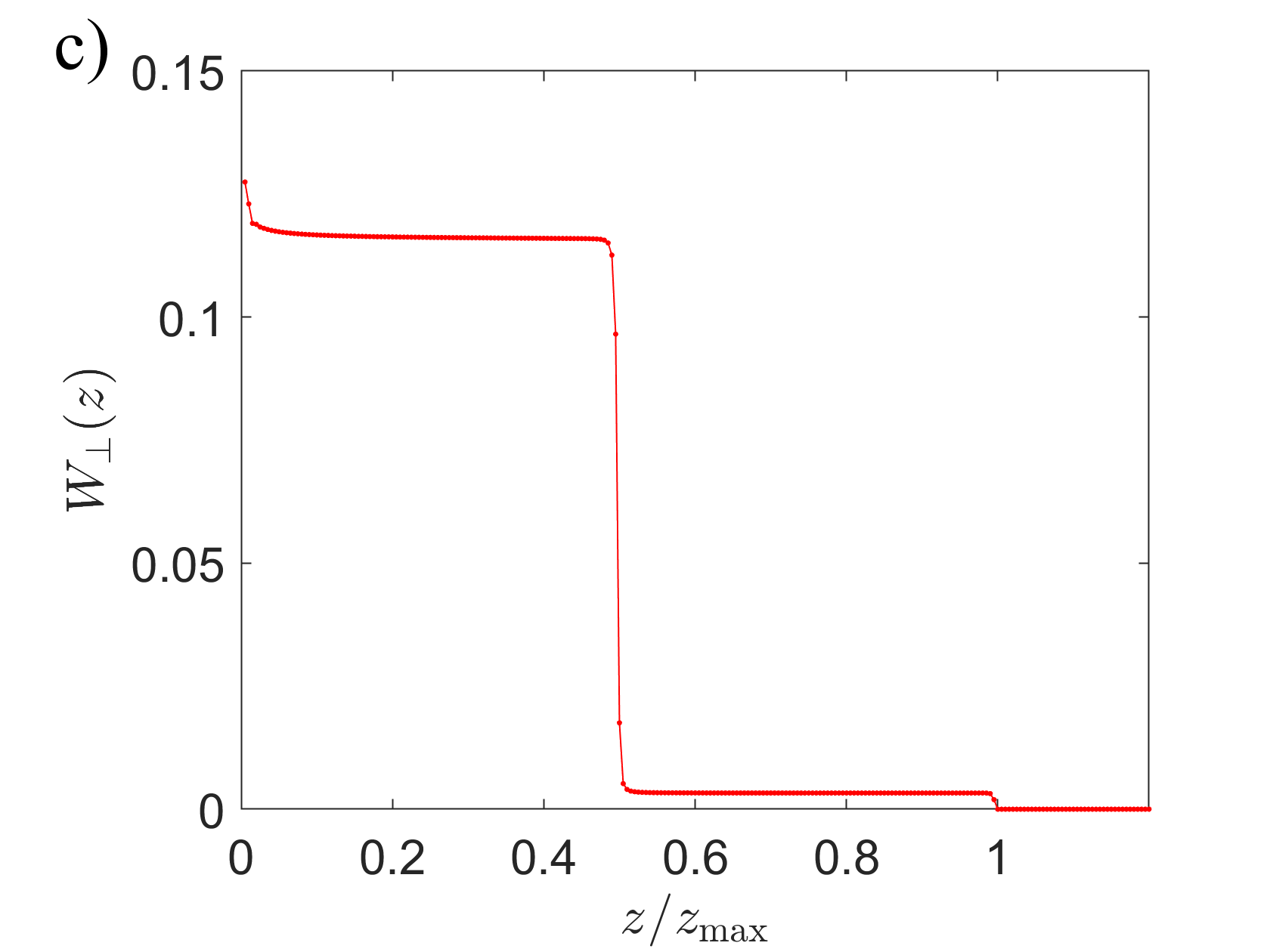}\\
    \centering
    \includegraphics[width=0.5\textwidth]{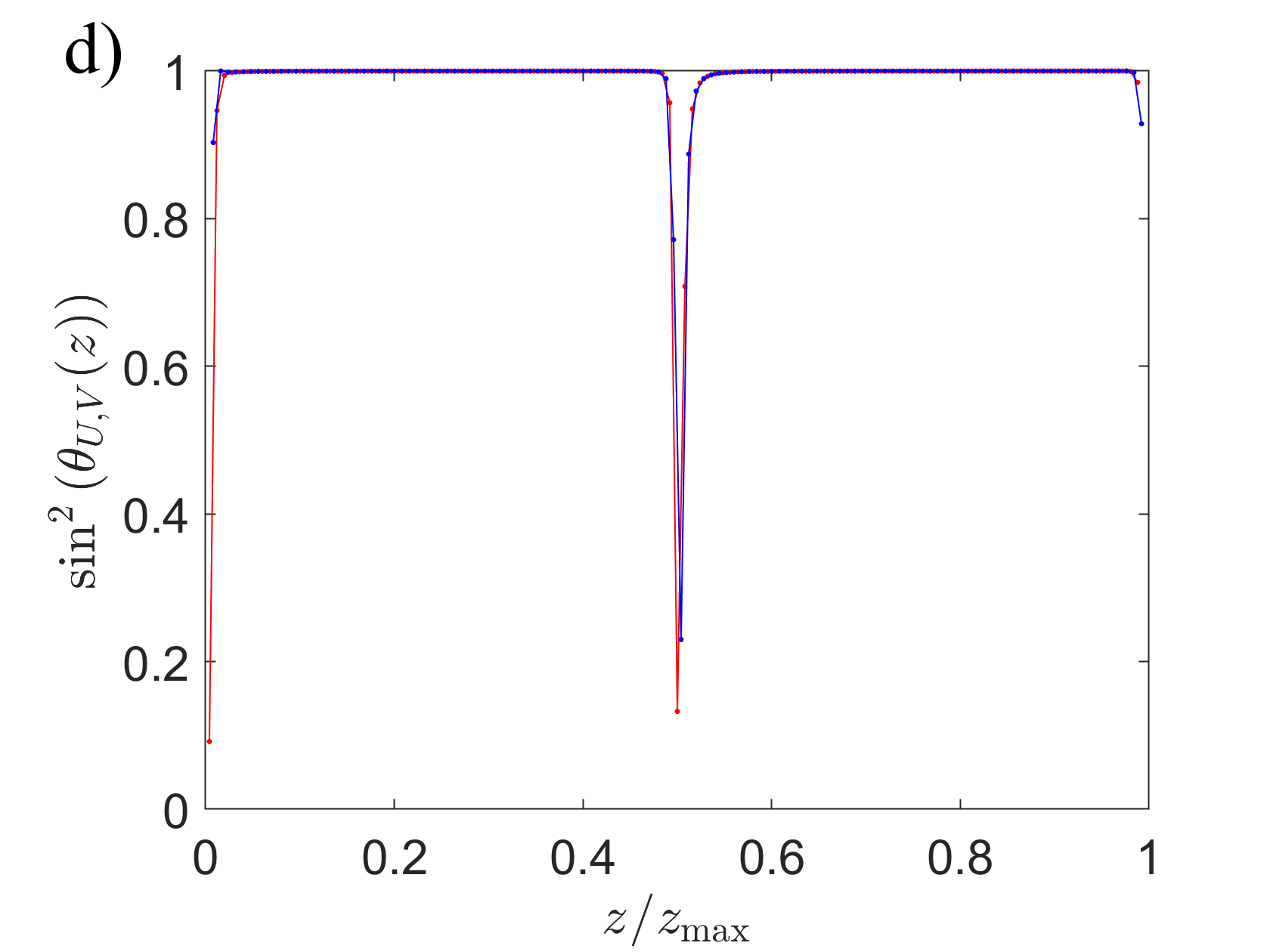}\includegraphics[width=0.5\textwidth]{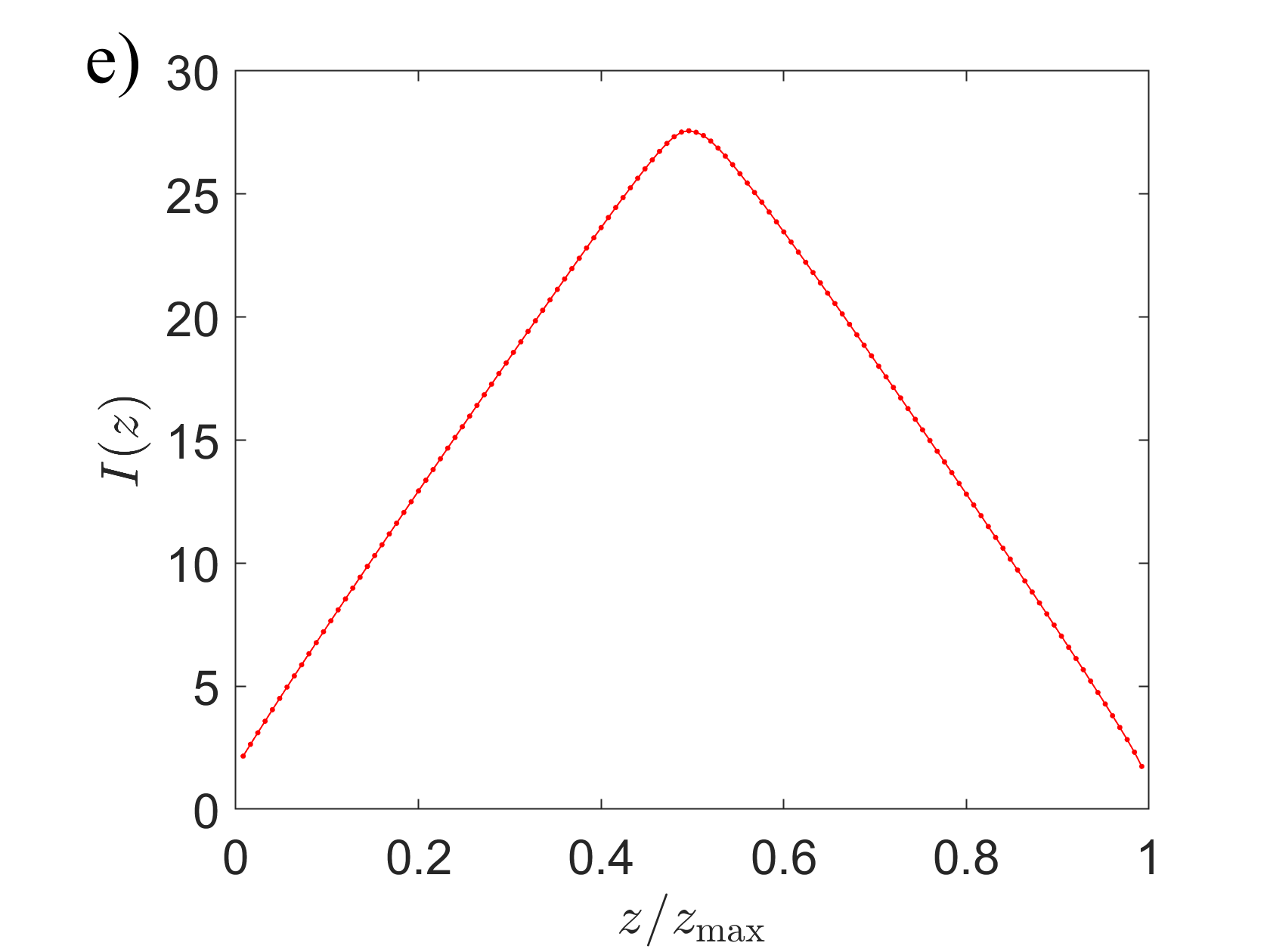}
    \caption{(a) Illustration of the global AdS$_2$ geometry in JT gravity, folded such that two boundaries $1,2$ are both on the left side. A particle starting from boundary $1$ reaches the other boundary at time $T/2$. (b) A pseudocolor plot that shows the propagator $\left\{\chi_1(0),\psi_{1,2}^{f,p}(z,t)\right\}$ as a function of $z,t$. Both left-movers and right-movers in the bulk are assigned the $(z,t)$ coordinate at the center of the corresponding bond. The red and blue color represent flavor $1$ and $2$ respectively. $z=0.5z_{\rm max}$ corresponds to the right end of the JT gravity picture, but in the quantum circuit it is a partial reflection surface. (c) The orthogonal component $W_\perp(t)$ of $\chi_s(t)$ defined with respect to the linear space of operators $\chi_s(t'),t\geq t'<t+2z$. (d) The transmission probability $\sin^2\theta_{U,V}$ for $U$ and $V$ gates. (e) The mutual information between the spatial regions $[0,z]$ (for both flavors $1$ and $2$) and $[z,z_{\rm max}]$. The calculations are done for $\mu=10^{-7}$.}\label{fig:ETW}
\end{figure}

Because of the fractional power $1/q$ in Eqs.~\eqref{eq:G11_ETW} and \eqref{eq:G12_ETW}, the two point function satisfies a twisted boundary condition $G_{ss'}(t+T,t')=G_{ss'}(t,t')e^{-i\frac{2\pi}q}$. As a consequence we obtain the following linear equation
\begin{align}
    A_{12}\left(t,t'+\frac{T}2\right)+\frac{1}{2\sin\frac{\pi}q}\left[A_{11}(t,t')-A_{11}(t,t'+T)\right]&=0\label{eq: linear eq for ETW}
\end{align}
A more detailed discussion is presented in Appendix \ref{app: ETW 2pt}. This linear equation implies that ${\rm det}\left(A\right)=0$ for any time interval with duration $\geq T$. The zero determinant implies that our bulk kernel construction, which requires $KAK^\T=\Id$, can only work until the duration of the time window reaches $T$, or correspondingly $z=z_{\rm max}\equiv \frac{T}2$. In other words, the bulk circuit has a finite width, as is illustrated in Fig. \ref{fig:ETW}(a). At $z=z_{\rm max}$, the right-mover is reflected back as left-mover, in the same way as at $z=0$. This example demonstrates that the topology of bulk spacetime can change depending on the behavior of boundary correlation functions.

In this model, each gate $U(z,t)$ or $V(z,t)$ is a $4\times 4$ matrix that captures "local" inter-site coupling and coupling between the two wedges. To measure the inter-site coupling described by the gate, we denote
\begin{align}
    U(z,t)=\left(\begin{array}{cc}U_{RL}&U_{RR}\\U_{LL}&U_{LR}\end{array}\right)
\end{align}
We consider the off-diagonal $2\times 2$ block $U_{RR}$, which measures the transition amplitude from right-mover to right-mover. We denote the singular values of $U_{RR}$ by $\sin\theta_{Us}(z)$ with $s=1,2$, and similarly for $V(z,t)$. $\theta_{U(V)s}$ are basis-independent measures of the gates. It turns out $\theta_{U(V)s}$ is degenerate in $s=1,2$, which is shown in Fig. \ref{fig:ETW}(d). (At $z=z_{\rm max}$, the gate angle $\theta_U=\theta_V=0$, which is not shown.) 

Interestingly, there is a almost-reflecting wall at $z=z_{\rm max}/2$, where $\theta_{U,V}$ is nearly $0$, but then returns to near $\frac{\pi}2$. This corresponds to the behavior of bulk fermion illustrated in Fig. \ref{fig:ETW}(b). This figure is a color plot of the anticommutator $\left\{\chi_1(0),\psi_{1,2}^{f,p}(z,t)\right\}$ between bulk fermions and a boundary fermion in system $1$ at $t=0$, which is the analog of the earlier figure \ref{fig: kernel}(b), where we now denote the extra flavor index $1,2$ by red and blue. We see that at $z=z_{\rm max}/2$ the bulk fermion from wedge $1$ will transition into wedge $2$ with a high probability. This is the analog of what happens in JT gravity (illustrated in Fig. \ref{fig:ETW}(a)). However, in JT gravity this reflection has probability $1$, but in our model we instead see a finite probability of transmission. The particle passing the surface at $z_{\rm max}/2$ will propagate more or less freely till $z_{\rm max}$, where it will be reflected the same wedge. $z_{\rm max}$ is an analog of "end-of-the-world brane" (EOTW brane) in holographic duality. 

To further understand the EOTW brane, we study the diagonal component (in $s=1,2$) of the kernel of a bulk fermion. The expansion of a future-directed bulk fermion may be written with the diagonal component separated out as:
\begin{align}
    \psi^f_s(z,t)=K(z,t|t-z)\chi_s(t-z)+\sum_{t-z<t'\leq t+z}K(z,t|t')\chi_s(t'),~s=1,2
\end{align}
The inverse of the diagonal component $W_\perp(z)=K^{-1}(z,t|t-z)$ measures the magnitude of the projection of operator $\chi_s(t-z)$ in the subspace $\mathbb{V}_\perp^f$ (see Result \ref{result: orthogonalization}) that is perpendicular to the subspace formed by $\chi_s(t'),~t'\in[t-z+1,t+z]$. Fig. \ref{fig:ETW}(c) captures the behavior of $W_\perp (z)$. We see that $W_\perp(z)$ has a finite almost constant value for $z<z_{\rm max}/2$, and drops to a much smaller but nonzero value for $z\in[z_{\rm max}/2,z_{\rm max}]$. Numerically, the two plateau values of $W_\perp(z)$ are both proportional to $1/q$ in the large $q$ limit. 

We can also study the entanglement entropy in the bulk state. In particular, we study the spatial entanglement between regions $[0,z)$ and $[z,z_{\rm max}]$. As expected, the entire system is in a pure state, and the mutual information $I(z)=S_{[0,z)}+S_{[z,z_{\rm max}]}=2S_{[0,z)}$ is shown in Fig. \ref{fig:ETW}(e). We see that the entanglement entropy is proportional to volume, in a Page-like fashion, but the slope is far from the maximal value ($\log 2$). 

An open question which we reserve for future work is whether there is a physical interpretation of this "doubling" phenomena we observed in term of a modified 2d gravity theory. Using the same technique employed in this subsection, we can study more complicated boundary systems. For example, we can start from a TFD state and turn on a fine-tuned coupling $\hat{\mu}=\hat{\mu}(\beta)$ for a finite time before turning it off. This corresponds to a traversable wormhole that is kept open for a finite time (related to the discussions in Ref.~\cite{gao2017traversable,maldacena2017diving}. Without presenting more details in this work, we will comment that the bulk geometry in this case will contain an end-of-the-world brane that falls into a black hole horizon.

\section{Discussion and Conclusions}\label{sec: discussion}


Let us briefly summarize what we have done. We have developed a general protocol for mapping a theory of (fermionic or bosonic) generalized free fields to canonically-free bulk fields purely from boundary data. For a (0+1)d boundary theory, the boundary two point functions along with a chosen discretization of time are sufficient to uniquely determine the bulk quantum state and dynamics with a corresponding spacetime discretization. The only assumption about the bulk is either the causal structure of the dynamics, or the existence of a linear boundary-to-bulk mapping. Our bulk construction takes the form of a Gaussian quantum circuit, which implements linear evolution of the bulk free fermions or bosons. The bulk two point functions evaluated at the boundary reproduce the boundary model's two point functions, as per required of a duality. Meanwhile, an arbitrary bulk operator in the causal wedge of a boundary region is mapped to a linear superposition of the boundary operators in this region. Our construction is agnostic to the particular type of interacting boundary dynamics, and does not require time translation symmetry or any other symmetry.

Notably, our bulk construction is only enabled when the boundary theory has \textit{interacting} GFFs. The set of single-particle (fermionic or bosonic) operators form an order $\OO(N)$-dimensional subspace of the entire space of operators with exponentially large dimension. Unlike a free theory, the subspace of single-particle operators for an interacting theory is {\it not} preserved by time evolution. The single-particle operators are orthogonal with respect to the anticommutator $\ev{\{.,.\}}_\rho$ for fermions or the commutator $\ev{[.,.]}_\rho$ for bosons, which enables the definition of emergent bulk fields with additional independent degrees of freedom. In other words, the emergent spatial dimension is a geometrization of operator scrambling in the boundary dynamics. Meanwhile, the fact that the dual bulk dynamics are free is a consequence of large-$N$ factorization of the boundary GFFs. 
Our prescription offers a unique and constructive way to determine the bulk free field dynamics \textit{without} making any assumptions about the bulk geometry. The bulk dynamics can then be used as a way to probe the bulk geometry. Our numerical results in large-$q$ SYK model uncover different bulk dynamics for different boundary theory, which generalizes the JT gravity dual in the low energy region. Our results clarified that phenomena such as black hole horizon, bulk negative curvature, end-of-the-world brane, etc. are well-defined at the level of two point functions even if the boundary is not asymptotically conformal invariant. 

We now remark on several important open questions related to this work. First, although our protocol is explicitly realized by discretizing time, the construction of the kernel is mathematically well-defined for continuous time system. The analog of equations \eqref{eq: bulk_ansatz} and \eqref{eq: bulk_alg_cond} can be written down in the continuous case as integral relations on $K$. Explicitly performing the construction in the continuum would enable one to directly determine the bulk dynamics and geometry, which is an avenue we are exploring in further work. 

Secondly, we would like to discuss the effect of considering finite $N$. A crucial aspect of our approach is the ability to perform orthogonalization of the boundary operators in the boundary model to generate the bulk operators. In principle, one can imagine performing an analogous orthogonalization for a finite $N$ theory. However, such a procedure would necessarily need to involve higher-point boundary correlators, and the emergent bulk operators would no longer be linearly related to the boundary operators. Rather, the emergent bulk description would describe interacting bulk fields, whose interactions are suppressed by $\frac{1}{N}$ corrections. For example, the boundary four point function is schematically given by
\begin{align}
    \ev{\chi\chi\chi\chi} = \sum_\text{contractions}\ev{\chi\chi}\ev{\chi\chi} + \frac{1}{N}\ev{\chi\chi\chi\chi}_c + \dots
\end{align}
where all the $\chi$'s generically have different time indices, and $\ev{\chi\chi\chi\chi}_c$ is the connected four point function. The strong hologrpahic principle says that the above expression should be equal to a four point function of the bulk operators evaluated at the boundary. However, the bulk operators will now be some nonlinear function\footnote{For a similar discussion in the usual HKLL protocol, see Ref.~\cite{harlow2018tasi}.} of the boundary operators,
\begin{align}
    \psi = K^{(1)}\chi + K^{(3)}\chi\chi\chi + \dots
\end{align}
Imposing bulk causality in the interacting case amounts to finding a set of coefficients $K^{(n)}$ such that spacelike-separated $\psi$ are still orthonormal. Given given knowledge of the boundary correlators, it is not immediately obvious how to do the orthogonalization in general, or whether this can be done to all orders in $\frac{1}{N}$. However, as causality imposes a rather strong constraint on the operator algebra, we expect that at least perturbatively, this is able to select out a bulk theory. A key question would then be the locality of the bulk theory. Generically, there is no reason to expect the interaction in the bulk will be local, but our approach may lead to a more quantitative probe of the (non)-locality of the bulk theory. It is interesting to relate this direction to the recent works in double-scaled SYK model~\cite{cotler2017black,berkooz2018chord,berkooz2019towards,lin2022bulk}, since the latter provides an example model in which Wick's theorem does not hold, but correlation functions can still be computed exactly.

The third open question is how to generalize our approach to higher dimensions. Given a general boundary model of GFFs in $(d+1)$-dimensions, our approach can be applied directly, whereby the extra boundary spatial coordinates are treated in the same way as the internal flavor indices in the coupled SYK model. If the boundary theory has spatial translation symmetry, we can carry out the bulk reconstruction for each momentum separately, and define bulk fields $\psi_{ia}(\vec{k},z,t)$. The problem is that this construction does not seem to guarantee locality in the tranverse directions. When the boundary is a generic model which could be non-relativistic, this may not be a problem. When the boundary is relativistic with a strict light cone, an open question is whether the emergent bulk also has strict light cone. This discussion is related to entanglement wedge reconstruction, which relates butterfly cones on the boundary to light cones in the bulk~\cite{mezei2017entanglement,qi2017butterfly}. 



We also draw a parallel between the bulk Hilbert space in our construction, and the construction of Hilbert spaces from operator algebras. Because the large-$N$ factorization is only present in a certain state $\rho$ (e.g. in the SYK model, this is the TFD), the bulk Hilbert space is constructed around this state. This is reminiscent of recent results investigating emergent bulk operator algebras in large-$N$ theories~\cite{leutheusser2022subalgebra}. Furthermore, we would like to make some more general connections between our approach and recent work regarding operator algebras and holography~\cite{leutheusser2021causal,leutheusser2022subalgebra}. From the point of view of operator algebra, our construction relies on the qualitative difference between the operator algebras of interacting GFFs and that of free fields or finite-$N$ systems. For finite-$N$ Majorana fermion theories, there is no other operator that anticommutes with all $\chi_i(t=0)$. The operators $\chi_i(t)$ and $\chi_i(0)$ are not independent from each other. Only in the large-$N$ limit do $\chi_i(t)$ and $\chi_i(0)$ become independent, allowing one to define the bulk fermion operators. This is consistent with the operator algebra discussed in Ref. \cite{leutheusser2021causal}. An interesting question is whether we can explicitly follow the approach in this work to obtain time evolution in the black hole interior. From the point of view of the bulk Gaussian quantum circuit, the interior dynamics does not seem to be constrained, since it occurs in the future of all events in the exterior regions. An key question is providing a physical explanation for selecting out the interior dynamics given by the modular operator in Ref. \cite{leutheusser2021causal}. Another question is how to better understand how bulk causality (or more generally, geometric locality) emerges from the operator algebra. In our construction, bulk causality is imposed, and this is sufficiently constraining to pick out an emergent bulk operator algebra. This appears to be consistent with Ref. \cite{leutheusser2021causal} which suggests that causality is not a generic feature of operator algebras, but rather emerges for a special boundary operator algebras. In this reference, the authors argued that sharp bulk light cones can only emerge when the boundary operator algebra is a Type III$_1$ von Neumann algebra and acts on a non-factorizable Hilbert space. While this is consistent with our construction of an bulk operator algebra in the large-$N$ limit when the boundary operators can be shown to be of Type III$_1$, it is not immediately obvious from the perspective of our construction that locality \textit{cannot} be guaranteed for boundary operator algebras of Type II, where the operators are infinite-dimensional but admit a suitable notion of "finite rank" projectors. In other words, it is not obvious whether the ability to execute our orthogonalization procedure necessitates a type III boundary algebra. 

Lastly, we discuss some general remarks and takeaways from our work. To date, holographic duality is only well established as duality between the low-energy Hilbert spaces of perturbative quantum gravity and a large-$N$ gauge theory with conformal symmetry. Prior work seems to suggest the importance of matching the conformal symmetry in the boundary theory to isometries in the gravitational bulk. However, our results show that one can distill an emergent bulk description even in the absence of conformal symmetry in the boundary theory. This suggests that holography can potentially be extended to more general models without a canonical bulk dual (e.g. the SYK model). The ability to conduct the orthogonalization procedure to produce bulk operators can give some insight into the minimal criteria necessary for a theory to contain an emergent local bulk description. Furthermore, because our explicit construction of the bulk naturally gives an isomorphism, one can potentially use our method to construct putative holographic boundary models for a given \textit{bulk} theory, such as matter on a de Sitter background. Finding a "boundary" dual theory of de Sitter space gravity is an active area of research~\cite{maldacena2021two,cotler2021emergent,susskind2021entanglement}, and our approach provides a new angle in investigating this problem. 

{\bf Acknowledgement.} We would like to thank Yingfei Gu, Hong Liu, Yuri Lensky, and Raghu Mahajan for helpful discussions. This work is supported by the National Science Foundation under grant No. 2111998, and the Simons Foundation. This work was partially finished when XLQ was visiting the Institute for Advanced Study, Tsinghua University (IASTU). XLQ would like to thank IASTU for hospitality.


\bibliographystyle{jhep}
\bibliography{SYK_bib}

\providecommand{\href}[2]{#2}\begingroup\raggedright\begin{thebibliography}{10}

\bibitem{maldacena1999large}
J.~Maldacena, {\it The large-n limit of superconformal field theories and
  supergravity},  {\em International journal of theoretical physics} {\bf 38}
  (1999), no.~4 1113--1133.

\bibitem{witten1998anti}
E.~Witten, {\it Anti de sitter space and holography},  1998.

\bibitem{gubser1998gauge}
S.~S. Gubser, I.~R. Klebanov, and A.~M. Polyakov, {\it Gauge theory correlators
  from non-critical string theory},  {\em Physics Letters B} {\bf 428} (1998),
  no.~1-2 105--114.

\bibitem{balasubramanian1999bulk}
V.~Balasubramanian, P.~Kraus, and A.~Lawrence, {\it Bulk versus boundary
  dynamics in anti--de sitter spacetime},  {\em Physical Review D} {\bf 59}
  (1999), no.~4 046003.

\bibitem{hooft1993dimensional}
G.~Hooft, {\it Dimensional reduction in quantum gravity},  {\em arXiv preprint
  gr-qc/9310026} (1993).

\bibitem{susskind1995world}
L.~Susskind, {\it The world as a hologram},  {\em Journal of Mathematical
  Physics} {\bf 36} (1995), no.~11 6377--6396.

\bibitem{bekenstein1994entropy}
J.~D. Bekenstein, {\it Entropy bounds and black hole remnants},  {\em Physical
  Review D} {\bf 49} (1994), no.~4 1912.

\bibitem{fischler1998holography}
W.~Fischler and L.~Susskind, {\it Holography and cosmology},  {\em arXiv
  preprint hep-th/9806039} (1998).

\bibitem{bousso2002holographic}
R.~Bousso, {\it The holographic principle},  {\em Reviews of Modern Physics}
  {\bf 74} (2002), no.~3 825.

\bibitem{ryu2006holographic}
S.~Ryu and T.~Takayanagi, {\it Holographic derivation of entanglement entropy
  from the anti--de sitter space/conformal field theory correspondence},  {\em
  Physical review letters} {\bf 96} (2006), no.~18 181602.

\bibitem{hamilton2006holographic}
A.~Hamilton, D.~Kabat, G.~Lifschytz, and D.~A. Lowe, {\it Holographic
  representation of local bulk operators},  {\em Physical Review D} {\bf 74}
  (2006), no.~6 066009.

\bibitem{hamilton2006local}
A.~Hamilton, D.~Kabat, G.~Lifschytz, and D.~A. Lowe, {\it Local bulk operators
  in ads/cft correspondence: A boundary view of horizons and locality},  {\em
  Physical Review D} {\bf 73} (2006), no.~8 086003.

\bibitem{hamilton2007local}
A.~Hamilton, D.~Kabat, G.~Lifschytz, and D.~A. Lowe, {\it Local bulk operators
  in ads/cft correspondence: A holographic description of the black hole
  interior},  {\em Physical Review D} {\bf 75} (2007), no.~10 106001.

\bibitem{das2018modular}
S.~Das and B.~Ezhuthachan, {\it Modular hamiltonians and large diffeomorphisms
  in ads3},  {\em Journal of High Energy Physics} {\bf 2018} (2018), no.~12
  1--17.

\bibitem{roy2018bulk}
S.~R. Roy and D.~Sarkar, {\it Bulk metric reconstruction from boundary
  entanglement},  {\em Physical Review D} {\bf 98} (2018), no.~6 066017.

\bibitem{czech2012gravity}
B.~Czech, J.~L. Karczmarek, F.~Nogueira, and M.~Van~Raamsdonk, {\it The gravity
  dual of a density matrix},  {\em Classical and Quantum Gravity} {\bf 29}
  (2012), no.~15 155009.

\bibitem{wall2014maximin}
A.~C. Wall, {\it Maximin surfaces, and the strong subadditivity of the
  covariant holographic entanglement entropy},  {\em Classical and Quantum
  Gravity} {\bf 31} (2014), no.~22 225007.

\bibitem{headrick2014causality}
M.~Headrick, V.~E. Hubeny, A.~Lawrence, and M.~Rangamani, {\it Causality \&
  holographic entanglement entropy},  {\em Journal of High Energy Physics} {\bf
  2014} (2014), no.~12 1--36.

\bibitem{dong2016reconstruction}
X.~Dong, D.~Harlow, and A.~C. Wall, {\it Reconstruction of bulk operators
  within the entanglement wedge in gauge-gravity duality},  {\em Physical
  review letters} {\bf 117} (2016), no.~2 021601.

\bibitem{faulkner2017bulk}
T.~Faulkner and A.~Lewkowycz, {\it Bulk locality from modular flow},  {\em
  Journal of High Energy Physics} {\bf 2017} (2017), no.~7 1--31.

\bibitem{penington2020entanglement}
G.~Penington, {\it Entanglement wedge reconstruction and the information
  paradox},  {\em Journal of High Energy Physics} {\bf 2020} (2020), no.~9
  1--84.

\bibitem{almheiri2019entropy}
A.~Almheiri, N.~Engelhardt, D.~Marolf, and H.~Maxfield, {\it The entropy of
  bulk quantum fields and the entanglement wedge of an evaporating black hole},
   {\em Journal of High Energy Physics} {\bf 2019} (2019), no.~12 1--47.

\bibitem{kubo1957statistical}
R.~Kubo, {\it Statistical-mechanical theory of irreversible processes. i.
  general theory and simple applications to magnetic and conduction problems},
  {\em Journal of the Physical Society of Japan} {\bf 12} (1957), no.~6
  570--586.

\bibitem{martin1959theory}
P.~C. Martin and J.~Schwinger, {\it Theory of many-particle systems. i},  {\em
  Physical Review} {\bf 115} (1959), no.~6 1342.

\bibitem{shenker2014black}
S.~H. Shenker and D.~Stanford, {\it Black holes and the butterfly effect},
  {\em Journal of High Energy Physics} {\bf 2014} (2014), no.~3 1--25.

\bibitem{maldacena2018eternal}
J.~Maldacena and X.-L. Qi, {\it Eternal traversable wormhole},  {\em arXiv
  preprint arXiv:1804.00491} (2018).

\bibitem{banks1998ads}
T.~Banks, M.~R. Douglas, G.~T. Horowitz, and E.~Martinec, {\it Ads dynamics
  from conformal field theory},  {\em arXiv preprint hep-th/9808016} (1998).

\bibitem{harlow2018tasi}
D.~Harlow, {\it Tasi lectures on the emergence of the bulk in ads/cft},  {\em
  arXiv preprint arXiv:1802.01040} (2018).

\bibitem{araki1963generalization}
H.~Araki, {\it A generalization of borchers theorem},  {\em Helvetica Physica
  Acta (Switzerland)} {\bf 36} (1963).

\bibitem{borchers1961vollstandigkeit}
H.~Borchers, {\it {\"U}ber die vollst{\"a}ndigkeit lorentzinvarianter felder in
  einer zeitartigen r{\"o}hre},  {\em Il Nuovo Cimento (1955-1965)} {\bf 19}
  (1961) 787--793.

\bibitem{strohmaier2023analytic}
A.~\hypertarget{mainproof} Strohmaier and E.~Witten, {\it Analytic states in
  quantum field theory on curved spacetimes},  {\em arXiv preprint
  arXiv:2302.02709} (2023).

\bibitem{strohmaier2023timelike}
A.~\hypertarget{easyttt} Strohmaier and E.~Witten, {\it The timelike tube
  theorem in curved spacetime},  {\em arXiv preprint arXiv:2303.16380} (2023).

\bibitem{greenberg1961generalized}
O.~W. Greenberg, {\it Generalized free fields and models of local field
  theory},  {\em Annals of Physics} {\bf 16} (1961), no.~2 158--176.

\bibitem{dutsch2003generalized}
M.~D{\"u}tsch and K.-H. Rehren, {\it Generalized free fields and the ads-cft
  correspondence},  in {\em Annales Henri Poincare}, vol.~4, pp.~613--635,
  Springer, 2003.

\bibitem{sachdev1993gapless}
S.~Sachdev and J.~Ye, {\it Gapless spin-fluid ground state in a random quantum
  heisenberg magnet},  {\em Physical review letters} {\bf 70} (1993), no.~21
  3339.

\bibitem{kitaev2014hidden}
A.~Kitaev, {\it Hidden correlations in the hawking radiation and thermal
  noise},  in {\em talk given at Fundamental Physics Prize Symposium}, 2014.

\bibitem{kitaev2015simple}
A.~Kitaev, ``A simple model of quantum holography.'' talks given at the KITP
  Program: Entanglement in Strongly-Correlated Quantum Matter, April 7 and May
  27, 2015.

\bibitem{parker2019universal}
D.~E. Parker, X.~Cao, A.~Avdoshkin, T.~Scaffidi, and E.~Altman, {\it A
  universal operator growth hypothesis},  {\em Physical Review X} {\bf 9}
  (2019), no.~4 041017.

\bibitem{sarosi2017ads}
G.~S{\'a}rosi, {\it Ads$\_2 $ holography and the syk model},  {\em arXiv
  preprint arXiv:1711.08482} (2017).

\bibitem{gross2017all}
D.~J. Gross and V.~Rosenhaus, {\it All point correlation functions in syk},
  {\em Journal of High Energy Physics} {\bf 2017} (2017), no.~12 1--58.

\bibitem{gross2017bulk}
D.~J. Gross and V.~Rosenhaus, {\it The bulk dual of syk: cubic couplings},
  {\em Journal of High Energy Physics} {\bf 2017} (2017), no.~5 1--33.

\bibitem{maldacena2016comments}
D.~S. Juan~Maldacena, {\it Comments on the sachdev-ye-kitaev model},  {\em
  arXiv preprint arXiv:1604.07818} (2016).

\bibitem{lensky2021rescuing}
Y.~D. Lensky and X.-L. Qi, {\it Rescuing a black hole in the large-q coupled
  syk model},  {\em Journal of High Energy Physics} {\bf 2021} (2021), no.~4
  1--42.

\bibitem{maldacena2016remarks}
J.~Maldacena and D.~Stanford, {\it Remarks on the sachdev-ye-kitaev model},
  {\em Physical Review D} {\bf 94} (2016), no.~10 106002.

\bibitem{sachdev2015bekenstein}
S.~Sachdev, {\it Bekenstein-hawking entropy and strange metals},  {\em Physical
  Review X} {\bf 5} (2015), no.~4 041025.

\bibitem{trunin2021pedagogical}
D.~A. Trunin, {\it Pedagogical introduction to the sachdev-ye-kitaev model and
  two-dimensional dilaton gravity},  {\em Physics-Uspekhi} {\bf 64} (2021),
  no.~3 219.

\bibitem{kitaev2018soft}
A.~Kitaev and S.~J. Suh, {\it The soft mode in the sachdev-ye-kitaev model and
  its gravity dual},  {\em Journal of High Energy Physics} {\bf 2018} (2018),
  no.~5 1--68.

\bibitem{gu2022two}
Y.~Gu, A.~Kitaev, and P.~Zhang, {\it A two-way approach to out-of-time-order
  correlators},  {\em Journal of High Energy Physics} {\bf 2022} (2022), no.~3
  1--37.

\bibitem{leutheusser2021emergent}
S.~Leutheusser and H.~Liu, {\it Emergent times in holographic duality},  {\em
  arXiv preprint arXiv:2112.12156} (2021).

\bibitem{peschel2009reduced}
I.~Peschel and V.~Eisler, {\it Reduced density matrices and entanglement
  entropy in free lattice models},  {\em Journal of physics a: mathematical and
  theoretical} {\bf 42} (2009), no.~50 504003.

\bibitem{peschel2003calculation}
I.~Peschel, {\it Calculation of reduced density matrices from correlation
  functions},  {\em Journal of Physics A: Mathematical and General} {\bf 36}
  (2003), no.~14 L205.

\bibitem{araki1970quasifree}
H.~Araki, {\it On quasifree states of car and bogoliubov automorphisms},  {\em
  Publications of the Research Institute for Mathematical Sciences} {\bf 6}
  (1970), no.~3 385--442.

\bibitem{calabrese2009entanglement}
P.~Calabrese and J.~Cardy, {\it Entanglement entropy and conformal field
  theory},  {\em Journal of physics a: mathematical and theoretical} {\bf 42}
  (2009), no.~50 504005.

\bibitem{lensky2023}
Y.~Lensky and X.-L. Qi, ``Kinetic space and the dual theory of large-$q$ syk
  model.'' in progress.

\bibitem{maldacena2003eternal}
J.~Maldacena, {\it Eternal black holes in anti-de sitter},  {\em Journal of
  High Energy Physics} {\bf 2003} (2003), no.~04 021.

\bibitem{maldacena2013cool}
J.~Maldacena and L.~Susskind, {\it Cool horizons for entangled black holes},
  {\em Fortschritte der Physik} {\bf 61} (2013), no.~9 781--811.

\bibitem{gu2017local}
Y.~Gu, X.-L. Qi, and D.~Stanford, {\it Local criticality, diffusion and chaos
  in generalized sachdev-ye-kitaev models},  {\em Journal of High Energy
  Physics} {\bf 2017} (2017), no.~5 1--37.

\bibitem{bi2017instability}
Z.~Bi, C.-M. Jian, Y.-Z. You, K.~A. Pawlak, and C.~Xu, {\it Instability of the
  non-fermi-liquid state of the sachdev-ye-kitaev model},  {\em Physical Review
  B} {\bf 95} (2017), no.~20 205105.

\bibitem{chowdhury2022sachdev}
D.~Chowdhury, A.~Georges, O.~Parcollet, and S.~Sachdev, {\it Sachdev-ye-kitaev
  models and beyond: Window into non-fermi liquids},  {\em Reviews of Modern
  Physics} {\bf 94} (2022), no.~3 035004.

\bibitem{banerjee2017solvable}
S.~Banerjee and E.~Altman, {\it Solvable model for a dynamical quantum phase
  transition from fast to slow scrambling},  {\em Physical Review B} {\bf 95}
  (2017), no.~13 134302.

\bibitem{fu2017supersymmetric}
W.~Fu, D.~Gaiotto, J.~Maldacena, and S.~Sachdev, {\it Supersymmetric
  sachdev-ye-kitaev models},  {\em Physical Review D} {\bf 95} (2017), no.~2
  026009.

\bibitem{li2017supersymmetric}
T.~Li, J.~Liu, Y.~Xin, and Y.~Zhou, {\it Supersymmetric syk model and random
  matrix theory},  {\em Journal of High Energy Physics} {\bf 2017} (2017),
  no.~6 1--34.

\bibitem{lin2022bulk}
H.~W. Lin, {\it The bulk hilbert space of double scaled syk},  {\em Journal of
  High Energy Physics} {\bf 2022} (2022), no.~11 1--33.

\bibitem{larkin1969quasiclassical}
A.~I. Larkin and Y.~N. Ovchinnikov, {\it Quasiclassical method in the theory of
  superconductivity},  {\em Sov Phys JETP} {\bf 28} (1969), no.~6 1200--1205.

\bibitem{shenker2014multiple}
S.~H. Shenker and D.~Stanford, {\it Multiple shocks},  {\em Journal of High
  Energy Physics} {\bf 2014} (2014), no.~12 1--20.

\bibitem{gao2017traversable}
P.~Gao, D.~L. Jafferis, and A.~C. Wall, {\it Traversable wormholes via a double
  trace deformation},  {\em Journal of High Energy Physics} {\bf 2017} (2017),
  no.~12 1--25.

\bibitem{maldacena2017diving}
J.~Maldacena, D.~Stanford, and Z.~Yang, {\it Diving into traversable
  wormholes},  {\em Fortschritte der Physik} {\bf 65} (2017), no.~5 1700034.

\bibitem{cotler2017black}
J.~S. Cotler, G.~Gur-Ari, M.~Hanada, J.~Polchinski, P.~Saad, S.~H. Shenker,
  D.~Stanford, A.~Streicher, and M.~Tezuka, {\it Black holes and random
  matrices},  {\em Journal of High Energy Physics} {\bf 2017} (2017), no.~5
  1--54.

\bibitem{berkooz2018chord}
M.~Berkooz, P.~Narayan, and J.~Simon, {\it Chord diagrams, exact correlators in
  spin glasses and black hole bulk reconstruction},  {\em Journal of High
  Energy Physics} {\bf 2018} (2018), no.~8 1--39.

\bibitem{berkooz2019towards}
M.~Berkooz, M.~Isachenkov, V.~Narovlansky, and G.~Torrents, {\it Towards a full
  solution of the large n double-scaled syk model},  {\em Journal of High
  Energy Physics} {\bf 2019} (2019), no.~3 1--72.

\bibitem{mezei2017entanglement}
M.~Mezei and D.~Stanford, {\it On entanglement spreading in chaotic systems},
  {\em Journal of High Energy Physics} {\bf 2017} (2017), no.~5 1--24.

\bibitem{qi2017butterfly}
X.-L. Qi and Z.~Yang, {\it Butterfly velocity and bulk causal structure},  {\em
  arXiv preprint arXiv:1705.01728} (2017).

\bibitem{leutheusser2022subalgebra}
S.~Leutheusser and H.~Liu, {\it Subalgebra-subregion duality: emergence of
  space and time in holography},  {\em arXiv preprint arXiv:2212.13266} (2022).

\bibitem{leutheusser2021causal}
S.~Leutheusser and H.~Liu, {\it Causal connectability between quantum systems
  and the black hole interior in holographic duality},  {\em arXiv preprint
  arXiv:2110.05497} (2021).

\bibitem{maldacena2021two}
J.~Maldacena, G.~J. Turiaci, and Z.~Yang, {\it Two dimensional nearly de sitter
  gravity},  {\em Journal of High Energy Physics} {\bf 2021} (2021), no.~1
  1--60.

\bibitem{cotler2021emergent}
J.~Cotler and K.~Jensen, {\it Emergent unitarity in de sitter from matrix
  integrals},  {\em Journal of High Energy Physics} {\bf 2021} (2021), no.~12
  1--36.

\bibitem{susskind2021entanglement}
L.~Susskind, {\it Entanglement and chaos in de sitter holography: An syk
  example},  {\em arXiv preprint arXiv:2109.14104} (2021).

\end{thebibliography}\endgroup

\appendix
\newpage

\section{QR decomposition}\label{app: qr}
In our bulk construction, we would like to find solutions $K$ to the matrix equation $KAK^\T = \Id$ where $K$ is restricted to be upper or lower triangular for bulk operators on the past and future light sheets. These solutions will correspond to $K_f$ and $K_p$, respectively, as dictated by the ordering choice for the bulk fermions along $\Sigma_f$ and $\Sigma_p$. The ordering choice is exemplified by the following schematic: 
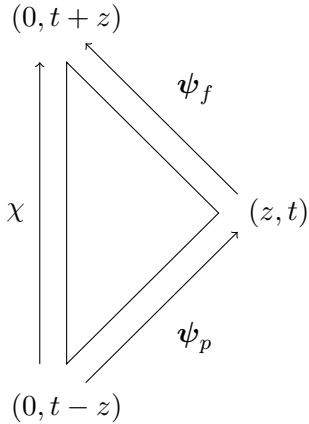
\begin{figure}[H]
    \centering
    \label{fig:ordering_convention}
    \begin{tikzpicture}[every node/.style={inner sep=0.5em}]
    \draw (0,0) node[anchor=north,inner sep=1em]{$(0,t-z)$}
      -- (0,4) node[anchor=south,inner sep=1em]{$(0,t+z)$}
      -- (2,2) node[anchor=west,inner sep=1em]{$(z,t)$}
      -- cycle;

    \draw [->] ({-1/(2*sqrt(2))},0) -- ({-1/(2*sqrt(2))},4) node[midway, left] {$\chi$};
    \draw [<-] (0.25,4.25) -- (2.25,2.25) node[midway,above right] {$\bm\psi_f$};
    \draw [->] (0.25,-0.25) -- (2.25,1.75) node[midway,below right] {$\bm\psi_p$};
    \end{tikzpicture}
    \caption{Schematic of the ordering choice along $\Sigma_f$ and $\Sigma_p$ of the vertex $(z,t)$, as well as the ordering along the boundary interval $I$ in its causal wedge.}\label{fig: ordering}
\end{figure}

Because $A$ is symmetric, it may be diagonalized as $A = VDV^{\T}$ where $D$ is a diagonal matrix and $V$ is orthogonal. Eigenvalues of $D$ are non-negative since $A$ is positive semi-definite. Assuming $A$ is nonsingular, we can define the symmetric square root $A^{-1/2}=VD^{-1/2}V^\T$. The equation $KAK^\T=\Id$ can therefore be written as $Q=KA^{1/2},~QQ^\T=\Id$, where $Q$ is orthogonal. Consequently, we have
\begin{align}
    A^{-1/2}=Q^{f(p)\T}K^{f(p)}\label{eq: QR and QL}
\end{align}
In the ordering convention of Fig. \ref{fig: ordering}, $K_f$ is a upper triangular matrix, so that Eq. (\ref{eq: QR and QL}) is a QR decomposition. Similarly $K_p$ is lower triangular, which is obtained by a QL decomposition of $A^{-1/2}$. (Equivalently, one can obtain $K_p$ by a QR decomposition after reversing the order of both the boundary time and the $\psi_p$ coordinate.) 

\begin{fact}
    For a non-singular matrix $M$, the upper triangular matrix in QR decomposition $M=QR$ is unique up to a sign for each \textit{row} of $R$.
\end{fact}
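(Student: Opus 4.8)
The plan is to compare two putative QR decompositions and show that the only surviving freedom is a sign on each row. Suppose $M = QR = Q'R'$ with $Q,Q'$ orthogonal and $R,R'$ upper triangular. First I would use nonsingularity of $M$: since $Q$ is orthogonal, hence invertible, $R = Q^{\T}M$ is invertible, and likewise $R'$; in particular both have nonzero diagonal entries, and $R^{-1}$ exists. This legitimizes forming the single matrix $T \equiv Q'^{\T}Q = R'R^{-1}$, where the second equality comes from rearranging $QR = Q'R'$.

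The key observation is that $T$ admits two structurally different descriptions. Read as $Q'^{\T}Q$, it is a product of orthogonal matrices and is therefore orthogonal. Read as $R'R^{-1}$, it is a product of an upper triangular matrix with the inverse of an upper triangular matrix (itself upper triangular), and is therefore upper triangular. The main step is then to prove that a matrix which is \emph{simultaneously} orthogonal and upper triangular must be diagonal with $\pm 1$ entries. I would argue this by column orthonormality, inducting on the columns: the first column of $T$ is $(T_{11},0,\dots,0)^{\T}$, so unit norm forces $T_{11}=\pm 1$; orthogonality of the second column against the first annihilates its top entry, and unit norm forces $T_{22}=\pm 1$; continuing through the remaining columns shows $T = D$ with $D = \mathrm{diag}(\pm 1,\dots,\pm 1)$.

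Finally I would translate the result back. From $R' = TR = DR$ and $(DR)_{ij} = D_{ii}R_{ij}$, left multiplication by $D$ scales the $i$-th row of $R$ by the sign $D_{ii}$, which is precisely the claimed ambiguity; correspondingly $Q' = QD$ absorbs the same signs into the columns of $Q$. The only genuine content is the orthogonal-plus-triangular-implies-signed-diagonal lemma, and even that is routine linear algebra; the one point requiring care is invoking nonsingularity of $M$ up front, since it is what guarantees $R^{-1}$ exists and makes the identity $T = R'R^{-1}$ meaningful.
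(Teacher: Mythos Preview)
Your proof is correct and matches the paper's approach almost exactly: both compare two decompositions via the matrix $Q'^{\T}Q = R'R^{-1}$, observe it is simultaneously orthogonal and upper triangular, and conclude it must be a diagonal sign matrix. The paper's induction runs from the last row rather than the first column, but this is cosmetic; you also make the role of nonsingularity more explicit than the paper does.
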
 
\begin{proof}
    Let $M = QR = \Qt\Rt$. Then
    \begin{align}
        \Qt^\T Q = \Rt R^{-1}
    \end{align}
    Since the product of upper right triangular matrices is also upper right triangular, the RHS is upper right triangular. Furthermore, the LHS is orthogonal. The only matrices that are both orthogonal and upper right triangular are the diagonal "sign" matrices
    \begin{align}
        D_\pm \equiv \begin{pNiceMatrix}
            \pm 1 & 0 & \Ldots & & 0 \\
            0 & \pm 1 & \Ddots & & \Vdots\\
            \Vdots & \Ddots &  &  & \\
            & & & & 0 \\
            0 & \Ldots & & 0 & \pm 1
        \end{pNiceMatrix}
    \end{align}
    This can be shown by taking an upper right triangular matrix, and imposing orthonormality of the rows or columns. (e.g. start from the last row, impose normality, then impose orthogonality with the second to last row, impose normality on the second to last row, and so on). Hence, any two QR decompositions are related as
    \begin{align}
        \Qt\Rt = QD_{\pm}R
    \end{align}
    where $\Rt$ differs only by $R$ by a sign choice for each row.
\end{proof}
This implies the kernel is unique up to a sign choice for each row of the kernel.

\section{An alternate derivation of local bulk dynamics}\label{app: bulk_locality}
In this section, we present an alternative argument that leads to the bulk description by a local Gaussian quantum circuit. In Sec.~\ref{sec: framework}, we start by making an ansatz that the bulk theory is given by a certain Gaussian quantum circuit, and then prove that there exists a unique kernel (up to local orthogonal transformations) with compact support that is consistent with the ansatz, and determines the bulk theory from boundary correlation functions. Here we would like to offer a different line of reasoning. We instead start by assuming compact support of the kernel, and that the canonical anticommutation relations hold only within each time slice. From this, we derive that the bulk dynamics are unitary \textit{and} free, and can therefore be represented by a local Gaussian quantum circuit of the form given in Sec.~\ref{sec: framework}. 

As a starting point, consider a boundary model with a discretization of time $t=\tb\Delta t,~\tb\in \mathbb{Z}$, and the anticommutator matrix $A$ defined in Eq.~\eqref{eq: boundary spectral}. We define the bulk fermions $\psi_{iR,L}(\zb,\tb)$ as generators of the linear spaces $\mathbb{W}^{f,p}_{\zb,\tb}$ in Result \ref{result: orthogonalization}. As a reminder, such linear spaces are defined entirely in terms of $\mathbb{V}_{\left[\tb_1,\tb_2\right]}$ and the inner product defined by matrix $A$, which does not require the bulk dynamics to be unitary. With this definition, we can prove the following statements.

\begin{claim}[Orthogonality of spacelike-separated operators]\label{result: locality}
    The bulk fermions constructed using the orthogonalization procedure in Result. \ref{result: orthogonalization} anticommute when the following hold, with the convention that $\zb>\zb'$,
    \begin{align}
        \left\{\psi_{iR}(\zb,\tb),\psi_{jR,L}(\zb',\tb')\right\}&=0,~\text{if}~\zb-\zb'>\tb-\tb'\geq \zb'-\zb\\
        \left\{\psi_{iL}(\zb,\tb),\psi_{jR,L}(\zb',\tb')\right\}&=0,~\text{if}~\zb-\zb'\geq \tb-\tb'> \zb'-\zb
    \end{align}
\end{claim}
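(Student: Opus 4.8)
The plan is to reduce each anticommutation statement to an orthogonality statement about the linear spaces of Result~\ref{result: orthogonalization}, and then reduce that orthogonality to a nesting of boundary time intervals. The key input is item~1 of Result~\ref{result: orthogonalization}: the anticommutator of two bulk operators is precisely their inner product, $\left\{\psi,\psi'\right\}=(\psi,\psi')$, so it suffices to exhibit a subspace containing one operator to which the other is orthogonal. Two structural facts do the rest. First, by item~3 of the same Result the orthogonal-complement decompositions $\mathbb{W}^{f}_{\zb',\tb'}\oplus\mathbb{V}_{[\tb'-\zb'+1,\tb'+\zb']}=\mathbb{V}_{[\tb'-\zb',\tb'+\zb']}$ and $\mathbb{W}^{p}_{\zb',\tb'}\oplus\mathbb{V}_{[\tb'-\zb',\tb'+\zb'-1]}=\mathbb{V}_{[\tb'-\zb',\tb'+\zb']}$ show that both $\psi_{jR}(\zb',\tb')$ and $\psi_{jL}(\zb',\tb')$ lie in $\mathbb{V}_{[\tb'-\zb',\tb'+\zb']}$. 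Second, the operator at the highlighted vertex is orthogonal to a one-sided-shrunk interval: $\psi_{iR}(\zb,\tb)\perp\mathbb{V}_{[\tb-\zb+1,\tb+\zb]}$ and $\psi_{iL}(\zb,\tb)\perp\mathbb{V}_{[\tb-\zb,\tb+\zb-1]}$.

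First I would treat the right-moving case. I claim the hypothesis $\zb-\zb'>\tb-\tb'\geq\zb'-\zb$ is exactly equivalent to the interval inclusion $[\tb'-\zb',\tb'+\zb']\subseteq[\tb-\zb+1,\tb+\zb]$. Indeed, since $\zb\pm\tb\in\Z$ and $\zb'\pm\tb'\in\Z$, the quantities $\tb-\zb,\ \tb'-\zb',\ \tb+\zb,\ \tb'+\zb'$ are all integers, so the left-endpoint condition $\tb'-\zb'\geq\tb-\zb+1$ is equivalent to $\zb-\zb'>\tb-\tb'$, and the right-endpoint condition $\tb'+\zb'\leq\tb+\zb$ is equivalent to $\tb-\tb'\geq\zb'-\zb$. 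Given this inclusion, the inclusion-preserving (nesting) property Eq.~\eqref{eq: nesting} yields $\mathbb{V}_{[\tb'-\zb',\tb'+\zb']}\subseteq\mathbb{V}_{[\tb-\zb+1,\tb+\zb]}$. Hence both $\psi_{jR}(\zb',\tb')$ and $\psi_{jL}(\zb',\tb')$ lie in $\mathbb{V}_{[\tb-\zb+1,\tb+\zb]}$, which is orthogonal to $\psi_{iR}(\zb,\tb)$; the anticommutator, being the inner product, therefore vanishes.

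The left-moving case is handled identically, with the one-sided shrinking shifted to the other endpoint: by the same integrality argument the hypothesis $\zb-\zb'\geq\tb-\tb'>\zb'-\zb$ translates into $[\tb'-\zb',\tb'+\zb']\subseteq[\tb-\zb,\tb+\zb-1]$ (the left endpoint now giving $\zb-\zb'\geq\tb-\tb'$ and the right endpoint giving $\tb-\tb'>\zb'-\zb$). Since $\psi_{iL}(\zb,\tb)\perp\mathbb{V}_{[\tb-\zb,\tb+\zb-1]}$ and the nesting property places $\psi_{jR,L}(\zb',\tb')$ inside $\mathbb{V}_{[\tb-\zb,\tb+\zb-1]}$, the inner product again vanishes. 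I expect the only delicate point to be this discrete bookkeeping---keeping the strict versus non-strict inequalities aligned with the $\pm 1$ shifts of the interval endpoints---which is exactly where integrality $\zb\pm\tb\in\Z$ is used; there is no analytic obstacle beyond it.
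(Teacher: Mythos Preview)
Your proposal is correct and follows essentially the same route as the paper: both arguments use that $\mathbb{W}^{f,p}_{\zb',\tb'}\subseteq\mathbb{V}_{[\tb'-\zb',\tb'+\zb']}$, convert the inequality hypotheses into the interval inclusion $[\tb'-\zb',\tb'+\zb']\subseteq[\tb-\zb+1,\tb+\zb]$ (respectively $[\tb-\zb,\tb+\zb-1]$), invoke the nesting property Eq.~\eqref{eq: nesting}, and conclude from $\mathbb{W}^{f}_{\zb,\tb}\perp\mathbb{V}_{[\tb-\zb+1,\tb+\zb]}$ (respectively $\mathbb{W}^{p}_{\zb,\tb}\perp\mathbb{V}_{[\tb-\zb,\tb+\zb-1]}$). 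Your version is slightly more explicit about using the integrality $\zb\pm\tb\in\Z$ to convert the strict inequality into the $+1$ endpoint shift, which the paper leaves implicit.
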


\begin{proof}
    Consider the right-movers $\psi_{iR}(\zb,\tb)$, which one can think about as further in the bulk interior than the point $(\zb',\tb')$. These operators span $\mathbb{W}^f_{\zb,\tb}$, which by construction is orthogonal to $\mathbb{V}_{\left[\tb-\zb+1,\tb+\zb\right]}$, denoted by
    \begin{align}
        \mathbb{W}^f_{\zb,\tb}\perp \mathbb{V}_{\left[\tb-\zb+1,\tb+\zb\right]}
    \end{align}
    $\psi_{jR}(\zb',\tb')$ spans space $\mathbb{W}^f_{\zb',\tb'}\subseteq \mathbb{V}_{\left[\tb'-\zb',\tb'+\zb'\right]}$. Similarly, $\psi_{jL}(\zb',\tb')$ spans $\mathbb{W}^p_{\zb',\tb'}\subseteq \mathbb{V}_{\left[\tb'-\zb',\tb'+\zb'\right]}$. When $\zb-\zb'>\tb-\tb'\geq \zb'-\zb$, we have $\tb'-\zb'>\tb-\zb,~\tb'+\zb'\leq \tb+\zb$, such that $\left[\tb'-\zb',\tb'+\zb'\right]\subseteq \left[\tb-\zb+1,\tb+\zb\right]$. Consequently \begin{align}
        \mathbb{W}^{f,p}_{\zb',\tb'}\subseteq \mathbb{V}_{\left[\tb'-\zb',\tb'+\zb'\right]}\subseteq \mathbb{V}_{\left[\tb-\zb+1,\tb+\zb\right]}
    \end{align}
    The above two equations lead to the conclusion 
    \begin{align}
        \mathbb{W}^f_{\zb,\tb}\perp \mathbb{W}^{f,p}_{\zb',\tb'}
    \end{align}
    Following the same reasoning, we can prove $\mathbb{W}^p_{\zb,\tb}\perp \mathbb{W}^{f,p}_{\zb',\tb'}$ for $\zb-\zb'\geq \tb-\tb'> \zb'-\zb$.
\end{proof}

Result \ref{result: locality} clarifies that the bulk has a sharp light cone, like a local quantum circuit. We can also use the linear space formalism to provide a direct proof that there is a unitary gate mapping $\psi_{iR}\left(\zb-\frac12,t-\frac12\right)$ and $\psi_{iL}(\zb,t)$ to $\psi_{iL}\left(\zb-\frac12,t+\frac12\right)$ and $\psi_{iR}(\zb,t)$, as shown below.

\begin{claim}[Unitarity of gates]\label{result: unitarity}
The linear space spanned by $\psi_{iR}\left(\zb-\frac12,\tb-\frac12\right)$ and $\psi_{iL}(\zb,\tb)$ is the same as that spanned by $\psi_{iL}\left(\zb-\frac12,\tb+\frac12\right)$ and $\psi_{iR}(\zb,\tb)$.
\begin{align}
    \mathbb{W}^f_{\zb-\frac12,\tb-\frac12}\oplus \mathbb{W}^p_{\zb,\tb}=\mathbb{W}^p_{\zb-\frac12,\tb+\frac12}\oplus \mathbb{W}^f_{\zb,\tb}\label{eq: equivalence}
    \end{align}
    Consequently, $\psi_{iR}\left(\zb-\frac12,\tb-\frac12\right), \psi_{iL}(\zb,\tb)$ and $\psi_{iL}\left(\zb-\frac12,\tb+\frac12\right), \psi_{iR}(\zb,\tb)$ are two sets of orthonormal bases for the same linear space, and thus must be related by a local unitary linear transformation, which can be taken to be orthogonal since the $\psi_{ia}(\zb,\tb)$ are real. Since this is true for all $\zb,\tb$, we have a local unitary Gaussian quantum circuit.
\end{claim}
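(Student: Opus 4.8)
The plan is to prove the space equality \eqref{eq: equivalence} entirely at the level of the nested linear spaces $\mathbb{V}_{[\tb_1,\tb_2]}$, using only the orthogonal-complement characterizations of the $\mathbb{W}$ spaces from Result \ref{result: orthogonalization} together with the inclusion-preserving property \eqref{eq: nesting}. The key observation I would exploit is that \emph{both} sides of \eqref{eq: equivalence} are the same subspace, namely the orthogonal complement of $\mathbb{V}_{[\tb-\zb+1,\tb+\zb-1]}$ inside $\mathbb{V}_{[\tb-\zb,\tb+\zb]}$. Once this is established, the two sets of operators are just two orthonormal bases of one inner-product space, so the gate exists automatically. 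Note that this argument never invokes unitarity of the dynamics, so it is exactly the converse direction needed here.

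First I would unfold the four subspaces, being careful with the half-integer shifts. By the characterization in Result \ref{result: orthogonalization}, substituting the shifted coordinates gives that $\mathbb{W}^f_{\zb-\frac12,\tb-\frac12}$ is the orthogonal complement of $\mathbb{V}_{[\tb-\zb+1,\tb+\zb-1]}$ in $\mathbb{V}_{[\tb-\zb,\tb+\zb-1]}$, that $\mathbb{W}^p_{\zb,\tb}$ is the orthogonal complement of $\mathbb{V}_{[\tb-\zb,\tb+\zb-1]}$ in $\mathbb{V}_{[\tb-\zb,\tb+\zb]}$, that $\mathbb{W}^f_{\zb,\tb}$ is the orthogonal complement of $\mathbb{V}_{[\tb-\zb+1,\tb+\zb]}$ in $\mathbb{V}_{[\tb-\zb,\tb+\zb]}$, and that $\mathbb{W}^p_{\zb-\frac12,\tb+\frac12}$ is the orthogonal complement of $\mathbb{V}_{[\tb-\zb+1,\tb+\zb-1]}$ in $\mathbb{V}_{[\tb-\zb+1,\tb+\zb]}$.

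The core computation is then to chain these orthogonal direct sums in two different orders. For the left-hand side I would write
\begin{align}
\mathbb{V}_{[\tb-\zb,\tb+\zb]}&=\mathbb{W}^p_{\zb,\tb}\oplus\mathbb{V}_{[\tb-\zb,\tb+\zb-1]}\nonumber\\
&=\mathbb{W}^p_{\zb,\tb}\oplus\mathbb{W}^f_{\zb-\frac12,\tb-\frac12}\oplus\mathbb{V}_{[\tb-\zb+1,\tb+\zb-1]},\nonumber
\end{align}
and for the right-hand side
\begin{align}
\mathbb{V}_{[\tb-\zb,\tb+\zb]}&=\mathbb{W}^f_{\zb,\tb}\oplus\mathbb{V}_{[\tb-\zb+1,\tb+\zb]}\nonumber\\
&=\mathbb{W}^f_{\zb,\tb}\oplus\mathbb{W}^p_{\zb-\frac12,\tb+\frac12}\oplus\mathbb{V}_{[\tb-\zb+1,\tb+\zb-1]}.\nonumber
\end{align}
At this point I would verify that each displayed sum is genuinely orthogonal (not merely a direct sum): the first summand on each second line lies in $\mathbb{V}_{[\tb-\zb,\tb+\zb-1]}$ (resp.\ $\mathbb{V}_{[\tb-\zb+1,\tb+\zb]}$) and is orthogonal to $\mathbb{V}_{[\tb-\zb+1,\tb+\zb-1]}$ by construction, while the second summand is orthogonal to that whole intermediate space. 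Comparing the two final lines and cancelling the common tail $\mathbb{V}_{[\tb-\zb+1,\tb+\zb-1]}$, both $\mathbb{W}^p_{\zb,\tb}\oplus\mathbb{W}^f_{\zb-\frac12,\tb-\frac12}$ and $\mathbb{W}^f_{\zb,\tb}\oplus\mathbb{W}^p_{\zb-\frac12,\tb+\frac12}$ equal the orthogonal complement of $\mathbb{V}_{[\tb-\zb+1,\tb+\zb-1]}$ in $\mathbb{V}_{[\tb-\zb,\tb+\zb]}$, which is \eqref{eq: equivalence}.

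To finish, I would note that the input operators $\psi_{iR}(\zb-\frac12,\tb-\frac12),\psi_{iL}(\zb,\tb)$ and the output operators $\psi_{iL}(\zb-\frac12,\tb+\frac12),\psi_{iR}(\zb,\tb)$ are, by the orthogonalization procedure, orthonormal bases (with respect to the inner product defined by $A$) of this common $2N$-dimensional space. Two orthonormal bases of the same real inner-product space are related by an orthogonal transformation, and since the $\psi_{ia}$ are real and preserve the canonical anticommutation relations, this transformation is an element of $O(2N)$; such a single-particle Bogoliubov rotation is implemented on Fock space by a Gaussian unitary gate. As this holds at every vertex $(\zb,\tb)$, the bulk dynamics assemble into a local Gaussian unitary circuit. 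I expect the only real delicacy to be the index bookkeeping with the half-integer shifts and confirming orthogonality (rather than mere directness) of the chained sums; there is no deeper obstacle, and in particular the argument is insensitive to whether $A$ is singular, since it manipulates the linear spaces directly and simply allows the relevant dimensions to be smaller than $N$.
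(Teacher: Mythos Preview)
Your proposal is correct and follows essentially the same approach as the paper's own proof: both chain the orthogonal-complement definitions from Result~\ref{result: orthogonalization} to show that each side of \eqref{eq: equivalence} equals the orthogonal complement of $\mathbb{V}_{[\tb-\zb+1,\tb+\zb-1]}$ in $\mathbb{V}_{[\tb-\zb,\tb+\zb]}$, and then conclude by the orthonormal-basis argument. Your write-up is in fact slightly more explicit than the paper's in verifying that the chained sums are orthogonal (not merely direct) and in noting that the argument goes through even when $A$ is singular.
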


\begin{proof}
By definition 
    \begin{align}
        \mathbb{W}^f_{\zb-\frac12,\tb-\frac12}\oplus \mathbb{V}_{\left[\tb-\zb+1,\tb+\zb-1\right]}&=\mathbb{V}_{\left[\tb-\zb,\tb+\zb-1\right]}\\
        \mathbb{W}^p_{\zb,\tb}\oplus \mathbb{V}_{\left[\tb-\zb,\tb+\zb-1\right]}&=\mathbb{V}_{\left[\tb-\zb,\tb+\zb\right]}
    \end{align}
    which implies
    \begin{align}
        \mathbb{W}^f_{\zb-\frac12,\tb-\frac12}\oplus \mathbb{W}^p_{\zb,\tb}\oplus \mathbb{V}_{\left[\tb-\zb+1,\tb+\zb-1\right]}=\mathbb{V}_{\left[\tb-\zb,\tb+\zb\right]}
    \end{align}
    In the same way we can prove
    \begin{align}
        \mathbb{W}^p_{\zb-\frac12,\tb+\frac12}\oplus \mathbb{W}^f_{\zb,\tb}\oplus \mathbb{V}_{\left[\tb-\zb+1,\tb+\zb-1\right]}=\mathbb{V}_{\left[\tb-\zb,\tb+\zb\right]}
    \end{align}
    This proves Eq.~\eqref{eq: equivalence}, since both sides of the equation are equal to the orthogonal complement of $\mathbb{V}_{\left[\tb-\zb+1,\tb+\zb-1\right]}$ in the bigger space $\mathbb{V}_{\left[\tb-\zb,\tb+\zb\right]}$. 
\end{proof}

\section{Bosonic generalized free fields}\label{app: bosonGFF}

In this appendix, we discuss how to carry our bulk construction when the given (0+1)d boundary model describes bosonic GFFs $b_i(t)$ with canonical commutation relations
\begin{align}
    \left[b_i(t),b_j^\dagger(t)\right]=\delta_{ij}
\end{align}
Let $\xi(t)$ be a vector with $2N$ components, 
\begin{align}
    \xi_\mu(t)=\left(\begin{array}{c}b_i(t)\\b_i^\dagger(t)\end{array}\right)
\end{align}
where $\mu = 1,\dots,2N$ denotes operator species or "flavor." Correlation functions of $b_i(t)$ are required to satisfy Wick's theorem. In parallel with the fermionic case, one can prove that the commutator of bosons at different times is proportional to an identity operator (in the large-$N$ limit in which Wick theorem applies, and in the Hilbert space of finite number of excitations). Let $\rho$ be the state for which Wick's theorem holds. Then, we have
\begin{align}
    {\rm tr}\left(\rho O_2\left[\xi_\mu(t),\xi_\nu(t')\right]O_1\right)=\left\langle \left[\xi_\mu(t),\xi_\nu(t')\right]\right\rangle\left\langle O_2O_1\right\rangle
\end{align}
for all $O_1,O_2$ that are finite superpositions of $b,b^\dagger$. We define the matrix
\begin{align}
    C_{\mu\nu}(t,t')=\left\langle\left[\xi_\mu(t),\xi_\nu(t')\right]\right\rangle
\end{align}
Note that $C_{\mu\nu}(t,t)$ evaluated at the same time is no longer an inner product, but a symplectic form:
\begin{align}
    C_{\mu\nu}(t,t)&=J_{\mu\nu}\\
    J&=\left(\begin{array}{cc}\bm 0 &\Id\\-\Id&\bm 0\end{array}\right)
\end{align}
with each entry of $J$ an $N\times N$ block. We would like to construct bulk bosonic fields out of a linear superposition of the $\xi_\mu$, just as in the fermionic case. We again make a linear ansatz
\begin{align}
    \phi_{\mu,L(R)}(\zb,\tb)=\sum_{\tb'\in[\tb-\zb,\tb+\zb]}K_{\mu L(R),\nu}(\zb,\tb|\tb')\xi_\nu(\tb')
\end{align}
and demand that the $\phi_{\mu,L(R)}(\zb,\tb)$ satisfy the correct canonical algebra on a Cauchy slice. The procedure is analogous to the fermionic case where we demand that the bulk operators satisfy the correct algebra on a Cauchy slice, except that the canonical algebra is associated to a symplectic form. In particular, for the future and past light cone, we can write the linear equation in matrix form as
\begin{align}
    \phi^{f(p)}&=K_{f(p)}\xi\\
    K_{f(p)}C K_{f(p)}^\T&=\mathbb{J}
\end{align}
where $\mathbb{J} = J\otimes \Id$ is identity in the spatial indices and symplectic in the flavor indices. The algorithm that determines $K_{f(p)}$ is an orthogonalization procedure that is similar to Result \ref{result: orthogonalization}, but with the inner product replaced by the corresponding symplectic form. Since $C$ is antisymmetric, it may be expressed as 
\begin{align}
    C=\Lambda \mathbb{J}\Lambda^\T
\end{align}
for some unitary $\Lambda$. Then we have
\begin{align}
    Q&\equiv K_{f(p)}\Lambda\\
    Q\mathbb{J}Q^\T&=\mathbb{J}
\end{align}
$Q$ is a symplectic matrix and $Q^{-1}K^{f(p)}=\Lambda^{-1}$ is the analog of the QR decomposition.

The discussion about gate unitary in Appendix \ref{app: bulk_locality} can be straightforwardly generalized to the case of bosonic fields, where the bulk bosons $\phi_{iR}\left(\zb-\frac12,\tb-\frac12\right)$,$\phi_{iL}\left(\zb,\tb\right)$ and $\phi_{iL}\left(\zb-\frac12,\tb+\frac12\right)$, $\phi_{iR}\left(\zb,\tb\right)$ are two different bases of the same linear space, which is the {\it symplectic complement} of $\mathbb{V}_{[\tb-\zb+1,\tb+\zb-1]}$ in the bigger space $\mathbb{V}_{[\tb-\zb,\tb+\zb]}$. Therefore the single-particle transformation of the gate $U(\zb,\tb)$ and $V(\zb,\tb)$ are symplectic matrices.

\section{Comparison between quantum circuits and continuous Majorana fermions}\label{app: circuit_geo}
In this appendix, we compare the Gaussian quantum circuit model with a continuous Majorana fermion theory on a curved (1+1)d background. The goal is to determine the continuous limit of the circuit model, and be able to determine its geometry.

In a two-dimensional curved space with metric 
\begin{align}
    ds^2=\Omega(t,z)^2\left(dt^2-dz^2\right)
\end{align}
the Majorana fermion action has the form
\begin{align}
    \mathcal{A}=\int dtdz\left[\psi_R\left(i\partial_t+i\partial_z\right)\psi_R+\psi_L\left(i\partial_t-i\partial_z\right)\psi_L-2im\Omega(t,z)\psi_L\psi_R\right]
\end{align}
where the conformal factor $\Omega(t,z)$ only appears in the mass term, since the massless part of the Majorana fermion action is conformally-invariant. The Hamiltonian corresponding to this action is
\begin{align}
    H(t)=\int dz\left[\psi_R\left(-i\partial_z\right)\psi_R-\psi_L^\T \left(-i\partial_z\right)\psi_L+2im\Omega(t,z)\psi_L\psi_R\right]
\end{align}
When we compare this continuous theory with the quantum circuit, it is natural to identify the right-mover and left-mover with those in the quantum circuit. If $m=0$, the right-mover and left-mover decouple, and the equation of motion gives $\psi_R(z,t)=\psi_R(z-t)$, $\psi_L(z,t)=\psi_L(z+t)$. This agrees with the quantum circuit time evolution when each gate is given by a swap
\begin{align}
    U=V=\left(\begin{array}{cc}0&1\\-1&0\end{array}\right)
\end{align}
up to a $-1$ sign that we have discussed in the main text. Thus we see that a finite $m$ corresponds to a gate with angle $\theta_{U,V}\neq \frac{\pi}2$. 

To make a more quantitative comparison, we want to compare the unitary evolution operator $e^{-i\Delta tH(t)}$ with two layers of unitary gates in the quantum circuit, consisting of one layer of $V$ gates followed by one layer of $U$ gates. This bilayer of gates is the unit of circuit time evolution. For example, if there is time translation symmetry, time evolution by $t=n\Delta t$ is $(\Ut\Vt)^{n}$ in the quantum circuit, while it is $e^{-in\Delta tH}$ in the continuous theory. As a reminder, $\Ut$ and $\Vt$ denote the many-body operators while $U$ and $V$ are single-particle transformation matrices.

\begin{figure}
    \centering
    \includegraphics[width=0.5\textwidth]{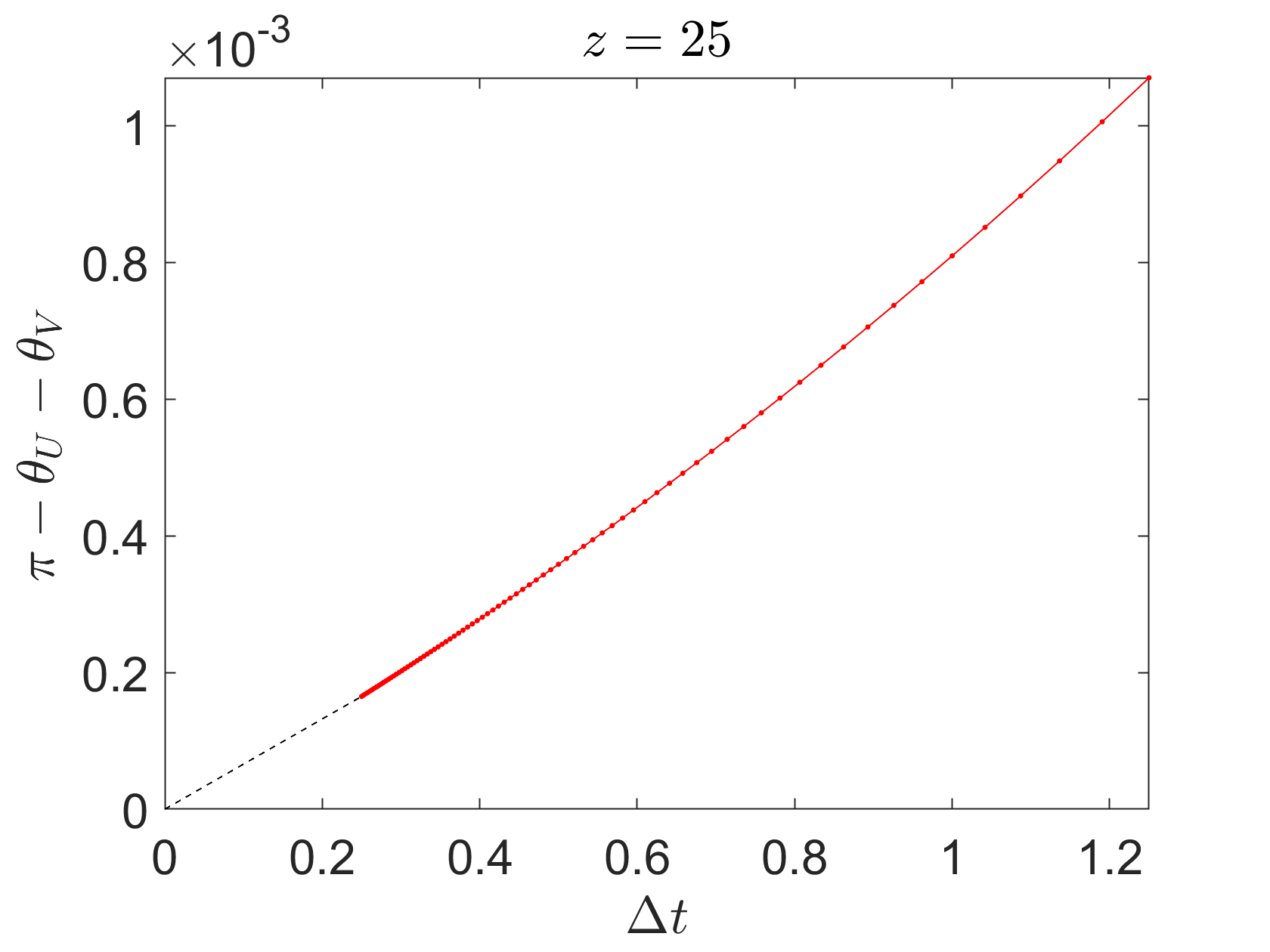}
    \caption{The dependence of $\pi-\theta_U-\theta_V$ as a function of $\Delta t$ for a fixed location $z$. The calculation is done for $q=10, \beta=30$.}\label{fig: theta vs dt}
\end{figure}

To determine the correspondence between the gates and the Hamiltonian dynamics in $\Delta t\rightarrow 0$ limit, we consider a translation invariant Hamiltonian $\Omega(t,z)=\Omega$, and compare it with a translation invariant circuit $U(z,t)=U,~V(z,t)=V$. This is different from our physical situation, but we expect the correspondence between $m$ and gate angles is local and does not depend on the spatial inhomogeneity. In the translation invariant case, we can go to momentum space and write the Hamiltonian as
\begin{align}
    H=\int dk\left(\psi_{R,-k}~\psi_{L,-k}\right)\left(\begin{array}{cc}k&-im\Omega\\im\Omega&k\end{array}\right)\left(\begin{array}{c}\psi_{R,k}\\\psi_{L,k}\end{array}\right)
\end{align}
where $\psi_{L(R),k}$ is the fourier transform of $\psi_{L(R)}(z)$. 
To compare with the quantum circuit, we need to compare the time evolution operator. The single-particle time evolution is given by
\begin{align}
    \left(\begin{array}{c}\psi_{R,k}(\Delta t)\\\psi_{L,k}(\Delta t)\end{array}\right)&=U_c(k,\Delta t) \left(\begin{array}{c}\psi_{R,k}(0)\\\psi_{L,k}(0)\end{array}\right)\\
    U_c(k,\Delta t)&=\exp\left[-i\Delta t\left(\begin{array}{cc}k&-im\\im&-k\end{array}\right)\right]\label{eq: u continuum}
\end{align}
Now we compare $U_c(k,\Delta t)$ of the continuous theory with the bi-layer of unitaries in a translation invariant quantum circuit. We label the bulk fermions in the quantum circuit by the integer sites ($V$ gates in Fig. \ref{fig: circuit_cuts}(a))\footnote{Note that this notation is different from the notation in the main body of the paper. We label $\psi_R(z=n,t=0)$ as $\psi_{R,n}$, and denote $\psi_L\left(z=n-\frac 12,t=\frac12\right)$ as $\psi_{L,n}$, for integer $n$.}. In this convention, the $V$ gates are on-site and $U$ gates couple the neariest neighbor sites. Labeling the \textit{outputs} of all  sites by integers $\psi_{nL},\psi_{nR}$ we have
\begin{align}
    \left(\begin{array}{c}\psi_{R,n}(\Delta t/2)\\\psi_{L,n}(\Delta t/2)\end{array}\right)&\equiv \Vt^\dagger\left(\begin{array}{c}\psi_{R,n}(0)\\\psi_{L,n}(0)\end{array}\right)\Vt=\left(\begin{array}{cc}\sin\theta_V&-\cos\theta_V\\-\cos\theta_V&-\sin\theta_V\end{array}\right)\left(\begin{array}{c}\psi_{R,n}(0)\\\psi_{L,n}(0)\end{array}\right)\\
    \left(\begin{array}{c}\psi_{R,n+1}(\Delta t)\\\psi_{L,n}(\Delta t)\end{array}\right)&\equiv\Ut^\dagger\left(\begin{array}{c}\psi_{R,n}(\Delta t/2)\\\psi_{L,n+1}(\Delta t/2)\end{array}\right)\Ut=\left(\begin{array}{cc}\sin\theta_U&\cos\theta_U\\\cos\theta_U&-\sin\theta_U\end{array}\right)\left(\begin{array}{c}\psi_{R,n}(\Delta t/2)\\\psi_{L,n+1}(\Delta t/2)\end{array}\right)
\end{align}
Carrying out the Fourier transform, the single-particle transformation in $k$-space after combining the $U$ and $V$ gates can be written as
\begin{align}\left(\begin{array}{c}\psi_{R,k}(\Delta t)\\\psi_{L,k}(\Delta t)\end{array}\right)&=U_d(k,\Delta t) \left(\begin{array}{c}\psi_{R,k}(0)\\\psi_{L,k}(0)\end{array}\right)\\
    U_d(k,\Delta t)&=\left(\begin{array}{cc}e^{-ik\Delta t}\sin\theta_U&\cos\theta_U\\\cos\theta_U&-e^{ik\Delta t}\sin\theta_U\end{array}\right)\left(\begin{array}{cc}\sin\theta_V&-\cos\theta_V\\-\cos\theta_V&-\sin\theta_V\end{array}\right)\label{eq: u circuit}
\end{align}
Note that we define $k\Delta t\in[0,2\pi)$ as the crystal momentum, so that $k$ directly corresponds to the physical momentum that we can compare with the continuous theory. We expect $U_d(\Delta t)\simeq U_c(\Delta t)$. To compare these two $SU(2)$ unitaries we expand $U_d$ for small $\Delta t$ and assume $\frac{\pi}2-\theta_{U,V}\propto \Delta t$. More explicitly we assume
\begin{align}
    \frac{\pi}2-\theta_{U,V}\simeq \lambda_{U,V}\Delta t
\end{align}
which is confirmed numerically~(\cref{fig: theta vs dt}). In this limit we can write
\begin{align}
    U_d(k,\Delta t)=e^{-i\frac{k\Delta t}2\sigma_z}e^{-i\sigma_y\lambda_U\Delta t}e^{-i\frac{k\Delta t}2\sigma_z}e^{-i\sigma_y\lambda_V\Delta t}
\end{align}
with $\sigma_y,\sigma_z$ Pauli matrices. To the linear order of $\Delta t$ we can neglect the commutator and obtain
\begin{align}
    U_d(k,\Delta t)\simeq \exp\left[-ik\Delta t\sigma_z-i\left(\lambda_U+\lambda_V\right)\Delta t\sigma_y\right]\label{eq: Ud expansion}
\end{align}
Comparing Eq. (\ref{eq: Ud expansion}) with Eq. (\ref{eq: u continuum}) we see that 
\begin{align}
    m\Omega=\lambda_U+\lambda_V=\lim_{\Delta t\rightarrow 0}\frac{\pi-\theta_U-\theta_V}{\Delta t}
\end{align}
In the region where $\theta_{U,V}$ is a smooth function of $z,t$, in the limit of $\Delta t\rightarrow 0$ we expect the same equation to hold locally:
\begin{align}
    m\Omega(z,t)=\lambda_U+\lambda_V=\lim_{\Delta t\rightarrow 0}\frac{\pi-\theta_U(z,t)-\theta_V(z,t)}{\Delta t}
\end{align}

\section{Zero determinant for the eternal traversable wormhole spectral function}\label{app: ETW 2pt}

In this appendix, we explain why the spectral function $A_{ab}(t,t')$ for the eternal traversable wormhole geometry has a zero determinant for $t'-t\geq T=\frac{2\pi}{V_G}$. The two point function in large-$q$ limit is defined in Eqs.~\eqref{eq:G11_ETW} and \eqref{eq:G12_ETW}. Due to the $1/q$ power, the two point functions are periodic in $qT$, as is shown in Fig. \ref{fig: ETW long time}. 
\begin{figure}
    \centering
    \includegraphics[width = 0.5\textwidth]{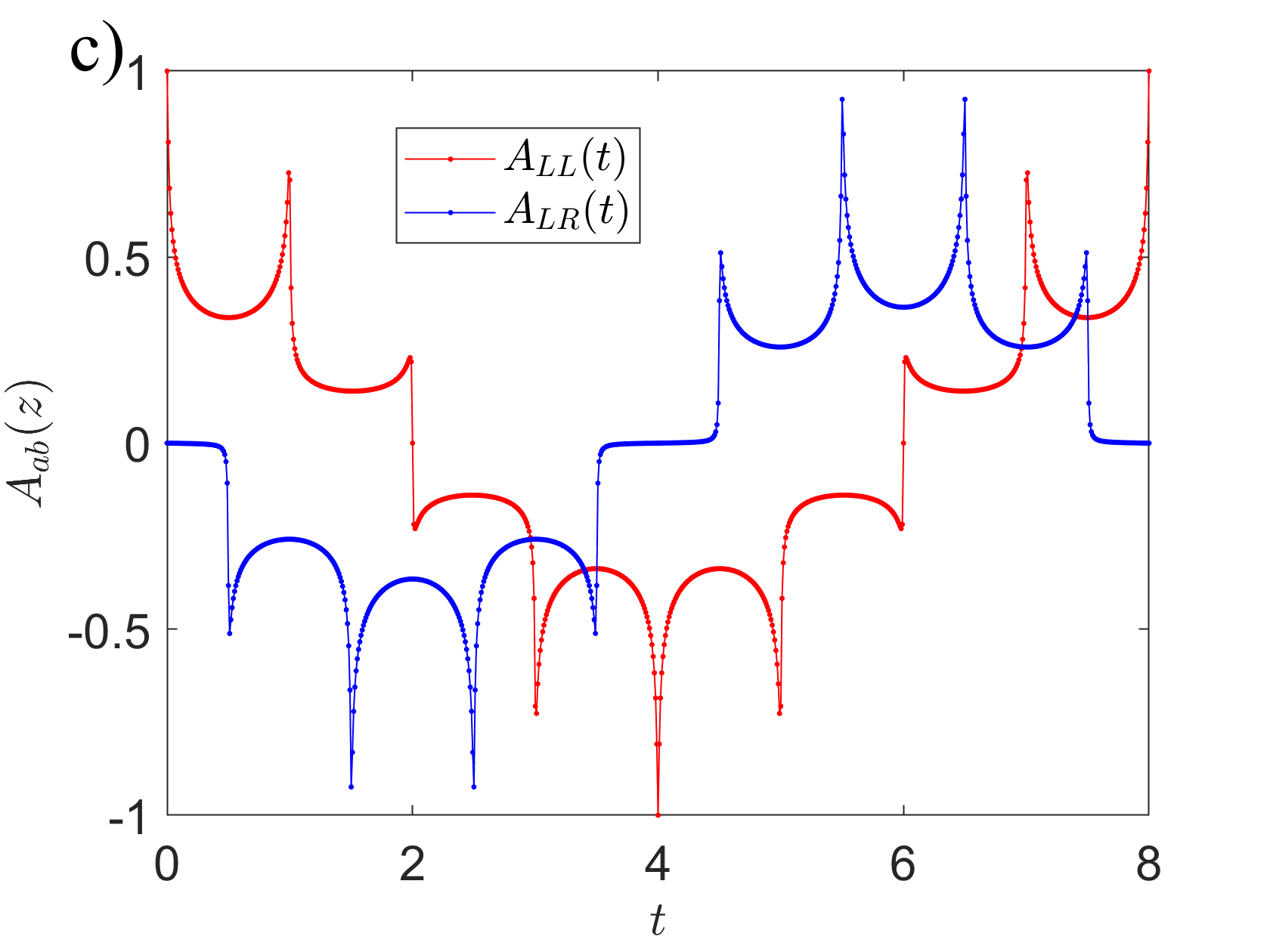}
    \caption{Long time behavior of the anticommutator matrix for eternal traversable wormhole geometry (Eqs.~\eqref{eq:G11_ETW} and \eqref{eq:G12_ETW}) for $q=8$.}\label{fig: ETW long time}
\end{figure}

To understand the behavior of this two point function, denote
\begin{align}
    e^{g(t)}\equiv -\frac{V^2}{\sin^2\left(\frac{V_G}2t-i\delta\right)}=\frac{2V^2}{\cos\left({V_G}t-2i\delta\right)-1}
\end{align}
The denominator can be expanded to
\begin{align}
    \cos\left({V_G}t-2i\delta\right)-1=\cos\left(V_Gt\right)\cosh\delta-1+i\sinh\delta\sin\left(V_Gt\right)\label{eq:gt_ETW}
\end{align}
which has a winding number $1$ around the origin of the complex plane when $t$ grows from $0$ to $T$. Consequently, the two point function $G_{11}(t)$ satisfies the boundary condition $G_{11}(t+T)=G_{11}(t)e^{-i\frac{2\pi}q}$. The branch of $1/q$ power is fixed by the boundary condition $G_{11}(0)=1$, which corresponds to $g(0)=0$. The function $g(t)$ is completely determined by Eq. (\ref{eq:gt_ETW}) together with the boundary condition $g(0)=0$ and the continuity condition in $t$. With this definition of $g(t)$, the diagonal and off-diagonal two point function can both be expressed by $g(t)$:
\begin{align}
    G_{11}(t,t')&=e^{g(t-t')/q},~G_{12}(t,t')=ie^{\left[g\left(t-t'+\frac T2\right)+i\pi\right]/q}
\end{align}
The shift of $i\pi$ is because $g\left(\frac{T}2\right)=-i\pi$, and ${\rm Re}G_{12}(0,0)=A_{12}(0,0)=0$. Consequently, for $t\in[0,T]$ we have the following two equations:
\begin{align}
    G_{11}(t,0)-G_{11}(t,T)&=e^{g(t)/q}-e^{\left[g(t)+2\pi i\right]/q}\\
    G_{12}\left(t,\frac{T}2\right)&=ie^{\left[g(t)+i\pi\right]/q}=-\frac{1}{2\sin\frac{\pi}q}\left[G_{11}(t,0)-G_{11}(t,T)\right]
\end{align}
Therefore we can take the real part and replace $t$ by $t-t'$ to obtain the linear equation
\begin{align}
    A_{12}\left(t,t'+\frac{T}2\right)+\frac{1}{2\sin\frac{\pi}q}\left[A_{11}(t,t')-A_{11}(t,t'+T)\right]&=0
\end{align}
If we consider a time interval with width $\geq T$, this equation is a linear equation satisfied by the matrix $A_{ab}(t,t')$, such that ${\rm det}A=0$.

\end{document}